\tikzset{initial text={}}
\tikzstyle{none}=[inner sep=0pt]
\tikzstyle{gate}=[rectangle,draw=Black, inner sep=5pt,outer sep=1pt]
\tikzstyle{and}=[gate]
\tikzstyle{or}=[gate]
\tikzstyle{seq}=[gate]
\tikzstyle{neg}=[gate]
\tikzstyle{fail}=[gate]
\tikzstyle{cost}=[gate, minimum height=2.5em, rectangle split, rectangle split parts=2, rectangle split draw splits=false, inner sep=3pt]
\tikzstyle{trigger}=[gate]
\tikzstyle{costguard}=[above,font=\scriptsize]
\tikzstyle{smallgate}=[gate,inner sep=4pt]
\tikzstyle{urgent}=[font={\scriptsize{\sf \textcolor{DimGray}{U}}}, inner sep=0pt]
\tikzstyle{state-circle}=[circle,draw=Black,inner sep=2pt,minimum size=0.35cm,outer sep=2pt]
\tikzstyle{state}=[ellipse,draw=Black,inner sep=2pt,outer sep=2pt]
\tikzstyle{state-box}=[rectangle,draw=Black,inner sep=2pt,minimum size=0.35cm,outer sep=2pt]
\tikzstyle{probability-state}=[circle,fill=Black,draw=Black,inner sep=0pt,minimum size=1pt]
\tikzstyle{BETrig}=[regular polygon,regular polygon sides=3,draw=Black, inner sep=0pt, outer sep=1pt,minimum width=0.85cm]
\tikzstyle{reset}=[rectangle,draw=Black, inner sep=5pt]
\tikzstyle{nodename}=[text width=1.8cm, align=center]
\tikzstyle{bename}=[nodename, below=0.3]
\tikzstyle{gatename}=[nodename, above=0.3]
\tikzstyle{costname}=[nodename, above=0.4]
\tikzstyle{update}=[font=\scriptsize]
\tikzstyle{false}=[color={Red!70!Black}]
\tikzstyle{true}=[color={Green!70!Black}]
\tikzstyle{undefined}=[color={Black!70}]
\tikzstyle{tt}=[defender]
\tikzstyle{ff}=[attacker]
\tikzstyle{uu}=[neutral]
\tikzstyle{attacker}=[fill={Red!30},postaction={pattern=horizontal lines,pattern color=Red!60}]
\tikzstyle{defender}=[fill={Green!30},postaction={pattern=vertical lines,pattern color=Green!60}]
\tikzstyle{neutral}=[fill={Gray!30},postaction={pattern=crosshatch dots,pattern color=Gray!80},font=\bf]
\tikzstyle{nothing}=[]
\tikzstyle{arrow-urgent}=[-latex,draw=Black]
\tikzstyle{arrow-non-urgent}=[-latex,draw=Black,dashed]
\tikzstyle{arrow-simple}=[-latex,draw=Black]
\tikzstyle{arrow-reset}=[-latex,dashed]
\tikzstyle{arrow-trigger}=[-latex,decoration={snake,segment length=4,amplitude=.6, post=lineto,post length=2pt},decorate]
\tikzstyle{arrow-segment-after-probabilities}=[-latex,draw=Black]
\tikzstyle{markovian}=[->,draw=Black,postaction={decorate},decoration={markings,mark=at position .5 with {\arrow{>}}}]
\tikzstyle{arrow-segment-before-probabilities-immediate}=[-,draw=Black]
\tikzstyle{arrow-segment-before-probabilities-timed}=[-,draw=Black]
\tikzstyle{stateText}=[rectangle,fill=White,draw=Black, inner sep=5pt, minimum height=0pt]
\tikzstyle{non-urgent}=[densely dashed]
\newcommand{\highlight}[1]{\textsf{\textbf{#1}}}
\newcommand*{\fancyrefapplabelprefix}{app}
  \providecommand*{\frefappname}{appendix}%
  \providecommand*{\Frefappname}{Appendix}%
\newcommand*{\fancyrefexlabelprefix}{ex}
  \providecommand*{\frefexname}{example}%
  \providecommand*{\Frefexname}{Example}%
\newcommand*{\fancyrefdeflabelprefix}{def}
  \providecommand*{\frefdefname}{definition}%
  \providecommand*{\Frefdefname}{Definition}%
\newcommand*{\fancyreftheolabelprefix}{theo}
  \providecommand*{\freftheoname}{theorem}%
  \providecommand*{\Freftheoname}{Theorem}%
\newcommand*{\fancyrefalglabelprefix}{alg}
  \providecommand*{\frefalgname}{algorithm}%
  \providecommand*{\Frefalgname}{Algorithm}%
\newcommand*{\fancyreflemlabelprefix}{lem}
  \providecommand*{\freflemname}{lemma}%
  \providecommand*{\Freflemname}{Lemma}%
\newcommand*{\fancyrefcorlabelprefix}{cor}
  \providecommand*{\frefcorname}{corollary}%
  \providecommand*{\Frefcorname}{Corollary}%
\renewcommand{\phi}{\upvarphi}
\renewcommand{\psi}{\uppsi}
\renewcommand{\rho}{\upvarrho}
\renewcommand{\upsilon}{\upupsilon}
\renewcommand{\chi}{\upchi}
\renewcommand{\max}{\mathop{\mathsf{max}}}
\renewcommand{\min}{\mathop{\mathsf{min}}}
\renewcommand{\sup}{\mathop{\mathsf{sup}}}
\renewcommand{\inf}{\mathop{\mathsf{inf}}}
\newcommand{\prism}{\textsf{PRISM}}
\newcommand{\VI}{\textsf{VI}}
\newcommand{\SI}{\textsf{SI}}
\newcommand{\CSG}{\textsf{CG}}
\newcommand{\SSG}{\textsf{SSG}}
\newcommand{\MDP}{\textsf{MDP}}
\newcommand{\EC}{\textsf{EC}}
\newcommand{\MEC}{\textsf{MEC}}
\newenvironment{lemma*}[1]{\medskip\noindent\textbf{Recap of Lemma\ifthenelse{\isempty{#1}}{}{~#1}.}}{}
\newenvironment{theorem*}[1]{\medskip\noindent\textbf{Recap of Theorem\ifthenelse{\isempty{#1}}{}{~#1}.}}{}
\DeclareDocumentCommand{\variable}{O{} D<>{}}{\mathsf{v}_{#1}^{#2}}
\DeclareDocumentCommand{\variables}{O{} D<>{}}{\mathcal{V}_{#1}^{#2}}
\newcommand{\set}[1]{\{#1\}}
\newcommand{\powerset}[1]{2^{#1}}
\newcommand{\disjointUnion}{\mathop{\dot{\cup}}}
\newcommand{\NN}{\mathbb{N}}
\DeclareDocumentCommand{\anySet}{O{} D<>{}}{\mathcal{X}_{#1}^{#2}}
\DeclareDocumentCommand{\bijection}{O{} D<>{} D(){}}{\mathsf{bi}_{#1}^{#2}\ifthenelse{\isempty{#3}}{}{(#3)}}
\DeclareMathOperator*{\argmax}{arg\,max}
\DeclareMathOperator*{\argmin}{arg\,min}
\newcommand{\pspace}{\textsl{PSPACE}}
\DeclareDocumentCommand{\alg}{O{} D<>{} D(){}}{\mathcal{A}_{#1}^{#2}\ifthenelse{\isempty{#3}}{}{(#3)}}
\DeclareDocumentCommand{\event}{O{} D<>{} D(){}}{\mathcal{E}_{#1}^{#2}\ifthenelse{\isempty{#3}}{}{(#3)}}
\DeclareDocumentCommand{\distribution}{O{} D<>{} D(){}}{\mu_{#1}^{#2}\ifthenelse{\isempty{#3}}{}{(#3)}}
\DeclareDocumentCommand{\probability}{O{} D<>{} D(){} D(){}}{\mathbb{P}_{#1}^{#2}\ifthenelse{\isempty{#3}}{}{\big(#3\big)}\ifthenelse{\isempty{#4}}{}{\big(#4\big)}}
\DeclareDocumentCommand{\preop}{O{} D<>{} D(){} D(){}}{\mathsf{Pre}_{#1}^{#2}\ifthenelse{\isempty{#3}}{}{(#3)}\ifthenelse{\isempty{#4}}{}{(#4)}}
\DeclareDocumentCommand{\expectation}{O{} D<>{}  D(){}}{\mathbb{E}_{#1}^{#2}\ifthenelse{\isempty{#3}}{}{[#3]}}
\DeclareDocumentCommand{\support}{O{} D<>{} D(){}}{\mathsf{Supp}_{#1}^{#2}\ifthenelse{\isempty{#3}}{}{(#3)}}
\newcommand{\dirac}[1]{\delta_{#1}}
\DeclareDocumentCommand{\cylinder}{O{} D<>{} D(){}}{\mathsf{Cyl}_{#1}^{#2}\ifthenelse{\isempty{#3}}{}{\big(#3\big)}}
\DeclareDocumentCommand{\algebra}{O{} D<>{} D(){}}{\mathcal{F}_{#1}^{#2}\ifthenelse{\isempty{#3}}{}{\big(#3\big)}}
\DeclareDocumentCommand{\probspace}{O{} D<>{}}{\big(\Omega_{#1}^{#2},\mathcal{F}_{#1}^{#2},\probability\big)}
\DeclareDocumentCommand{\density}{O{} D<>{} D(){}}{\mathsf{f}_{#1}^{#2}\ifthenelse{\isempty{#3}}{}{(#3)}}
\DeclareDocumentCommand{\randomVariable}{O{} D<>{} D(){} D(){}}{\mathsf{C}_{#1}^{#2}\ifthenelse{\isempty{#3}}{}{(#3)}\ifthenelse{\isempty{#4}}{}{(#4)}}
\DeclareDocumentCommand{\valuation}{t* O{} D<>{} D(){}}{\IfBooleanTF{#1}{\eta}{\upsilon}_{#2}^{#3}\ifthenelse{\isempty{#4}}{}{(#4)}}
\DeclareDocumentCommand{\valuations}{O{} D<>{} D(){}}{\mathsf{Val}_{#1}^{#2}\ifthenelse{\isempty{#3}}{}{(#3)}}
\DeclareDocumentCommand{\player}{O{} D<>{} D(){} t* O{}}{\mathsf{p}_{#1}^{#2}\ifthenelse{\isempty{#3}}{}{(#3)}\ifthenelse{\isempty{#5}}{}{[#5]}}
\DeclareDocumentCommand{\execution}{O{} D<>{} D(){} t* O{}}{\pi_{#1}^{#2}\ifthenelse{\isempty{#3}}{}{(#3)}\ifthenelse{\isempty{#5}}{}{[#5]}}
\DeclareDocumentCommand{\executions}{O{} D<>{} D(){\game}}{\mathsf{Play}_{#1}^{#2}\ifthenelse{\isempty{#3}}{}{(#3)}}
\DeclareDocumentCommand{\exampleExecution}{O{} D<>{} D(){} D||{n} t*}{
 \state[0]\IfBooleanTF{#5}{\state[1]\state[2]\cdots}{\cdots\state[#4]}}
\DeclareDocumentCommand{\last}{O{} D<>{} D(){}}{\mathsf{last}_{#1}^{#2}\ifthenelse{\isempty{#3}}{}{(#3)}}
\DeclareDocumentCommand{\path}{O{} D<>{} D(){}}{\pi_{#1}^{#2}\ifthenelse{\isempty{#3}}{}{(#3)}}
\DeclareDocumentCommand{\paths}{O{} D<>{} D(){\game}}{\mathsf{Paths}_{#1}^{#2}\ifthenelse{\isempty{#3}}{}{(#3)}}
\DeclareDocumentCommand{\examplePath}{O{} D<>{} D(){} D||{n}}{\path[#1]<#2>(#3) = \state[0]\dots\state[#4]}
\DeclareDocumentCommand{\position}{O{} D<>{} D(){}}{\mathcal{X}_{#1}^{#2}\ifthenelse{\isempty{#3}}{}{(#3)}}
\DeclareDocumentCommand{\state}{O{} D<>{} D(){}}{\mathsf{s}_{#1}^{#2}\ifthenelse{\isempty{#3}}{}{(#3)}}
\DeclareDocumentCommand{\states}{O{} D<>{} D(){}}{\mathsf{S}_{#1}^{#2}\ifthenelse{\isempty{#3}}{}{(#3)}}
\DeclareDocumentCommand{\initialState}{O{} D<>{} D(){}}{\mathsf{s}_{0\ifthenelse{\isempty{#1}}{}{,#1}}^{#2}\ifthenelse{\isempty{#3}}{}{(#3)}}
\DeclareDocumentCommand{\move}{O{} D<>{} D(){}}{\mathsf{m}_{#1}^{#2}\ifthenelse{\isempty{#3}}{}{(#3)}}
\DeclareDocumentCommand{\moves}{O{} D<>{} D(){}}{\mathsf{M}_{#1}^{#2}\ifthenelse{\isempty{#3}}{}{(#3)}}
\DeclareDocumentCommand{\agent}{O{} D<>{} D(){}}{\mathsf{p}_{#1}^{#2}\ifthenelse{\isempty{#3}}{}{(#3)}}
\DeclareDocumentCommand{\agents}{O{} D<>{} D(){}}{\set{\attacker,\defender}_{#1}^{#2}\ifthenelse{\isempty{#3}}{}{(#3)}}
\DeclareDocumentCommand{\transitions}{O{} D<>{} D(){} D(){} t*}{\mathsf{T}_{#1}^{#2}\ifthenelse{\isempty{#3}}{}{\big(#3\big)}\ifthenelse{\isempty{#4}}{}{\big(#4\big)}}
\DeclareDocumentCommand{\play}{O{} D<>{} D(){}}{\rho_{#1}^{#2}\ifthenelse{\isempty{#3}}{}{(#3)}}
\DeclareDocumentCommand{\plays}{O{} D<>{} D(){}}{\mathsf{Plays}_{#1}^{#2}\ifthenelse{\isempty{#3}}{}{(#3)}}
\DeclareDocumentCommand{\strategy}{O{} D<>{} D(){} D(){}
  t*}{\IfBooleanTF{#5}{\sigma}{\IfNoValueTF{#1}{\rho}{\ifstrequal{#1}{\safe}{\sigma}{\ifstrequal{#1}{\maximize}{\pi}{\ifstrequal{#1}{\minimize}{\pi}{\rho}}}}}_{\IfNoValueTF{#1}{}{\ifstrequal{#1}{\safe}{}{\ifstrequal{#1}{\reach}{}{#1}}}}^{#2}\ifthenelse{\isempty{#3}}{}{(#3)}\ifthenelse{\isempty{#4}}{}{(#4)}}
\DeclareDocumentCommand{\strategies}{O{} D<>{} D(){} D(){}
  t*}{\IfBooleanTF{#5}{\mathcal{S}}{\IfNoValueTF{#1}{\mathcal{R}}{\ifstrequal{#1}{\safe}{\mathcal{S}}{\ifstrequal{#1}{\maximize}{\Pi}{\ifstrequal{#1}{\minimize}{\Pi}{\mathcal{R}}}}}}_{\IfNoValueTF{#1}{}{\ifstrequal{#1}{\safe}{}{\ifstrequal{#1}{\reach}{}{#1}}}}^{#2}\ifthenelse{\isempty{#3}}{}{(#3)}\ifthenelse{\isempty{#4}}{}{(#4)}}
\DeclareDocumentCommand{\selector}{O{} D<>{} D(){}}{\chi_{#1}^{#2}\ifthenelse{\isempty{#3}}{}{(#3)}}
\DeclareDocumentCommand{\selectors}{O{} D<>{} D(){}}{\mathcal{X}_{#1}^{#2}\ifthenelse{\isempty{#3}}{}{(#3)}}
\DeclareDocumentCommand{\game}{O{} D<>{} D(){}}{\mathsf{G}_{#1}^{#2}\ifthenelse{\isempty{#3}}{}{(#3)}}
\DeclareDocumentCommand{\games}{O{} D<>{} D(){}}{\mathcal{G}_{#1}^{#2}\ifthenelse{\isempty{#3}}{}{(#3)}}
\DeclareDocumentCommand{\SG}{O{} D<>{} D(){}}{\mathsf{SG}_{#1}^{#2}\ifthenelse{\isempty{#3}}{}{(#3)}}
\DeclareDocumentCommand{\transitionProbability}{O{} D<>{}  D(){}}{\mathsf{P}_{#1}^{#2}\ifthenelse{\isempty{#3}}{}{(#3)}}
\DeclareDocumentCommand{\safe}{O{} D<>{} D(){}}{\mathscr{S}_{#1}^{#2}\ifthenelse{\isempty{#3}}{}{[#3]}}
\DeclareDocumentCommand{\reach}{O{} D<>{}  D(){}}{\mathscr{R}_{#1}^{#2}\ifthenelse{\isempty{#3}}{}{[#3]}}
\DeclareDocumentCommand{\fail}{O{} D<>{} D(){}}{\mathsf{Safe}_{#1}^{#2}\ifthenelse{\isempty{#3}}{}{[#3]}}
\DeclareDocumentCommand{\success}{O{} D<>{}  D(){}}{\mathsf{Reach}_{#1}^{#2}\ifthenelse{\isempty{#3}}{}{[#3]}}
\DeclareDocumentCommand{\val}{O{} D<>{} D(){} D(){}
  t*}{\mathsf{val}_{#1}^{#2}\ifthenelse{\isempty{#3}}{}{(#3)}\ifthenelse{\isempty{#4}}{}{(#4)}}
\DeclareDocumentCommand{\attractor}{O{} D<>{} D(){} D(){}
  t*}{\mathsf{Attr}_{#1}^{#2}\ifthenelse{\isempty{#3}}{}{(#3)}\ifthenelse{\isempty{#4}}{}{(#4)}}
\DeclareDocumentCommand{\rank}{O{} D<>{} D(){} D(){}
  t*}{\mathsf{rank}_{#1}^{#2}\ifthenelse{\isempty{#3}}{}{(#3)}\ifthenelse{\isempty{#4}}{}{(#4)}}
\DeclareDocumentCommand{\maximize}{O{} D<>{} D(){}}{\top_{#1}^{#2}\ifthenelse{\isempty{#3}}{}{[#3]}}
\DeclareDocumentCommand{\minimize}{O{} D<>{} D(){}}{\bot_{#1}^{#2}\ifthenelse{\isempty{#3}}{}{[#3]}}
\DeclareDocumentCommand{\winning}{O{\safe} D<>{} D(){}}{\mathsf{W}_{#1}^{#2}\ifthenelse{\isempty{#3}}{}{[#3]}}
\DeclareDocumentCommand{\differ}{O{} D<>{} D(){}}{\Delta_{#1}^{#2}\ifthenelse{\isempty{#3}}{}{(#3)}}
\DeclareDocumentCommand{\moveAssignment}{O{} D<>{} D(){}}{\Gamma_{#1}^{#2}\ifthenelse{\isempty{#3}}{}{(#3)}}
\DeclareDocumentCommand{\exampleGame}{O{} D<>{} D(){} t*}{\game[#1]<#2>(#3)=(\states[#1]<#2>\IfBooleanTF{#4}{(#3)}{},\moves[#1]<#2>\IfBooleanTF{#4}{(#3)}{},\moveAssignment[\reach\ifthenelse{\isempty{#1}}{}{,#1}]<#2>(#3),\moveAssignment[\safe\ifthenelse{\isempty{#1}}{}{,#1}]<#2>(#3),\transitions[#1]<#2>\IfBooleanTF{#4}{(#3)}{})}
\DeclareDocumentCommand{\destination}{O{} D<>{} D(){}}{\mathsf{Post}_{#1}^{#2}\ifthenelse{\isempty{#3}}{}{(#3)}}
\DeclareDocumentCommand{\endComponent}{O{} D<>{} D(){}}{\mathsf{C}_{#1}^{#2}\ifthenelse{\isempty{#3}}{}{(#3)}}
\DeclareDocumentCommand{\bestExit}{O{} D<>{} D(){}}{\mathsf{bestExit}_{\ifstrequal{#1}{\reach}{}{#1}}^{#2}\ifthenelse{\isempty{#3}}{}{(#3)}}
\DeclareDocumentCommand{\upperBound}{O{} D<>{} D(){}}{\mathsf{U}_{#1}^{#2}\ifthenelse{\isempty{#3}}{}{(#3)}}
\DeclareDocumentCommand{\lowerBound}{O{} D<>{} D(){}}{\mathsf{L}_{#1}^{#2}\ifthenelse{\isempty{#3}}{}{(#3)}}
\DeclareDocumentCommand{\exit}{O{} D<>{} D(){}}{\mathsf{exit}\ifthenelse{\isempty{#2}}{}{(#2)}\ifthenelse{\isempty{#3}}{}{(#3)}}
\DeclareDocumentCommand{\best}{O{} D<>{} D(){}}{\mathsf{best}_{#1}^{#2}\ifthenelse{\isempty{#3}}{}{(#3)}}
\DeclareDocumentCommand{\reasonable}{O{} D<>{} D(){}}{\mathsf{reasonable}_{#1}^{#2}\ifthenelse{\isempty{#3}}{}{(#3)}}
\DeclareDocumentCommand{\upperSet}{O{} D<>{} D(){}}{\mathfrak{U}_{#1}^{#2}\ifthenelse{\isempty{#3}}{}{(#3)}}
\DeclareDocumentCommand{\lowerSet}{O{} D<>{} D(){}}{\mathfrak{L}_{#1}^{#2}\ifthenelse{\isempty{#3}}{}{(#3)}}
\DeclareDocumentCommand{\onestep}{m D<>{} D(){}}{\mathsf{#1}^{#2}\ifthenelse{\isempty{#3}}{}{(#3)}}
\DeclareDocumentCommand{\evaluate}{O{} D<>{} D(){\valuation[\old]} D(){\valuation[\basicEvents]}
  D(){}}{\mathsf{app}_{#1}^{#2}\ifthenelse{\isempty{#3}}{}{(#3,#4)}\ifthenelse{\isempty{#5}}{}{(#5)}}
\newcommand{\old}{\mathsf{o}}
\DeclareDocumentCommand{\side}{O{} D<>{} m m D(){}}{\mathsf{side}_{#1}^{#2}\ifthenelse{\isempty{#3}}{}{(#3,#4)}\ifthenelse{\isempty{#5}}{}{(#5)}}
\newcolumntype{L}[1]{>{\hsize=#1\hsize\raggedright\arraybackslash}X}%
\newcolumntype{R}[1]{>{\hsize=#1\hsize\raggedleft\arraybackslash}X}%
\newcolumntype{C}[1]{>{\hsize=#1\hsize\centering\arraybackslash}X}%
\newcommand{\procname}[1]{\textnormal{\textsf{#1}}}
\title{Stopping Criteria for Value and Strategy Iteration on
  Concurrent Stochastic Reachability Games} \author{Julia Eisentraut
  and Jan K\v ret\'insk\'y and Alexej Rotar} \institute{Technical
  University of Munich}
\begin{document}
\maketitle

\pagestyle{plain}

\renewcommand{\argmax}{\mathop{\mathsf{arg \, max}}}
\renewcommand{\argmin}{\mathop{\mathsf{arg \, min}}}

\begin{abstract}
  We consider concurrent stochastic games played on graphs with
  reachability and safety objectives. These games can be solved by
  \emph{value iteration} as well as \emph{strategy iteration}, each of
  them yielding a sequence of under-approximations of the reachability
  value and a sequence of over-approximation of the safety value,
  converging to it in the limit. For both approaches, we provide the
  first (anytime) algorithms with stopping criteria. The stopping
  criterion for value iteration is based on providing a convergent
  sequence of over-approximations, which then allows to estimate the
  distance to the true value. For strategy iteration, we bound the
  error by complementing the strategy iteration algorithm for
  reachability by a new strategy iteration algorithm
  under-approximating the safety-value.
\end{abstract}


\section{Introduction}
\label{sect:intro}
A \emph{concurrent stochastic game} \cite{AlfaroHK98} is a two-player
game played on a graph.  At every round of the game, each player
simultaneously and independently chooses a move.  The moves then
jointly determine the transition taken, which leads to a probability
distribution over states.  We consider \emph{safety} and
\emph{reachability objectives} \cite{AlfaroHK98}.  Considering a
safety objective for player~$\safe$, its goal is to maximize the
probability of staying within a given set of states, while
player~$\reach$ maximizes the probability to leave this set, which is
its reachability objective.  Hence, the two objectives are dual, the
games are symmetric by swapping the players and thus, from now on we
refer to both simply as concurrent games (\CSG{}).  These games are
determined~\cite{Everett1957}, i.e. the supremum probability which
player~$\safe$ can ensure for staying in the safe set is equal to one
minus the supremum probability which player~$\reach$ can ensure for
reaching a state outside.  Deciding
whether 
this value is at least~$p$ for $p \in [0,1]$ is in
\pspace~\cite{Etessami2006}.  For the reachability objective,
player~$\reach$ is only guaranteed the existence of $\epsilon$-optimal
(memoryless randomized) strategies~\cite{Everett1957}.  For
player~$\safe$, optimal (again memoryless randomized) strategies
exist~\cite{Parthasarathy1973}.

Algorithms for concurrent reachability games have been further studied
and their termination discussed
in~\cite{Chatterjee2006,Chatterjee2009,Chatterjee2013}. 
The algorithms for solving the games are based on dynamic programming, namely \emph{value iteration} (\VI{}) and \emph{strategy iteration} (\SI{}):

Firstly, \VI{} produces a non-decreasing sequence that
under-approximates the optimal probability to reach the given states
and in the limit converges to it \cite{Alfaro2004}.  However, no
stopping criterion is known for this process.  Hence, the current
error cannot be bounded at a given moment.  Although this sequence
yields by determinacy an over-approximation for the value of the
safety objective, there are no known sequences over-approximating the
reachability value or dually under-approximating the safety value that
would converge to the actual values.  Our first contribution is an
algorithm producing such a sequence, thus yielding \emph{the first
  stopping criterion for \VI{}} for these games and an \emph{anytime
  \VI{} algorithm}, which at any moment can bound the current
imprecision in the approximation, converging to 0.  Indeed, whenever
the under- and over-approximations are less than $\epsilon$ apart, for
$\epsilon > 0$, they are also $\epsilon$-close to the actual value of
the game.

Secondly, \SI{} produces a sequence of strategies guaranteeing
non-decreasing probabilities to reach the given states, converging in
the limit to the $\epsilon$-optimum.  \SI{} can thus provide
under-approximating sequence for reachability.  However, similarly to
\VI{}, the known approaches only work for reachability and not for
safety.  Our second contribution is an \SI{} algorithm, which
converges to the safety-value from below. Again, this yields \emph{a
  stopping criterion for \SI{}} and an \emph{anytime \SI{} algorithm}.


\paragraph*{Our Approach}
As mentioned above, the over-approximations coming from known \VI{}
algorithms for reachability as well as the under-approximations coming
from known \SI{} algorithms for safety~\cite{Chatterjee2009} are not
converging to the true value of the game~\cite{Chatterjee2013}.  The
reason for this
is the presence of so-called \emph{end components
  (\EC{})}~\cite{alfarothesis}.  In technical terms, due to \EC{}s the
greatest fixpoint of the \VI{} operator (also called Bellman update)
is different from the least one. While the over-approximations
converge to the greatest fixpoint, the true value is the least
fixpoint. This problem actually exists even for the much simpler
single-player case of \emph{Markov decision processes
  (\MDP{})}~\cite{Puterman}.

For \MDP{}, this issue has been solved by collapsing each \EC{} into a
single state, effectively erasing indefinite cycles
\cite{Brazdil2014,Haddad2018}.  This prevents states of an \EC{} to
rely on each other's unsubstantiated overly high estimate of the
value, reduces their estimate at once to that of the actions
\emph{leaving} the EC, causing all the fixpoints to coincide.  This
has been observed insufficient \cite{Kelmendi2018} for \emph{simple
  stochastic games (\SSG{})} \cite{Condon90}, i.e. ``turn-based''
\CSG{} where in each state only one player has a non-trivial choice.
As opposed to \MDP{}, states of the same \EC{} in an \SSG{} may have
different values, hence, cannot be all collapsed and their estimates
reduced to the same value.  Instead, \cite{Kelmendi2018} proposes to
gradually \emph{deflate} (decrease) each estimate whenever it is not
substantiated by a move with that estimate \emph{leaving} the EC (with
positive probability).  Since there are different leaving
moves with different values, this gives rise to different parts of an
\EC{} called \emph{simple \EC{}}, each corresponding to a sphere of
influence of each leaving move with potentially different values.

%

In our setting, the main challenge is to find the analogue to the
simple \EC{} and how to deflate them in the right
way. In~\cite{Kelmendi2018}, \EC{}s are set of states such that there
exists a set of moves, which only lead to states inside the \EC{}, but
still, for any two states in the \EC{}, there exists a finite path
between them only taking transitions labeled with the given
moves. This definition of \EC{}s reveals already one big obstacle when
it comes to \CSG{}: The set of states, which transitions with a given
move lead to, depends on the other player's simultaneous and
independent choice. Hence, the given definition of \EC{} does not
prove to be suitable in our setting. Instead, a matrix game has to be
solved repetitively for each state to determine the best distribution
over available moves. In this matrix game, we have to face another
issue, namely that an extremum over all strategies leaving an \EC{}
with a positive (arbitrary small) probability may be realized only by
an optimal strategy that is leaving with zero probability, i.e. not
leaving at all. For instance, consider \Fref{fig:problemECs}, for
$\epsilon \to 0$, the strategy, which assigns $\epsilon$ to move~$b$
and $(1-\epsilon)$ to move~$a$ yields an increasingly better value for
the matrix game at $\state[1]$ with respect to strategies assigning a
positive probability to states outside the \EC{}. The supremum of this
sequence is~$1$, however the strategy achieving it is not exiting
anymore. Once the matrix game is solved, the sphere of influence of a
leaving convex combination of moves can be computed by the classical
attractor construction, yielding the desired analogue of the simple
\EC{}. This can then be finally deflated according to our notion of
the best value when leaving the \EC{}. Finally, to the best of our
knowledge for the first time, we adapt such a parallel
under-/over-approximating \VI{} approach to \SI{} on \CSG{}.

Our contribution can be summarized as follows:
\begin{itemize}
\item We introduce a \VI{} algorithm yielding both under- and
  over-approximation sequences, both of which converge to the value of
  the game. Thus, we present the first stopping criterion for \VI{} on
  \CSG{} and the first anytime algorithm with guaranteed
  precision. 
\item We introduce an \SI{} algorithm for safety strategies in \CSG{}.
  Since these results in both under- and over-approximation sequences
  for both objectives, we analogously obtain the first stopping
  criterion for \SI{} on \CSG{} and the first anytime algorithm with
  guaranteed precision.
\item As direct consequences, we obtain (i) that for \CSG{} without
  non-trivial end components, the simpler solution (without
  deflating), is sufficient, and (ii) an \SI{} algorithm for safety
  \SSG{} that is simpler than Algorithm~2 in \cite{Chatterjee2013}, which needs to
  transform the game.
\end{itemize}

\paragraph*{Further Related Work}
The \pspace{}-algorithm given in~\cite{Etessami2006} to decide whether
the value of a given recursive game is at least $p$ for $p \in [0,1]$
allows for a trivial stopping criterion by iteratively executing this
algorithm for a suitable sequence of~$(p_i)_{i \in \NN}$ (intuitively,
we try to choose~$p_i$ such at alternatingly, the value of the game is
above and below the true value, while the distance between to
succeeding $p_i$ monotonically decreases). However, this criterion is
impractical since it definitely need exponential time. The following
stopping criteria we present allow for a potentially fast
approximation.

The idea of complementing the under-approximating sequence of \VI{} by
an over-approximating one dates back to \cite{McMahan2005} as
\emph{bounded \VI{}} (due to the new upper bound).  It does not
converge for general \MDP{}, but in fact only for \MDP{} without
\EC{}s as often considered in the stochastic shortest path problem.
The convergence is ensured in \cite{Brazdil2014,Haddad2018} by
collapsing \EC{}s, in \cite{Brazdil2014} on the fly, in
\cite{Haddad2018} as a preprocessing step, calling it interval
iteration. 

The first practical stopping criterion for \SI{} in \SSG{} ((but not
for \CSG{})) is given in \cite{Chatterjee2013}. To this end, an \SI{}
algorithm for safety strategies is given, which relies on a repetitive
transformation of the underlying game.
That the given algorithm does not work properly for concurrent
stochastic games has been observed in \cite{Chatterjee2013},
correcting the claims of \cite{Chatterjee2009}. Further, this approach
is claimed not extensible to \VI{}. The first \VI{} stopping criterion
in \SSG{} is obtained in \cite{Kelmendi2018}, which we extend here to
\CSG{}.

A generalization of \CSG{} to $\omega$-regular objectives has been
considered in \cite{Alfaro2000}. Value iteration via quantitative game
$\mu$-calculus has been discussed in~\cite{Alfaro2004}. As to tool
support, the only model checker for \CSG{} is
\prism{}-games~\cite{Kwiatkowska2018}. Model checking implementations
for MDP that take stopping criteria into account are extensions of
PRISM~\cite{Baier17} and Storm \cite{storm,QuatmannK18}.




\section{Stochastic Games}
\label{sec:preliminaries}

In this section, we recall basic notions related to stochastic games.
For a countable set $X$, a function $\mu \colon X \to [0,1]$ is called
a \emph{distribution} over~$X$ if $\sum_{x \in X} \mu(x) = 1$.  The
\emph{support} of $\mu$ is $\support(\mu) = \set{x \mid \mu(x)>0}$.
The set of all distributions over $X$ is denoted by
$\distribution(X)$. If there is a unique $x \in X$ such that
$\distribution(x) = 1$, we call the distribution \emph{Dirac} and
denote it by $\dirac{x}$.

\begin{definition}[(Two-Player Stochastic) Concurrent Game]
  A \emph{concurrent game} is a tuple $\exampleGame$, where $\states$
  is a finite set of \emph{states}, $\moves$ is a finite set of
  \emph{moves}, $\moveAssignment[\reach],\moveAssignment[\safe] \colon \states
  \to \powerset{\moves} \setminus \emptyset$ are two move assignments
  and $\transitions \colon \states \times \moves \times \moves \to
  \distribution(\states)$ is a transition function.    
  For $\player \in \set{\reach,\safe}$, assignment $\moveAssignment[\player]$ associates
  each state $\state \in \states$ with a nonempty set
  $\moveAssignment[\player](\state) \subseteq \moves$ of moves \emph{available} to
  player~$\player$ at state~$\state$.
  $\transitions(\state,\move[\reach],\move[\safe])(\state<\prime>)$ gives the
  \emph{probability of a transition} from state~$\state$ to
  state~$\state<\prime>$ when player~$\reach$ chooses move~$\move[\reach]\in
  \moveAssignment[\reach](\state)$ and player~$\safe$  move~\mbox{$\move[\safe]\in
  \moveAssignment[\safe](\state)$}.
\end{definition}
A concurrent game is \emph{turn-based} if for every state~$\state$
there exists $p \in \set{\reach,\safe}$ such that
$|\moveAssignment[p](\state)| = 1$; then we call it a \emph{turn-based
  game}, rather than a turn-based concurrent game.
A \emph{play}~$\execution$ of $\game$ is an infinite sequence
$\exampleExecution*$ of states such that for all $i \in \NN$ there are
moves $\move[\reach]<i> \in \moveAssignment[\reach](\state[i])$ and $\move[\safe]<i>
\in \moveAssignment[\safe](\state[i])$ with
$\transitions(\state[i],\move[\reach]<i>,\move[\safe]<i>)(\state[i+1]) > 0$. We
denote by $\executions$ the set of all plays
and by
$\executions[\state]$ the set of all plays $\exampleExecution*$
such that $\state[0]=\state$. 
A \emph{strategy} for player~$\player$ is a function
$\strategy[\player] \colon \states \to \distribution(\moves)$ that
assigns to each state a distribution over moves available to
player~$\player$,\footnote{Since memoryless strategies are sufficient
  for the objectives considered in this paper, we do not introduce
  general history-dependent strategies to avoid clutter.}  i.e. for
all $\state \in \states$, we have $\support(\strategy(\state))
\subseteq \moveAssignment[\player](\state)$. We call a strategy
\emph{pure} if all distributions it returns are Dirac.
In the following, we denote by $\strategies[\reach]$ the set of
strategies for player~$\reach$ and by $\strategies[\safe]$ the set of
strategies for player~$\safe$. In addition, we use $\strategy$ to
denote a single strategy of player~$\reach$ and $\strategy*$ to denote a
single strategy of player~$\safe$.


\paragraph*{Semantics.}
Given two strategies~$\strategy[\reach]$ and $\strategy[\safe]$ and a
starting state~$\initialState$, we give the concurrent game the standard 
semantics in terms of a Markov chain with the same state space $\states$, 
the initial state $\initialState$, and the transition probabilities $P$ given by 
\[
P(s,s')=\sum_{\move[\reach],\move[\safe] \in \moves} 
  \transitions(\state,\move[\reach],\move[\safe])(\state<\prime>) \cdot
  \strategy[\reach](\state)(\move[\reach]) \cdot \strategy[\safe](\state)(\move[\safe])
\] 
We denote by
$\probability[\initialState]<\strategy[\reach],\strategy[\safe]>$ the
standard probability measure over the plays induced by this Markov chain
and define this to be the probability measure over plays of the game 
when player~$\reach$ plays strategy~$\strategy[\reach]$, player~$\safe$ plays 
strategy~$\strategy[\safe]$ and the game starts in state~$\initialState$.

\paragraph*{Reachability and Safety Objectives.}
Let $\fail, \success \subseteq \states$ form a partitioning of $\states$. 
$\success$ denotes the set of
states player~$\reach$ wants to reach, while $\fail$ denotes the set
of states player~$\safe$ wants to confine the game in. 
We denote the \emph{reachability} objective by $\Diamond \success
\coloneqq \set{ \exampleExecution* \mid \exists i \in \NN: \state[i]
  \in \success}$ and the \emph{safety} objective by $\Box \fail
\coloneqq \set{\exampleExecution* \mid \forall i \in \NN: \state[i]
  \in \fail}$. The \emph{value} of the 
objective $\Diamond \success$ at state~$\state$ is given by
\[
    \val(\Diamond \success)(\state) \coloneqq \adjustlimits\sup_{\strategy[\reach] \in \strategies[\reach]} \inf_{\strategy[\safe]\in\strategies[\safe]} \probability[\state]<\strategy[\reach],\strategy[\safe]>(\Diamond \success)
\]

and the \emph{value} of 
the objective $\Box\fail$ by
\[
    \val(\Box \fail)(\state) \coloneqq \adjustlimits\sup_{\strategy[\safe] \in \strategies[\safe]} \inf_{\strategy[\reach]\in\strategies[\reach]} \probability[\state]<\strategy[\safe],\strategy[\reach]>(\Box \fail).
\]
Additionally, we define the value given a fixed strategy as
$\val[\reach:\strategy[\reach]](\Diamond \success)(\state) \coloneqq
\inf_{\strategy[\safe]\in\strategies[\safe]}
\probability[\state]<\strategy[\reach],\strategy[\safe]>(\Diamond
\success)$ and $\val[\safe:\strategy[\safe]](\Box \fail)(\state)
\coloneqq \inf_{\strategy[\reach]\in\strategies[\reach]}
\probability[\state]<\strategy[\safe],\strategy[\reach]>(\Box \fail)$.
By the determinacy of these games~\cite{Everett1957} and the duality
of these objectives, we have $\val[\reach](\Diamond \success)(\state)
+ \val[\safe](\Box \fail)(\state) = 1$ (since $\success$ and $\fail$
partition the state space).

Let $\state \in \states$, $\move[\reach] \in
\moveAssignment[\reach](\state)$ and $\move[\safe] \in
\moveAssignment[\safe](\state)$. We denote the \emph{set of potential
  successors} of~$\state$ by
$\destination(\state,\move[\reach],\move[\safe]) =
\support(\transitions(\state, \move[\reach], \move[\safe]))$. In
addition, we lift the notation to strategies~$\strategy[\reach]$ and
$\strategy[\safe]$ by \[ \destination(\state,\strategy[\reach],
\strategy[\safe]) = \bigcup_{\move[\reach] \in
  \support(\strategy[\reach](\state))} \bigcup_{\move[\safe] \in
  \support(\strategy[\safe](\state))}
\destination(\state,\move[\reach],\move[\safe]).\]

We denote by $\winning \coloneqq \set{\state \mid \state \in \states
  \land \val(\Diamond \success)(\state) = 0}$ the \emph{sure winning}
region of player~$\safe$. It can be computed in at most
$|\states|$-steps by iteration $\winning<0> \coloneqq \states
\setminus \success$ and $\winning<k+1> = \set{s \in \states \setminus
  \success \mid \exists \move[\safe] \in
  \moveAssignment[\safe](\state): \forall \move[\reach] \in
  \moveAssignment[\reach](\state):
  \destination(\state,\move[\reach],\move[\safe]) \subseteq
  \winning<k>}$ for all $k \in \NN$~\cite{Alfaro2000}. Consequently,
we can assume without loss of generality that $\success$ and
$\winning$ are both singletons and absorbing.

\paragraph*{End Components.}
Let $\exampleGame$ be a concurrent game. A non-empty set of states
$\endComponent \subseteq \states$ is an \emph{end component} if
\begin{itemize}
\item there exist a player~$\reach$ strategy~$\strategy[\reach]$ and a
  player~$\safe$ strategy~$\strategy[\safe]$ such that for each
  $\state \in \endComponent$, we have
  $\destination(\state,\strategy[\reach],\strategy[\safe])
  \subseteq \endComponent$, and
\item for every pair of states $\state,\state<\prime>
  \in \endComponent$ there is a play $\exampleExecution*$ such that
  $\state[0] = \state$ and $\state[n]=\state<\prime>$ for some $n$,
  and for all $0 \leq i < n$, we have $\state[i] \in \endComponent$
  and it holds $\state[i+1] \in
  \destination(\state[i],\strategy,\strategy*)$.
\end{itemize}
We call an end component~$\endComponent$ \emph{maximal} if there
exists no end component $\endComponent<\prime>$ such that
$\endComponent \subsetneq \endComponent<\prime>$ and \emph{trivial} if
$|\endComponent|=1$.



\section{Value Iteration}
\label{sec:value-iteration}
The idea of value iteration is to assign an initial estimate of the
value to each state and then to successively update it. For standard
value iteration approximating the reachability value from below, the
initial estimates have to be the true values for $\success$ and
$\winning$, i.e. $1$ and $0$, respectively, and below the true values
elsewhere, e.g.~$0$.  Each iteration step propagates the value one
step back further by maximizing the expectation of the value
player~$\reach$ can ensure with respect to the previous estimate.

Formally, we capture estimates as valuations, where a
\emph{valuation}~$\valuation \colon \states \to [0,1]$ is a function
assigning each state~$\state$ a real number $\valuation(\state) \in
[0,1]$ representing the (approximate or true) value of the state. In
addition, let $\valuation, \valuation<\prime>$ be two valuations, we
write $\valuation \leq \valuation<\prime>$ if $\valuation(\state) \leq
\valuation<\prime>(\state)$ for every $\state \in \states$. We can
computed the expected value at a state~$\state$ for a given
valuation~$\valuation$ and strategies~$\strategy$ and~$\strategy*$ by
\[
\preop[\strategy,\strategy*](\valuation)(\state)=
\sum_{\move[\reach],\move[\safe] \in \moves} \sum_{\state<\prime> \in
  \states} \valuation(\state<\prime>) \cdot
\transitions(\state,\move[\reach],\move[\safe])(\state<\prime>) \cdot
\strategy[\reach](\state)(\move[\reach]) \cdot
\strategy[\safe](\state)(\move[\safe]).
\]

\subsection{Lower Bound.}
For the rest of this section, we consider reachability games, where
player~$\reach$ tries to maximize the value. In
\Fref{fig:runningExample}, one can find a concurrent game, which was
originally presented in~\cite{Chatterjee2012}. For this section, we
set $\success = \set{\state[2]}$ and $\fail =
\set{\state[0],\state[1],\state[3],\state[4],\state[5]}$ and let $a$,
$b$ be moves of player~$\reach$ and $c$, $d$ moves of
player~$\safe$. Hence, in \Fref{fig:runningExample},
$\winning=\set{\state[2]}$, which is absorbing.  \cite{Alfaro2004}
presents value iteration from below. We define a slightly simplified
version also used in~\cite{Chatterjee2012}.  In the following, we
denote by $\lowerBound<k>$ the $k$-th iteration of value iteration
from below, where $\lowerBound<k>$ is defined as follows:

\begin{align} 
  \lowerBound<0>(\state) = & \mathsf{if} \ \state \in  \success \ \mathsf{then} \ 1  \ \mathsf{else} \ 0 \\
  \lowerBound<k+1>(\state) = &
  \adjustlimits\sup_{\strategy\in\strategies}\inf_{\strategy*\in\strategies*} \preop[\strategy,\strategy*](\lowerBound<k>)(\state)
\end{align}

\begin{figure}
  \centering
  \begin{tikzpicture}
      \node [style=state-circle] (5) at (-4, 0) {$\state[5]$};
      \node [style=state-circle] (4) at (-2, 0) {$\state[4]$};
      \node [style=state-circle] (3) at (0, 0) {$\state[3]$};
      \node [style=state-circle] (0) at (2, 0) {$\state[0]$};
      \node [style=state-circle] (2) at (4, 1) {$\state[2]$};
      \node [style=state-circle] (1) at (4, -1) {$\state[1]$};

      \draw [style=arrow-segment-after-probabilities,sloped] (4) to node[above]{$\square d$,$1$} (5);
      \draw [style=arrow-segment-after-probabilities,bend right,sloped] (4) to node[below]{$\square c$,$1$} (3);
      \draw [style=arrow-segment-after-probabilities,bend right,sloped] (3) to node[above]{$a \square$,$1$} (4);
      \draw [style=arrow-segment-after-probabilities,sloped] (3) to node[above]{$b \square$,$1$} (0);
      \draw [style=arrow-segment-after-probabilities,sloped] (0) to node[below]{$ac$,$bd$,$1$} (1);
      \draw [style=arrow-segment-after-probabilities,sloped,bend right] (5) to node[below]{$\square\square$,$0.6$} (1);
      \draw [style=arrow-segment-after-probabilities,sloped,bend left] (5) to node[above]{$\square\square$,$0.4$} (2);
      \draw [style=arrow-segment-after-probabilities,sloped, bend left=15] (0) to node[above]{$bc$,$1$} (2);
      \draw [style=arrow-segment-after-probabilities,loop above,sloped] (0) to node[above]{$ad$,$\frac{1}{2}$} (0);
      \draw [style=arrow-segment-after-probabilities,sloped,bend right=15] (0) to node[below]{$ad$,$\frac{1}{2}$} (2);
      \draw [style=arrow-segment-after-probabilities,loop right] (2) to node[right]{$\square\square$} (2);
      \draw [style=arrow-segment-after-probabilities,loop right] (1) to node[right]{$\square\square$} (1);
    \end{tikzpicture}
    \caption{A concurrent game, originally presented
      in~\cite{Chatterjee2012}. $\square$ denotes a move if a player
      only has one available move in a state. }
  \label{fig:runningExample}
\end{figure}
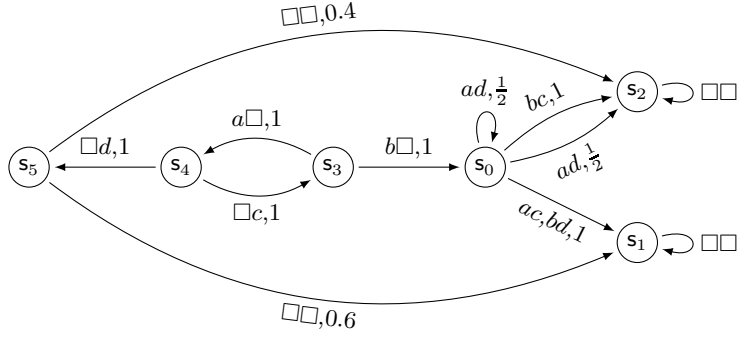

Since $\success$ is absorbing, we have $\lowerBound<k>(\state) = 1$
for all $\state \in \success$ and $\lowerBound<k>(\state) = 0$ for all
$\state \in \winning$ for all $k \in \NN$. 
To compute a monotonically increasing sequence of valuations, we
iterative apply the operator~$\preop$ to the lower bound.  Computing
$\sup_{\strategy\in\strategies}\inf_{\strategy*\in\strategies*}
\preop[\strategy,\strategy*](\lowerBound<k>)(\state)$ corresponds to
solving a one-shot zero-sum matrix game, for instance, for the following payoff
matrix in iteration~0 at state~$\state[0]$:
\[ 
\begin{pmatrix}
  1 & 0 \\ 0 & 1 \\
\end{pmatrix}
\] 
Row~1 corresponds to move~$a$ of player~$\reach$, Row~2 to move~$b$
and Column~1 to move~$c$ and Column~2 to move~$d$ of
player~$\safe$. The content of the matrix represents the payoff
player~$\reach$ achieves. For instance, Cell~1,1 contains the payoff
for player~$\reach$ if player~$\reach$ chooses move~$a$ and
player~$\safe$ chooses move~$c$. Please note that player~$\reach$ can
ensure a value of~$\frac{1}{2}$ in this matrix game by choosing
move~$a$ with probability~$\frac{1}{2}$ and move~$b$ with
probability~$\frac{1}{2}$.

The following theorem states that sequentially updating the value of
the states of a game by solving one-shot matrix games at every state
finally converges to the true reachability value.

\begin{theorem}[Theorem~1 from \cite{Alfaro2004}]
  $\displaystyle{\lim_{k \to \infty}} \lowerBound<k> = \val(\Diamond \success)$
\end{theorem}

Please note that the limit of $\lowerBound<k>$ may not be reached in
finitely many steps since the value may be irrational~\cite{Alfaro2004}.

\subsection{Upper Bound.}
Value iteration from below converges to the value, but at any point in
time we do not know how close we are to the value. To obtain a
stopping criterion, we devise an algorithm approximatingn the value
from above. The distance between the under- and the over-approximation
in a state is then the distance we have at most to the true value.

\paragraph*{Na\"ive Definition} 
Na\"ively, one could define an upper bound iteration
as follows:
\begin{align} 
  \upperBound<0>(\state) = & \mathsf{if} \ \state \in  \winning \ \mathsf{then} \ 0  \ \mathsf{else} \ 1 \\
  \upperBound<k+1>(\state) = & \adjustlimits\sup_{\strategy\in\strategies}\inf_{\strategy*\in\strategies*} \preop[\strategy,\strategy*](\upperBound<k>)(\state) \label{eq:upperBoundRec}
\end{align}

For \CSG{}s, this iteration is a valid over-approximation, but only for
\CSG{}s without \EC{}s, the iteration indeed monotonically converges
to the reachability value from above, which is formalized in
\Fref{theo:BVInoEC}.

\begin{theorem}
  \label{theo:BVInoEC}
  For a \CSG{}~$\game$ without \EC{}s in $\states \setminus
  (\winning \cup \success)$: $\displaystyle{\lim_{k\to\infty}}
  \upperBound<k> = \val(\Diamond \success)$.
\end{theorem}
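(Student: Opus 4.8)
The plan is to show that $(\upperBound<k>)_{k \in \NN}$ is a monotonically non-increasing sequence of valuations, each of which is an over-approximation of $\val(\Diamond \success)$, and that under the no-\EC{} hypothesis its limit is a fixpoint of the Bellman operator which is \emph{itself} the reachability value. The argument therefore decomposes into three parts: (i) monotonicity and the fact that the sequence lies pointwise above the value; (ii) existence of a limit valuation $\upperBound<\infty>$, which is necessarily a fixpoint of the operator $\valuation \mapsto \sup_{\strategy}\inf_{\strategy*}\preop[\strategy,\strategy*](\valuation)$; and (iii) the identification $\upperBound<\infty> = \val(\Diamond \success)$, which is where the absence of end components is used.

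First I would establish (i). The initial valuation $\upperBound<0>$ is $1$ everywhere except on $\winning$, where it is $0$; since $\val(\Diamond\success) \le 1$ in general and $\val(\Diamond\success) = 0$ on $\winning$ and $=1$ on $\success$, we have $\upperBound<0> \ge \val(\Diamond\success)$ and moreover $\upperBound<1> \le \upperBound<0>$ because the operator applied to the constant $1$ (respecting the absorbing states) can only decrease values off $\winning\cup\success$ and leaves $\winning,\success$ fixed. Since the operator $\valuation \mapsto \sup_{\strategy}\inf_{\strategy*}\preop[\strategy,\strategy*](\valuation)$ is monotone in $\valuation$ (pointwise), $\upperBound<k+1> \le \upperBound<k>$ for all $k$ follows by induction, and likewise $\upperBound<k> \ge \val(\Diamond\success)$ for all $k$, using that $\val(\Diamond\success)$ is itself a fixpoint of the operator (the standard Bellman characterization, dual to \Fref{theo}{} in the limit-of-$\lowerBound<k>$ sense). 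A non-increasing sequence bounded below by $\val(\Diamond\success)$ converges pointwise to some valuation $\upperBound<\infty>$ with $\upperBound<\infty> \ge \val(\Diamond\success)$.

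Next, (ii): the Bellman operator is continuous on the (finite-dimensional, compact) space of valuations — here one uses that $\states$ and $\moves$ are finite, that the matrix-game value $\sup_{\strategy}\inf_{\strategy*}$ is a continuous function of the payoff entries (minimax theorem plus continuity of the value of a matrix game in its entries), and that $\preop$ is linear in $\valuation$. Hence passing to the limit in the recurrence $\upperBound<k+1> = \sup_\strategy\inf_{\strategy*}\preop[\strategy,\strategy*](\upperBound<k>)$ yields $\upperBound<\infty> = \sup_\strategy\inf_{\strategy*}\preop[\strategy,\strategy*](\upperBound<\infty>)$, so $\upperBound<\infty>$ is a fixpoint that agrees with $\val(\Diamond\success)$ on $\winning \cup \success$.

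Finally, (iii), the crux: I would show that the \emph{only} fixpoint of the operator that is $0$ on $\winning$ and $1$ on $\success$ and sits pointwise between $\val(\Diamond\success)$ and $1$ is $\val(\Diamond\success)$ itself, \emph{provided} there are no end components in $\states \setminus (\winning\cup\success)$. Suppose for contradiction that $\upperBound<\infty>(\state) > \val(\Diamond\success)(\state)$ for some state, and let $\differ$ be the (finite, nonempty) set of states achieving the maximal gap $\max_\state (\upperBound<\infty>(\state) - \val(\Diamond\success)(\state)) =: \delta > 0$. One argues that $\differ \subseteq \states\setminus(\winning\cup\success)$ and that, because both $\upperBound<\infty>$ and $\val(\Diamond\success)$ are fixpoints of the same operator, player~$\reach$'s optimizing strategy for the $\upperBound<\infty>$-matrix-game at any $\state \in \differ$ must be "trapped" inside $\differ$: any move combination giving positive probability to a state outside $\differ$ would strictly decrease the gap, contradicting maximality. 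Concretely, one shows there exist strategies $\strategy[\reach], \strategy[\safe]$ such that $\destination(\state,\strategy[\reach],\strategy[\safe]) \subseteq \differ$ for every $\state \in \differ$, and — via the second, reachability-within-$\differ$ condition, obtained by combining the matrix-game optimizers across all states of $\differ$ — that $\differ$ is (or contains) an end component. This contradicts the hypothesis, forcing $\delta = 0$ and hence $\upperBound<\infty> = \val(\Diamond\success)$.

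The main obstacle is precisely this last step, and specifically the subtlety flagged in the introduction: at a state of $\differ$, the value-realizing distribution of the matrix game defining $\upperBound<\infty>(\state)$ may only be \emph{approximately} leaving $\differ$ (assigning vanishing probability to exits), so that the exact optimizer assigns zero probability to leaving. One must therefore argue carefully that the fixpoint equality forces an \emph{exact} optimizer that stays inside $\differ$ — equivalently, that the relevant matrix game has a value realized by a trapped strategy — and then assemble these per-state trapped optimizers into a single pair $(\strategy[\reach],\strategy[\safe])$ witnessing that $\differ$ supports an end component. Handling this correctly (rather than merely the "leaving with positive probability" version, which would fail here) is the heart of the proof.
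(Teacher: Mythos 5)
Your proposal is correct and follows essentially the same route as the paper: monotone decrease of $\upperBound<k>$, boundedness below by $\val(\Diamond \success)$, a Kleene-style passage to a fixpoint limit, and identification of that fixpoint with the value via a maximal-gap-set argument. The only cosmetic difference is that you phrase the last step in contrapositive form (a positive gap forces, at every maximal-gap state, an exact optimizer together with an opponent response that stays inside the gap set, hence an end component, contradicting the hypothesis), whereas the paper extracts an exiting optimal strategy at a maximal-gap state and derives a strict numerical contradiction --- the same key step, and you correctly flag the central subtlety that exact (not merely almost-leaving) optimizers must be handled.
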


\emph{Proof Sketch of \Fref{theo:BVInoEC}.} Intuitively, the
over-approximation will be updated from $\success$ and $\winning$
backwards to the states with increasing distance from $\success$ and
$\winning$. In \EC{}-free games, it cannot happen that a set of states
solely depends on each other to determine the value. Hence, the
updates emerging from the correct values of $\success$ and $\winning$
will finally influence the value of all states. We prove the
correctness of this approach by first proving that $\upperBound<k+1>
\leq \upperBound<k>$ for every $k \in \NN$ by a simple induction
over~$k$, which also relies on $\preop$ being monotonic over
valuations. In addition, we prove that $\val(\Diamond \success) \leq
\upperBound<k>$ for all $k \in \NN$. With an argument similar to the
proof of Fixpoint Kleene's Theorem, we show that $\lim_{k \to \infty}
\upperBound<k> = \upperBound<\ast>$ exists and
$\preop[\reach](\upperBound<\ast>) = \upperBound<\ast>$. This suffices
to prove that $\upperBound<\ast>=\val(\Diamond \success)$.

In the presence of non-trivial \EC{}s the above theorem does not hold
since there is no unique fixpoint to the Bellman equations.

\begin{example}
  In \Fref{fig:problemECs}, $\upperBound<0>$ assigns $1$ to both
  $\state[1]$ and $\state[2]$. We have
  $\upperBound<k>(\state[1])=1=\upperBound<k>(\state[2])$ for $k \in
  \NN$ since the strategy, which assigns probability~$1$ to move~$a$
  yields the supremum for $\preop[\reach](\upperBound<k>)(\state[1])$
  and $\preop[\reach](\upperBound<k>)(\state[2])$. However, such a
  strategy yields effectively reachability value~$0$ for both states.
\end{example}

\paragraph*{Bounded Value Iteration.}
Before we present how to overcome the issues of the na\"ive upper
bound iteration, We briefly present the overall bounded value
iteration algorithm. The goal finally is to define a
method~\procname{DEFLATE} such that the algorithm in
\Fref{alg:boundedValueIteration} yields a monotonically decreasing
sequence of valuations over-approximating the reachability value and
converging to it in the limit, which is summerized in
\Fref{theo:upperBoundOptimal}.

\begin{algorithm}
  \begin{algorithmic}[1]
    \Function{\procname{BVI}}{$\epsilon$}
    \State $\mathcal{M} := \procname{MEC}(\game)$
    \Comment{$\mathcal{M}$ is the set of all maximal \EC{}s.}
    \State $k \coloneqq 0$ 
    \Comment{$\lowerBound<0>$ and $\upperBound<0>$ defined as above.}
    
    \DoUntil
    \State $\lowerBound<k+1> \coloneqq  \preop[\reach](\lowerBound<k>)$
    \State $\upperBound<k+1> \coloneqq  \preop[\reach](\upperBound<k>)$
    \For{$\endComponent \in \mathcal{M}$}
    \State \procname{DEFLATE}($\upperBound<k+1>, \endComponent$)
    \EndFor
    \State $k \coloneqq k+1$ 
    \EndDoUntil{$\big(\upperBound(\state[0]) - \lowerBound(\state[0])
      \leq \epsilon\big) \lor \big(\upperBound<k>=\upperBound<k-1>\big) \lor
      \big(\lowerBound<k>=\lowerBound<k-1>\big)$}
    \EndFunction
  \end{algorithmic}
  \caption{Bounded Value Iteration for Concurrent Games}
  \label{alg:boundedValueIteration}
\end{algorithm}

\Fref{alg:boundedValueIteration} depicts bounded value iteration,
i.e. the parallel computation of the upper and lower bound to bound
the distance to the true value. If these approximations are closer
than $\epsilon$, we know that both approximations are at most
$\epsilon$-away from the real value.

\begin{theorem}[Optimality of Upper Bound]
  \label{theo:upperBoundOptimal}
  \begin{enumerate}
  \item If $\upperBound<k> = \upperBound<k+1>$ for some $k \in \NN$ in
    \Fref{alg:boundedValueIteration}, then $\upperBound<k> =
    \val(\Diamond \success)$.
  \item $\displaystyle{\lim_{k \to \infty}} \upperBound<k> =
    \val(\Diamond \success)$ in
    \Fref{alg:boundedValueIteration}.
  \end{enumerate}
\end{theorem}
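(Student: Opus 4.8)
The plan is to reduce both items to three ingredients and then assemble them by elementary reasoning about monotone bounded sequences. Throughout I would take for granted the properties that \procname{DEFLATE} is designed to satisfy (to be checked once it is defined): it only \emph{lowers} valuations, it never pushes a value below $\val(\Diamond\success)$, it leaves the absorbing states $\success$ and $\winning$ untouched, and, as an operator on valuations, it is monotone, continuous and idempotent. The first ingredient is the \emph{soundness invariant} $\val(\Diamond\success)\le\upperBound<k>$ for all $k$, together with the fact that $\upperBound<k>$ and $\val(\Diamond\success)$ agree on $\success\cup\winning$; this follows by induction on $k$, the base case from the definition of $\upperBound<0>$ and $\val(\Diamond\success)=0$ on $\winning$, the step from monotonicity of $\preop[\reach]$, from $\val(\Diamond\success)=\preop[\reach](\val(\Diamond\success))$ (so $\preop[\reach](\upperBound<k>)\ge\val(\Diamond\success)$), and from the fact that \procname{DEFLATE} only deflates towards an estimate of what player~$\reach$ can guarantee by \emph{leaving} an end component, which never drops below $\val(\Diamond\success)$. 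The second ingredient is the \emph{monotonicity invariant} $\upperBound<k+1>\le\upperBound<k>$, again a short induction (base case: $\winning$ is absorbing, so $\preop[\reach](\upperBound<0>)\le\upperBound<0>$; step: monotonicity of $\preop[\reach]$ and of \procname{DEFLATE}, plus that \procname{DEFLATE} only decreases), in the spirit of the proof sketch of \Fref{theo:BVInoEC}.

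The core ingredient is a \emph{rigidity lemma}: if a valuation $f$ satisfies $\val(\Diamond\success)\le f$, $f\le\preop[\reach](f)$, and $f$ is a fixpoint of \procname{DEFLATE} on every $\endComponent\in\mathcal{M}$, then $f=\val(\Diamond\success)$. Suppose not, put $\delta\coloneqq\max_{\state}\big(f(\state)-\val(\Diamond\success)(\state)\big)>0$, and let $T$ be the set of states attaining this gap. By the soundness invariant $T$ is disjoint from $\success\cup\winning$, so $\val(\Diamond\success)>0$ on $T$. For $\state\in T$, compare the two one-shot matrix games that $\preop[\reach]$ solves at $\state$ for $f$ and for $\val(\Diamond\success)$: corresponding payoff entries differ by a probability-weighted average of gaps, hence lie in $[0,\delta]$, so the two game values differ by at most $\delta$; combined with $f(\state)\le\preop[\reach](f)(\state)$ and $\val(\Diamond\success)(\state)=\preop[\reach](\val(\Diamond\success))(\state)$, the two values differ by exactly $\delta$. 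Pairing an optimal player-$\reach$ strategy of the $f$-game with an optimal player-$\safe$ strategy of the $\val(\Diamond\success)$-game at $\state$ then forces, by the same averaging, every positive-probability successor of $\state$ under this pair of local moves to lie in $T$. Hence $T$ is closed under a memoryless strategy profile, so a bottom strongly connected component of the induced finite Markov chain is an end component $C\subseteq T$, contained in some maximal end component $M\in\mathcal{M}$ that (like $C$) avoids $\success\cup\winning$. Now \procname{DEFLATE}-stability on $M$ caps $f$ on $M\supseteq C$ by the best value player~$\reach$ can secure by leaving $M$; since on $M$ the true value equals exactly this best-leaving value (from inside an end component $\success$, being absorbing and outside, is only reachable after leaving), the gap on $C$ can equal the \emph{full} $\delta$ only if the relevant exit targets also lie in $T$, and chasing best-leaving moves through the finitely many maximal end components then forces a gap-$\delta$ step to a state outside $T$ (or into $\success$), contradicting $C\subseteq T$.

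With these three ingredients, Item~1 follows: if $\upperBound<k>=\upperBound<k+1>$, then $f\coloneqq\upperBound<k>=\procname{DEFLATE}\big(\preop[\reach](f)\big)$, whence $f\le\preop[\reach](f)$ (\procname{DEFLATE} only decreases), $f$ is \procname{DEFLATE}-stable (idempotence), and $f\ge\val(\Diamond\success)$ (soundness), so the rigidity lemma yields $f=\val(\Diamond\success)$. For Item~2, monotonicity and soundness make $(\upperBound<k>)_{k}$ non-increasing and bounded below, so $\upperBound<\ast>\coloneqq\lim_{k}\upperBound<k>$ exists; continuity of $\preop[\reach]$ and \procname{DEFLATE} gives $\upperBound<\ast>=\procname{DEFLATE}\big(\preop[\reach](\upperBound<\ast>)\big)$, hence exactly as in Item~1 the rigidity lemma applies and $\upperBound<\ast>=\val(\Diamond\success)$.

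I expect the main obstacle to be the final step of the rigidity lemma. In a \CSG{}, ``leaving an end component'' is not visible in the transition relation alone, since the successor set depends on both players' simultaneous moves, so the best-leaving value must itself be extracted from a matrix game; and, as \Fref{fig:problemECs} shows, the supremum defining it may be approached only by strategies that leave with vanishing probability, so it need not be witnessed by any genuinely exiting strategy. Making the attractor-based simple-end-component construction interact correctly with this matrix game, so that \procname{DEFLATE}-stability really precludes a maximal-gap end component $C\subseteq T$ (i.e. so that the deflation ``bites'' and the gap genuinely propagates out of $T$), is the technical heart of the argument and is precisely what separates the \CSG{} case from \MDP{}s and turn-based \SSG{}s.
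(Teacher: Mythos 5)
Your outer scaffolding agrees with the paper's proof of \Fref{theo:upperBoundOptimal}: soundness $\val(\Diamond \success) \leq \upperBound<k>$ (the paper's \Fref{lem:upperBoundCorrectness}), monotone decrease (\Fref{lem:upperBoundMonotone}), existence of the limit $\upperBound<\star>$ as a fixpoint of the combined update (\Fref{theo:upperBoundConverges}), and then a rigidity argument identifying the fixpoint with the value. Your matrix-game observation inside the rigidity lemma is also sound: at a maximal-gap state, pairing a maximizer strategy optimal for the $f$-game with a minimizer strategy optimal for the $\val(\Diamond\success)$-game forces all successors back into the maximal-gap set $T$, so $T$ contains an end component $C$; this is in the spirit of the paper's concluding computation in \Fref{theo:upperBoundConvergesToValue}. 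The genuine gap is the step you yourself flag as the main obstacle: you never establish that \procname{DEFLATE}-stability rules out such a maximal-gap end component, and the justification you sketch for it does not go through.

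Concretely, the claim that on the enclosing maximal end component $M$ ``the true value equals exactly this best-leaving value'' is false statewise: states of one end component of a concurrent (or even turn-based) game may have different values, and \procname{DEFLATE} caps $f$ layer by layer by best exits computed w.r.t.\ the \emph{current} valuation $f$, not w.r.t.\ $\val(\Diamond\success)$. Chasing the inequalities you actually have gives, for a gap-$\delta$ state $\state$ in a layer $X$, only $f(\state) \leq \bestExit[\reach]<f>(X) \leq \bestExit[\reach]<\val(\Diamond\success)>(X) + \delta$, i.e.\ $\val(\Diamond\success)(\state) \leq \max_{\state<\prime> \in X} \val(\Diamond\success)(\state<\prime>)$, which is vacuous and yields no contradiction. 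Likewise, ``chasing best-leaving moves through the finitely many maximal end components'' has no well-founded measure and must also handle exits landing in $T$-states outside every \MEC{} (note $M \not\subseteq T$ in general, only $C \subseteq T$). What is actually required is the paper's key \Fref{lem:exitingStateUpperBound}: for \emph{every} subset $\states<\prime>$ --- applied to $\states<\prime> = T$, which need not be an end component --- some state of $\states<\prime>$ attains its fixpoint value via a strategy exiting $\states<\prime>$ in the strong sense (no staying move in its support and leaving under every opposing strategy); only such a strategy makes the gap contract strictly in the final computation. Proving this is exactly the attractor-layer analysis of \procname{DEFLATE} (\Fref{lem:fixpointInECIsBestExit}, \Fref{lem:attractorBestExitInSubset}, \Fref{lem:exitSubsetYieldsLargerValue}), which your sketch neither supplies nor replaces. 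A smaller but real issue: your assembly of Item~1 leans on idempotence of \procname{DEFLATE}, which is not obvious (a second pass recomputes best exits w.r.t.\ already lowered values and may deflate further); the paper avoids this by deriving $\upperBound<\star> = \preop[\reach](\upperBound<\star>)$ directly from the fixpoint equation (\Fref{lem:fixpointIsPreop}).
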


\emph{Proof Sketch of \Fref{theo:upperBoundOptimal}.} To formally
prove the claim, we show that \procname{DEFLATE} is also
monotone. Then, we can show that $\upperBound<k+1> \leq
\upperBound<k>$ for all $k \in \NN$. The rest of the proof does not
differ from the proof for games without non-trivial end components,
i.e. we show that $\upperBound<\ast>$ is a unique fixpoint of the
updates to $\upperBound<k>$ in \Fref{alg:boundedValueIteration}, which
suffices to show that $\upperBound<\ast>$ is indeed $\val[\reach]$.

\begin{figure}
  \centering
  \begin{tikzpicture}
      \node [style=state] (1) at (0, 0) {$\state[1]$};
      \node [style=state] (2) at (-2, 0) {$\state[2]$};
      \node [style=state] (3) at (2, 0.5) {$\state[3]$};
      \node [style=state] (4) at (2, -0.5) {$\state[4]$};
      \node [style=state] (0) at (-4, 0) {$\state[0]$};
  
      \draw [style=arrow-segment-after-probabilities,bend right,sloped] (1) to node[above]{$a\square$,$1$} (2);
      \draw [style=arrow-segment-after-probabilities,bend right,sloped] (2) to node[below]{$a\square$,$1$} (1);
      \draw [style=arrow-segment-after-probabilities,bend right,sloped] (0) to node[below]{$a\square$,$\frac{1}{2}$} (2);
      \draw [style=arrow-segment-after-probabilities,bend right,sloped] (2) to node[above]{$b\square$,$1$} (0);
      \draw [style=arrow-segment-after-probabilities,sloped] (1) to node[above]{$b\square$,$\frac{1}{2}$} (3);
      \draw [style=arrow-segment-after-probabilities,sloped] (1) to node[below]{$b\square$,$\frac{1}{2}$} (4);
      \draw [style=arrow-segment-after-probabilities,loop right] (3) to node[right]{$a\square$,$1$} (3);
      \draw [style=arrow-segment-after-probabilities,loop right] (4) to node[right]{$a\square$,$1$} (4);
      \draw [style=arrow-segment-after-probabilities,loop left] (0) to node[left]{$a\square$,$\frac{1}{2}$} (0);
    \end{tikzpicture}
    \caption{We set $\success = \set{\state[3]}$ and $\fail =
      \set{\state[0],\state[1],\state[2],\state[4]}$. All states are fully
      controlled by player~$\reach$. Both $\state[1]$ and $\state[2]$
      are not part of $\winning$. Hence, $\upperBound<0>$ assigns $1$
      to both states. Since $1$ is larger than $0.5$,
      $\upperBound<k+1>$ still assigns $1$ for all $k \in \NN$ with an
      optimal strategy always preferring move~$a$ over move~$b$ or
      any non-Dirac distribution over both. However, the value this
      strategy yields will effectively be~$0$ since we never visit a
      state in~$\success$.}
  \label{fig:problemECs}
\end{figure}

\paragraph*{Theoretical Foundation of Deflating.}
There are two observations, which are crucial for deflating: (1) A
state in an \EC{} cannot have a better reachability value than it
achieves by leaving the \EC{} since staying in an \EC{} outside of
$\success$ will effectively yield value~$0$. (2) The states in the end
component may promise each other unsubstantiated overly high
reachability values. 

Such a problem occurs, for instance, in the
\EC{}~$\set{\state[1],\state[2]}$ in \Fref{fig:problemECs}. If we
initialize all states except~$\state[4]$ with estimate~$1$, states
$\state[1]$ and $\state[2]$ will always promise each other value~$1$
although none of the states can really achieve it.

This process of adjusting the value in \EC{}s~is called
\emph{deflating}~\cite{Kelmendi2018}. In more detail, we will reduce
the estimate of the reachability value in end components to the best
estimate they can achieve when forced to leave. Here, we define
whether a player stays or leaves the end component over its potential
successors.

For an end component~$\endComponent$ and a $\state \in \endComponent$,
we call a move~$\move[\reach] \in \moveAssignment[\reach](\state)$
\emph{staying} if $\forall \move[\safe] \in
\moveAssignment[\safe](\state):
\destination(\state,\move[\reach],\move[\safe])
\subseteq \endComponent$ and \emph{leaving} if $\forall \move[\safe]
\in \moveAssignment[\safe](\state):
\destination(\state,\move[\reach],\move[\safe])
\not\subseteq \endComponent$

One can observe that single moves can be neither staying nor leaving
in concurrent games. In turn-based games, the definitions of staying
and leaving moves are complementary since each state and thus, every
transition, is controlled by a single player.

\begin{example}
  Consider for instance move~$a$ at state~$\state[1]$ for the
  \EC{}~$\set{\state[1],\state[2]}$ in
  \Fref{fig:leaving}. Player~$\reach$ can neither enforce to stay in
  the \EC\ nor can player~$\reach$ enforce to leave it. The
  state~$\state[5]$ in \Fref{fig:leaving} is an example of a state
  (and an end component), which does not have any move~$\move[\reach]$
  for player~$\reach$ such that for all moves~$\move[\safe]$ of
  player~$\safe$ holds
  $\destination(\state[5],\move[\reach],\move[\safe]) \not\subseteq
  \set{\state[5]}$. However, the strategy, which assigns
  probability~$\frac{1}{2}$ to both available moves ensures that
  states outside~$\set{\state[5]}$ are seen with positive probability.
\end{example}

To overcome this issue, we cannot simply restrict player~$\reach$ to
strategies that sign a positive probability to moves, which lead to
states outside of the \EC{} with a positive probability since the
limit of a sequence of such strategies might not satisfy the
property. This is a difficulty for the computation of~$\preop$.
Formally, for an \EC{}~$\endComponent$, player $\reach$ and $\state
\in \endComponent$, we denote by

\begin{align*}
  \strategies[\reach]<\exit[\reach]<\endComponent>>(\state) \coloneqq
  & \set{ \strategy[\reach] \in \strategies[\reach] \mid \forall
    \strategy[\safe] \in \strategies[\safe].\ \destination(\state,
    \strategy[\reach], \strategy[\safe]) \not\subseteq \endComponent
    \\ & \land \nexists \move[\reach] \in
    \support(\strategy[\reach](\state)): \forall \move[\safe] \in
    \moveAssignment[\safe](\state):
    \destination(\state,\move[\reach],\move[\safe])
    \subseteq \endComponent}
\end{align*}
    
the set of strategies, which force the play to leave $\endComponent$
from $\state$, while not using any staying move. Now, we extend the
pre-operator as follows:

\begin{equation}
\preop<\exit[\reach]<\endComponent>>(\valuation)(\state) \coloneqq
\adjustlimits\sup_{\strategy[\reach] \in
  \strategies[\reach]<\exit[\reach]<\endComponent>>(\state)}
\inf_{\strategy[\safe] \in \strategies[\safe]}
\preop[\strategy[\reach], \strategy[\safe]](\valuation)(\state)
\end{equation}

We denote by $\best<\exit[\reach]<\endComponent>>(\valuation)(\state)
\in \strategies[\reach]<\exit[\reach]<\endComponent>>(\state)$ one
optimal strategy of the modified one-shot matrix game, which considers
leaving strategies only. Such a strategy exists since we only
consider end components not in $\reach$ or $\winning$ and a end
component without such a strategy is part of $\winning$. We define the
\emph{best exit} of an end component~$\endComponent$ for player~$i$
with respect to a valuation~$\valuation$ by

\begin{equation*}
  \bestExit[\reach]<\valuation>(\endComponent) \coloneqq \max_{\state \in \endComponent} \preop<\exit[\reach]<\endComponent>>(\valuation)(\state)
\end{equation*}

\paragraph*{Algorithmically Deflating.}
We finally can devise an algorithm for \procname{DEFLATE}. First of
all, please note that we can compute
$\bestExit[\reach]<\valuation>(\endComponent)$ by removing moves of
player~$\reach$, which surely stay inside the end component, and by
constraining the solutions of the linear optimization problem to solve
to such solutions assigning a probability greater than~$0$ to states
outside the end component~$\endComponent$.

\begin{figure}
  \centering
  \begin{tikzpicture}
      \node [style=state] (1) at (0, 0) {$\state[1]$};
      \node [style=state] (2) at (-2, 0) {$\state[2]$};
      \node [style=state] (3) at (2, 0.5) {$\state[3]$};
      \node [style=state] (4) at (2, -0.5) {$\state[4]$};
      \node [style=state] (5) at (4,0) {$\state[5]$};
  
      \draw [style=arrow-segment-after-probabilities,bend right,sloped] (1) to node[above]{$ac$} (2);
      \draw [style=arrow-segment-after-probabilities,bend right,sloped] (2) to node[below]{$a\square$} (1);
      \draw [style=arrow-segment-after-probabilities,sloped] (1) to node[above]{$ac$,$\frac{1}{2}$} (3);
      \draw [style=arrow-segment-after-probabilities,sloped] (1) to node[below]{$ac$,$\frac{1}{2}$} (4);
      \draw [style=arrow-segment-after-probabilities,loop above] (3) to node[above]{$a\square$} (3);
      \draw [style=arrow-segment-after-probabilities,loop below] (4) to node[below]{$a\square$} (4);
      \draw [style=arrow-segment-after-probabilities,loop above] (5) to node[above]{$ac$,$bd$} (5);
      \draw [style=arrow-segment-after-probabilities,sloped] (5) to node[above]{$bc$,$ad$,$\frac{1}{2}$} (3);
      \draw [style=arrow-segment-after-probabilities,sloped] (5) to node[below]{$bc$,$ad$,$\frac{1}{2}$} (4);
    \end{tikzpicture}
    \caption{We set $\success = \set{\state[3]}$, all other states are
      in $\fail$. $ab$, for instance, denotes that player~$\reach$
      plays move~$a$ and player~$\safe$ plays move~$b$. The move~$a$
      at state~$\state[1]$ is neither leaving nor staying
      for \EC{}~$\set{\state[1],\state[2]}$ since the behavior of $a$
      depends on the move chosen by player~$\safe$. While
      player~$\reach$ can neither ensure to leave~$\set{\state[5]}$ by
      move~$a$ nor by~$c$, but the strategy, which
      assigns~$\frac{1}{2}$ to both moves, ensures leaving
      $\set{\state[5]}$ with positive probability.}
  \label{fig:leaving}
\end{figure}
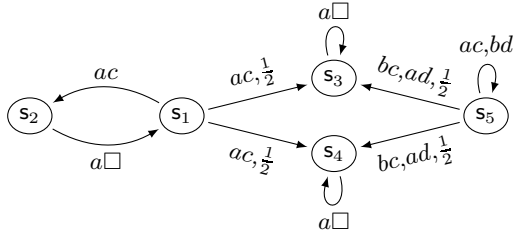

Once we now the best exit of each state, we use the attractor
construction to compute the set of states, which can ensure to visit
the states with the best exit of an end component~$\endComponent$ as
follows: Please note that we use the computation of the attractor as
in~\cite{Alfaro2000}. Let $B \subseteq \endComponent$, then

\begin{align*}
    \attractor<0>(B) \coloneqq & B \\
    \attractor<k+1>(B) \coloneqq & \attractor<k>(B) \cup \\
        & \set{\state \in \endComponent \mid
    \exists \move[\reach] \in \moveAssignment[\reach](\state): \forall \move[\safe] \in
    \moveAssignment[\safe](\state):
    \destination(\state,\move[\reach],\move[\safe]) \subseteq \attractor<k>(B)}
\end{align*}

The computation will clearly terminate after at most $|\endComponent|$
iterations. Therefore, we set $\attractor(B) \coloneqq
\attractor<|\endComponent|>(B)$. This corresponds to the set of the
states, for which player~$\reach$ surely reaches $B$.\footnote{The
  results in the subsequent sections also hold if we compute the set
  of states, which reaches the set guaranteeing the best exit
  almost-surely.} This finally leads to the algorithm for deflating
presented in~\Fref{alg:deflate}. \procname{DEFLATE} first computes
the attractor of the best exit, then updates all states in the end
component and finally, removes all states from the previously computed
attractor. This process is iterated until there is no state left. This
intuitively leads to updates from the best to the worst best exit
(w.r.t. the current iteration) a player can enforce in state.

\begin{algorithm}
  \begin{algorithmic}[1]
    \Function{\procname{DEFLATE}}{$\upperBound<k+1>, \anySet$}
    \DoUntil
    \State $B \coloneqq \attractor(\set{\state \in \anySet \mid
      \preop<\exit[\reach]<\anySet>>(\upperBound<k+1>)(\state) 
      = \bestExit[\reach]<\upperBound<k+1>>(\anySet)})$
    \For{$\state \in \anySet$}
    \State $\upperBound<k+1>(\state) \coloneqq 
    \min\big(\upperBound<k+1>(\state),\bestExit[\reach]<\upperBound<k+1>>(\anySet)\big)$
    \EndFor
    \State $\anySet \coloneqq \anySet \setminus B$
    \EndDoUntil{$\anySet = \emptyset$}
    \EndFunction
  \end{algorithmic}
  \caption{Update upper bound of a single \MEC{}.}
  \label{alg:deflate}
\end{algorithm}


\begin{example}
    \label{ex:runningExampleVI}

    \begin{table}[ht!]
        \centering\footnotesize
        \begin{tabular}[ht!]{c|cccccc}
            $k$ $\backslash$ state & $\state[0]$ & $\state[1]$ & $\state[2]$ & $\state[3]$ & $\state[4]$ & $\state[5]$\\\hline
            0      & 0             & 0 & 1 & 0  & 0   & 0 \\
            1      & 0.375         & 0 & 1 & 0   & 0   & 0.4\\ \\
            2      & 0.40741       & 0 & 1 & 0.4 & 0.4 & 0.4\\
            3      & 0.41304       & 0 & 1 & 0.4 & 0.4 & 0.4\\
            \dots \\
            $\ast$ &$\sqrt{2}-1$  & 0 & 1 & 0.4 & 0.4 & 0.4\\
        \end{tabular}
        \hfill
        \begin{tabular}[ht!]{c|cccccc}
            $k$ $\backslash$ state & $\state[0]$ & $\state[1]$ & $\state[2]$ & $\state[3]$ & $\state[4]$ & $\state[5]$\\\hline
            0        & 1 & 0 & 1 & 1 & 1   & 1 \\
            1        & 0.5 & 0 & 1 & 1 & 1   & 0.4\\
 \procname{DEFLATE}  & 0.5 & 0 & 1 & 0.4 & 0.4   & 0.4\\
            2        & 0.43 & 0 & 1 & 0.4 & 0.4 & 0.4\\
            3        & 0.42 & 0 & 1 & 0.4 & 0.4 & 0.4\\
            \dots \\
            $\ast$ & $\sqrt{2}-1$ & 0 & 1 & 0.4 & 0.4 & 0.4\\
        \end{tabular}
        \caption{Lower Bound~\VI{} on the left and Upper Bound~\VI{} on the right for
        the Game in \Fref{fig:runningExample}, where we approximate the
        value for state $\state[0]$ with decimals.}
        \label{tab:runningExample}
    \end{table}

    In \Fref{tab:runningExample} we apply bounded value iteration on
    the game in \Fref{fig:runningExample}.  We present the lower bound
    iteration on the left and the upper bound iteration on the
    right. Since the value of $\state[0]$ is irrational, it is not
    reached within finitely many steps. However, in this example, we
    need only three steps to approximate it with precision~$0.01$.
    Without deflating the upper bound after Iteration~1, the upper
    bound for $\state[3]$ would always be determined by the upper
    bound of $\state[0]$. Yet, if player~$\reach$ decides to always
    play~$a$, then staying in the end component
    $\set{\state[3],\state[4]}$ will yield the value~0. Hence, we must
    rather take into account the leaving action $b$ from $\state[3]$
    which yields the true value of $0.4$. This reasoning will be more
    apparent in \Fref{ex:runningExampleSI}, where we also present the
    respective strategies for player $\safe$.  Once we have deflated
    the end component, all values remain constant except for that of
    $\state[0]$ which approaches $\sqrt{2}-1$.
\end{example}


\section{Strategy Iteration}
\label{sec:strategy-iteration}
In the previous section, we presented an algorithm for VI that can
provide both upper and lower bounds on the value, which converge to
the actual value, at any point in time. Another popular approach for
solving games is SI. So far there is no way of telling how close we
have approximated the true value for general concurrent games. For the
lower bound, convergence results exist~\cite{Chatterjee2013}. For the
upper bound, however, the only results so far are for the special case
of turn-based stochastic games. The problem with convergence of the
upper bound is the same as in the case of VI, namely mistakenly
overestimating the value within end components and thus not leaving
them. We deal with end components by \emph{deflating} them to a safe
over approximation that takes into account leaving strategies.

For SI from below, we iteratively improve a given strategy for player
$\reach$. Note that for a given $\strategy[\reach] \in
\strategies[\reach]$ the value
$\val[\reach:\strategy[\reach]](\Diamond \success)$ always provides a
lower bound to the true value $\val[\reach](\Diamond \success) =
\sup_{\strategy[\reach]<\prime> \in
  \strategies[\reach]}\val[\reach:\strategy[\reach]<\prime>](\Diamond
\success)$. Therefore, it is not clear how to come up with an upper
bound, given only a strategy for player $\reach$. The key is to
consider strategies for player $\safe$, as well. Using a similar
argument, a fixed strategy $\strategy[\safe] \in \strategies[\safe]$
always provides a lower bound $\val[\safe:\strategy[\safe]](\Box
\fail) \leq \sup_{\strategy[\safe]<\prime> \in
  \strategies[\safe]}\val[\safe:\strategy[\safe]<\prime>](\Box \fail)
= \val[\safe](\Box \fail)$. Since $\val[\reach] = 1 - \val[\safe]$, we
can compute an upper bound for player $\reach$ from a lower bound for
player $\safe$. Taking this discussion into account, the bounded SI
algorithm works essentially the same as that for bounded VI.

\begin{algorithm}
  \begin{algorithmic}[1]
    \Require concurrent stochastic game $\exampleGame$ with reach set~$\success$
    \Ensure memoryless strategies~$\strategy[\reach], \strategy[\safe]$
    \Function{\procname{SI}}{}
      \State Compute $\winning[\safe] = \set{\state \in \states \mid
      \val[\safe](\Box \fail)(\state)=1}$
      \State Compute the set of all \MEC{}s $\mathcal{M}$.
      \State Let $\strategy[\reach]<0> \in \strategies[\reach], \strategy[\safe]<0> \in \strategies[\safe]$ be arbitrary
        memoryless strategies and let $k=0$.
      \Statex
      \DoUntil 
        \State $\lowerBound<k> \coloneqq \val[\reach:\strategy[\reach]<k>](\Diamond \success)$
        \State $\upperBound<k> \coloneqq 1 - \val[\safe:\strategy[\safe]<k>](\Box \fail)$
        \For{$\endComponent \in \mathcal{M}$}
          \State \procname{DEFLATE($\strategy[\safe]<k>$,$\upperBound<k>$,$\endComponent$)}
        \EndFor
        \Statex
        \State $\lowerSet<k> \coloneqq \set{\state \in \states \setminus (\winning[\safe]
          \cup \success) \mid \preop[\reach](\lowerBound<k>)(\state) \neq
          \lowerBound<k>(\state)}$
        \State $\upperSet<k> \coloneqq \set{\state \in \states \setminus (\winning[\safe]
          \cup \success) \mid \preop[\safe](1-\upperBound<k>)(\state) \neq
          1-\upperBound<k>(\state)}$
        \State Compute $\strategy[\reach]<\ast> \in \strategies[\reach]$ s.t. for
          $\state \in \lowerSet<k>$ holds
          $\preop[\reach:\strategy[\reach]<\ast>](\lowerBound<k>)(\state) =
          \preop[\reach](\lowerBound<k>)(\state)$
        \State Compute $\strategy[\safe]<\ast> \in \strategies[\safe]$ s.t. for
          $\state \in \upperSet<k>$ holds
          $\preop[\safe:\strategy[\safe]<\ast>](1-\upperBound<k>)(\state) =
          \preop[\safe](1-\upperBound<k>)(\state)$
        \Statex
        \State Define $\strategy[\reach]<k+1>$ as follows for each state
          $\state \in \states$: $\strategy[\reach]<k+1>(\state) \coloneqq 
          \begin{cases}
            \strategy[\reach]<k>(\state) & \state \not\in \lowerSet<k>\\
            \strategy[\reach]<\ast>(\state) & \state \in \lowerSet<k>
          \end{cases}$
        \State Define $\strategy[\safe]<k+1>$ as follows for each state
          $\state \in \states$: $\strategy[\safe]<k+1>(\state) \coloneqq 
          \begin{cases}
            \strategy[\safe]<k>(\state) & \state \not\in \upperSet<k>\\
            \strategy[\safe]<\ast>(\state) & \state \in \upperSet<k>
          \end{cases}$
        \State $k \coloneqq k+1$
      \EndDoUntil ~$\lowerSet<k> = \emptyset$ or $\upperSet<k> = \emptyset$ or $|\upperBound<k> - \lowerBound<k>| < \epsilon$
      \Statex
      \State \Return $\strategy[\reach]<k>, \strategy[\safe]<k>$
    \EndFunction
  \end{algorithmic}
  \caption{Strategy Iteration for Concurrent Games}
   \label{alg:strategyIteration}
\end{algorithm}

\begin{algorithm}
  \begin{algorithmic}[1]
    \Function{\procname{DEFLATE}}{$\strategy[\safe]<k+1>$,$\upperBound<k+1>$,$\endComponent$}
      \DoUntil
        \State $B \coloneqq \attractor(\set{\state \in \endComponent
          \mid
          \preop<\exit(\endComponent)>(\upperBound<k+1>)(\state)
          = \bestExit<\upperBound<k+1>>(\endComponent)})$
        \State Let $\strategy[\reach] \in \strategies[\reach]<\exit(\endComponent)>$ s.t. 
        $\preop[\reach:\strategy[\reach]](\upperBound<k+1>)(\state)
          = \bestExit<\upperBound<k+1>>(\endComponent)$ for all
          $\state \in B$.
        \State Let $\strategy[\safe]$
        s.t. $\preop[\strategy[\reach],\strategy[\safe]](\upperBound<k+1>)(\state)
        = \bestExit<\upperBound<k+1>>(\endComponent)$ for all
        $\state \in B$
        \For{$\state \in \endComponent$}
          \State $\upperBound<k+1>(\state) \coloneqq
          \min(\upperBound<k+1>(\state),\bestExit<\upperBound<k+1>>(\endComponent))$
          \State $\strategy[\safe]<k+1>(\state) \coloneqq \strategy[\safe](\state)$
        \EndFor
        \State $\endComponent := \endComponent \setminus B$
      \EndDoUntil{$\endComponent = \emptyset$}
    \EndFunction
  \end{algorithmic}
  \caption{Algorithm to Deflate the Safety Value in \MEC{}s for
    player~$\safe$}
  \label{alg:inflating}
\end{algorithm}

The computation of the lower bound in \Fref{alg:strategyIteration}
corresponds to the standard SI algorithm for which convergence is
known~\cite{Chatterjee2013}. We start with an arbitrary strategy for
player $\reach$. In the set $\lowerSet<k>$ we store the states that
currently underestimate the value. For those, we update the strategy
such that it optimizes for the current lower bound. The computation of
the upper bound is analogous except for the additional call to
\procname{DEFLATE}. Just as in bounded value iteration,
\procname{DEFLATE} reduces the upper bound. For this, it computes the
optimal player $\reach$ strategy (w.r.t the current upper bound) that
leaves the end component. Then, the player $\safe$ strategy is
adjusted, to be the best response to such a leaving strategy of player
$\reach$. We stop whenever one of the sets $\lowerSet<k>$ or
$\upperSet<k>$ is empty or the difference between the lower and the
upper bound is sufficiently small.

When comparing bounded strategy iteration to bounded value iteration,
essentially the only difference is that we keep track of the
strategies that are used to attain the current estimate of the
respective bounds. Apart from that, most of the computations are
analogous to those in BVI. However, note that the computation of
$\upperBound<k>$ and $\lowerBound<k>$ are quite different in that they
are computed as the true reachability value for fixed strategies. In
contrast, BVI computes the bounds by means of the
$\preop$-operator. Intuitively, there is not much difference between
the two approaches as the $\preop$-operator computes the true value in
the long run. Since we have already proven the correctness of BVI,
proving the correctness of BSI amounts to showing that the two really
behave the same in the long run, which given the similarities is not
too difficult.

The correctness of \Fref{alg:strategyIteration} follows the proof of correctness for Algorithm~1 of \cite{Chatterjee2013}, which relies on the existence of a matching value iteration algorithm.

\begin{theorem}
  \label{theo:siOptimal}
  \[\val(\Box \fail) = \val[\safe:\strategy[\safe]<\ast>](\Box
  \fail) \] where $\lim_{i\to\infty} \strategy[\safe]<i> =
  \strategy[\safe]<\ast>$ by \Fref{alg:strategyIteration}.
\end{theorem}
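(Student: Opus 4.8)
\begin{proofoutline}
By determinacy it suffices to show that the upper bound computed in \Fref{alg:strategyIteration}, $\upperBound<k>=1-\val[\safe:\strategy[\safe]<k>](\Box\fail)$, converges pointwise to $\val(\Diamond\success)$; then $\val[\safe:\strategy[\safe]<k>](\Box\fail)=1-\upperBound<k>\to 1-\val(\Diamond\success)=\val(\Box\fail)$, and the statement about the limit strategy follows as explained in the last paragraph. The plan mirrors the correctness proof of Algorithm~1 of \cite{Chatterjee2013}: exhibit \Fref{alg:boundedValueIteration} as a \emph{matching} value-iteration procedure and squeeze $\upperBound<k>$ between $\val(\Diamond\success)$ from below and the bounded-value-iteration upper bound from above, the latter converging to $\val(\Diamond\success)$ by \Fref{theo:upperBoundOptimal}.

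Two ingredients are needed. \emph{(i) Over-approximation invariant}: $\val(\Diamond\success)\leq\upperBound<k>$ for all $k$. After the w.l.o.g.\ collapsing of $\success$ and $\winning$ into absorbing singletons, $\upperBound<0>$ is $1$ on $\success$, $0$ on $\winning$, and $\leq 1$ elsewhere, so this holds initially; it is maintained by the recomputation $\upperBound<k>\coloneqq 1-\val[\safe:\strategy[\safe]<k>](\Box\fail)$ because $\strategy[\safe]<k>$ is assembled (from $\strategy[\safe]<k-1>$ and its re-derivation inside \MEC{}s in \procname{DEFLATE}) precisely so as to witness $\val[\safe:\strategy[\safe]<k>](\Box\fail)\geq 1-\upperBound<k-1>$; and it is maintained by \procname{DEFLATE} because $\bestExit<\upperBound<k>>(\endComponent)$ over-approximates $\val(\Diamond\success)(\state)$ for every $\state\in\endComponent$ --- a play confined to $\endComponent$ never reaches $\success$, so a state of $\endComponent$ can do no better than the reachability value attainable via a strategy that must eventually leave, which is exactly what $\preop<\exit[\reach]<\endComponent>>$ computes --- the same argument already used in the proof of \Fref{theo:upperBoundOptimal}. \emph{(ii) Domination lemma}: let $V_k$ denote the upper bound after $k$ iterations of \Fref{alg:boundedValueIteration} started from the valuation that is $1$ off $\winning$; then $\upperBound<k>\leq V_k$ for all $k$. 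This is proved by induction on $k$: the base case is $\upperBound<0>=1-\val[\safe:\strategy[\safe]<0>](\Box\fail)\leq 1=V_0$ off $\winning$, both valuations vanishing on $\winning$; the step uses that $\val[\safe:\strategy[\safe]<k+1>](\Box\fail)$ is a \emph{complete} evaluation of the improved $\safe$-strategy and hence dominates one application of $\preop[\safe]$ to $1-\upperBound<k>$ followed by the safety-side deflation of \Fref{alg:inflating}, together with monotonicity of $\preop$ and of \procname{DEFLATE} (the latter being the monotonicity established in the proof of \Fref{theo:upperBoundOptimal}). Combining (i), (ii) and part~2 of \Fref{theo:upperBoundOptimal}, i.e.\ $V_k\to\val(\Diamond\success)$, yields $\val(\Diamond\success)\leq\upperBound<k>\leq V_k\to\val(\Diamond\success)$, hence $\upperBound<k>\to\val(\Diamond\success)$ and $\val[\safe:\strategy[\safe]<k>](\Box\fail)\to\val(\Box\fail)$.

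It remains to handle the limit strategy. The $\strategy[\safe]<k>$ are optimal solutions of one-shot matrix games whose payoff matrices converge together with $\upperBound<k>$, so they converge to some $\strategy[\safe]<\ast>$ (a limit point suffices for what follows, the full sequence converging because each state's decision is eventually frozen). One must not conclude $\val[\safe:\strategy[\safe]<\ast>](\Box\fail)=\lim_k\val[\safe:\strategy[\safe]<k>](\Box\fail)$ by plain continuity: as the example behind \Fref{fig:problemECs} shows, a strategy that leaves an end component only with probabilities tending to $0$ may have a strictly smaller safety value in the limit. The resolution is that $\strategy[\safe]\mapsto\val[\safe:\strategy[\safe]](\Box\fail)$ is \emph{upper} semicontinuous --- it is an infimum over $\reach$-strategies of Markov-chain safety probabilities, and each such probability is an infimum over the horizon of polynomials in the transition probabilities, hence upper semicontinuous in $\strategy[\safe]$ --- so $\val[\safe:\strategy[\safe]<\ast>](\Box\fail)\geq\limsup_k\val[\safe:\strategy[\safe]<k>](\Box\fail)=\val(\Box\fail)$, while $\val[\safe:\strategy[\safe]<\ast>](\Box\fail)\leq\val(\Box\fail)$ is immediate; equality follows.

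The main obstacle is the inductive step of the domination lemma when non-trivial end components are present: one must compare the strategy-iteration update --- which rebuilds the $\safe$-strategy inside each \MEC{} as the best response to a \emph{leaving} $\reach$-strategy and then takes that strategy's true value --- with the value-iteration update, which merely deflates a valuation. Verifying that the rebuilt strategy exactly attains, and does not overshoot, the deflated estimate is precisely where one has to control the pathology (illustrated around \Fref{fig:problemECs}) that an extremum over strategies leaving an end component with positive probability may be realized only in the degenerate, non-leaving limit; this is also why the passage to the limit strategy needs upper semicontinuity rather than continuity.
\end{proofoutline}
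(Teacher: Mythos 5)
Your proposal is correct and takes essentially the same route as the paper: both arguments squeeze the strategy-iteration bound between the true value and the bounded-value-iteration sequence of \Fref{alg:boundedValueIteration}, relying on monotonicity of $\preop$ and \procname{DEFLATE} and on \Fref{theo:upperBoundOptimal}, in the style of Algorithm~1 of \cite{Chatterjee2013}; the paper merely phrases the squeeze on the safety side via the duality lemmas \Fref{lem:indedual} and \Fref{lem:preopdual} (introducing \procname{INFLATE} as the dual of \procname{DEFLATE}), whereas you work directly with the reachability complement, which is the same argument up to complementation. Your upper-semicontinuity treatment of the passage from $\lim_{k}\val[\safe:\strategy[\safe]<k>](\Box\fail)=\val(\Box\fail)$ to the value of the limit strategy $\strategy[\safe]<\ast>$ is a sound and welcome explicit addition to a step the paper's proof leaves implicit, while, like the paper, you state rather than fully verify the crux that the rebuilt $\safe$-strategy indeed attains the deflated estimate.
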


\emph{Proof Sketch.} We prove \Fref{theo:siOptimal} by an induction over $k \in \NN$, which shows that $1-\upperBound<k> \leq \valuation<k> \leq \val(\Box \fail)$, where the last inequality trivially holds since no strategy can provide a better value than the actual value. For the proof, we use that \procname{DEFLATE} is monotone and that all updates to $\strategy[\safe]<k>$ just happen to indeed provide the valuation~$\valuation<k>$.  

Since the upper bound is computed as the complement to the safe value, the above theorem implies that the upper bound converges to the true value.

\begin{example}
    \label{ex:runningExampleSI}

\begin{table}[ht!]
  \centering\footnotesize
  \begin{tabular}[ht!]{c|cccccc}
    $k$ $\backslash$ state & $\state[0]$ & $\state[4]$ \\\hline
    0      & c: 0.5 & c: 0.5 \\
    1      & c: $0.4$ & c: 0 \\
    2      & c: $0.41$ & c: 0 \\
  \end{tabular}
  \hfill
   \begin{tabular}[ht!]{c|cccccc}
    $k$ $\backslash$ state & $\state[0]$ & $\state[1]$ & $\state[2]$ & $\state[3]$ & $\state[4]$ & $\state[5]$\\\hline
    0      & 0.33 & 0 & 1 & 0.33 & 0.37   & 0.40 \\
    1      & 0.40 & 0 & 1 & 0.40 & 0.40   & 0.40\\
    2      & 0.41 & 0 & 1 & 0.40 & 0.40 & 0.40\\
  \end{tabular}

  \centering\footnotesize
  \begin{tabular}[ht!]{c|cccccc}
    $k$ $\backslash$ state & $\state[0]$ & $\state[3]$ \\\hline
    0      & a: 0.5 & a: 0.5 \\
    \procname{DEFLATE} & a: 0.5 & a: 1 \\
    1      & a: $0.57$ & a: 1 \\
    2      & a: $0.58$ & a: 1 \\
  \end{tabular}
  \hfill
   \begin{tabular}[ht!]{c|cccccc}
    $k$ $\backslash$ state & $\state[0]$ & $\state[1]$ & $\state[2]$ & $\state[3]$ & $\state[4]$ & $\state[5]$\\\hline
    0      & 0.50 & 0 & 1 & 0.50 & 0.50   & 0.40 \\
    \procname{DEFLATE}     & 0.50 & 0 & 1 & 0.40 & 0.40   & 0.40 \\
    1      & 0.43 & 0 & 1 & 0.40 & 0.40   & 0.40\\
    2      & 0.42 & 0 & 1 & 0.40 & 0.40 & 0.40\\
  \end{tabular}

  \caption{Strategy iteration for player $\reach$ and player $\safe$ on the top resp bottom. Strategies on the left, corresponding values on the right.}
  \label{tab:runningExampleSI}
\end{table}

    Consider again \Fref{fig:runningExample}. In \Fref{tab:runningExampleSI} we show the strategies and corresponding values for both players. The strategies are only given for the states where the choices are non-trivial for the respective player. Since each player has only two actions to choose from, we show the probability assignment for only one of the actions, from which the assignment for the other action is straighforward to compute. 

    In \cite{Chatterjee2012} the authors explain why strategy iteration from below for the safety player does not converge for this game.\footnote{Note that the game they present is not precisely the same. Concretely, on action pair $ad$ from $\state[0]$ we go to $\state[2]$ with probability $1/2$ whereas in \cite{Chatterjee2012} we go to $\state[1]$ instead. However, the one-shot matrix the authors present corresponds to our version rather than to theirs. Consequently, their argument applies to our game and does not apply to theirs.} If player $\reach$ plays $c$ from $\state[4]$, then for player $\safe$ it seems as if it did not make a difference whether to play $a$ or $b$ from $\state[3]$ as both seem to realize the same value---namely that of $\state[0]$. In fact, the best response for player $\safe$ at $\state[3]$ is to play $a$ in which case player $\reach$ would attain the value 0 when staying in the end component $\set{\state[3],\state[4]}$. When computing $\preop{}$ on the upper bound, this fact is not properly reflected and therefore strategy iteration for the safety player does not converge to the true value at states $\state[3]$ and $\state[4]$. In step $0'$, we correct this by calling \procname{DEFLATE} and thus taking into account only exiting strategies from $\state[4]$. Having done so, player $\safe$ realizes that the reasonable choice at $\state[3]$ is to play $a$ rather than $b$.
\end{example}


\section{Conclusion and Future Work}
\label{sec:conclusion}
We have provided the first stopping criterion for both value and
strategy iteration on concurrent games with reachability and safety
objectives as well as anytime algorithms with the bounds on the
current error. Since the games are concurrent and since ($\epsilon$-)optimal strategies may need to be
randomized, we could not use the technique of simple end
components of \cite{Kelmendi2018}. 
Instead, we iteratively update maximal end
components and deflate only those states, which can ensure the currently best
exiting combination of moves. 
We leave an efficient implmentation for future work, as an extension---similarly to \cite{Baier17,Kelmendi2018}---of the standard model checker \prism{}-games~\cite{Kwiatkowska2018}.



\newpage
\addcontentsline{toc}{chapter}{Bibliography}
\bibliography{Quellen,bibliography}

\newpage
\appendix
\section{Additional Notation}
A \emph{Markov decision process} is a special case of concurrent games
such that there exists a $p \in \set{\reach,\safe}$ such that for all
states~$\state \in \states$ holds $|\moveAssignment[p]|=1$ and a
\emph{Markov chain} is a special case of $\SG$ Markov decision
processes where for every $i \in \set{\reach,\safe}$ and for every
state~$\state$ holds $|\moveAssignment[i]|=1$.

\subsection{Generalised Notion of Expected Value}
We can compute the expected value for a given valuation~$\valuation$
and strategies $\strategy[\maximize]$ and $\strategy[\minimize]$ by

\begin{align}
  \preop[\strategy[\maximize],\strategy[\minimize]](\valuation)(\state)= &
  \sum_{\move[\maximize],\move[\minimize] \in \moves} \sum_{\state<\prime> \in
    \states} \valuation(\state<\prime>) \cdot
  \transitions(\state,\move[\maximize],\move[\minimize])(\state<\prime>) \cdot
  \strategy[\maximize](\state)(\move[\maximize]) \cdot
  \strategy[\minimize](\state)(\move[\minimize])
\end{align}

$\valuation(\state<\prime>)$ denotes the current estimate of the value
for state~$\state<\prime>$ and it is weighted by the probability to go
from state~$\state$ to state~$\state<\prime>$ given the
moves~$\move[\maximize]$ and~$\move[\minimize]$ and their probability
to be seen under strategies~$\strategy[\maximize]$ and
$\strategy[\minimize]$. This probability is computed by
$\transitions(\state,\move[\maximize],\move[\minimize])(\state<\prime>)
\cdot \strategy[\maximize](\state)(\move[\maximize]) \cdot
\strategy[\minimize](\state)(\move[\minimize])$. 
Please note that we deliberately avoid fixing players in the
definition of the pre-operator~$\preop$. Instead, we associate
$\maximize$ with the player, which tries to maximize the estimate, and
$\minimize$ with the player, which tries to minimize it. For
reachability, $\maximize$ will correspond to player~$\reach$
maximizing the reachability value, while for safety, it will
correspond to player~$\safe$ maximizing the safety value. Minimizing
and maximizing are added with supremum and infimum computations over
all strategies as follows:

\begin{align}
    \preop[\maximize:\strategy[\maximize]](\valuation)(\state)= & \inf_{\strategy[\minimize] \in \strategies[\minimize]} \preop[\strategy[\reach],\strategy[\safe]](\valuation)(\state) \\
    \preop[\maximize](\valuation)(\state)= & \sup_{\strategy[\maximize]\in\strategies[\maximize]} \preop[\maximize:\strategy[\maximize]](\valuation)(\state) \\
\end{align} 

Please note that $\strategies[\maximize]$ denotes the set of all
strategies for the maximizing player (and $\strategies[\minimize]$ the
set of strategies for the minimizing player). The computation of
$\preop[\maximize](\valuation)(\state)$ reduces to the solution of a
zero-sum one-shot matrix game and can be solved by linear
optimization. Optimal strategies in zero-sum one-shot games need
randomisation and we denote the strategy which can achieve the value
of $\preop[\maximize](\valuation)$ by
$\best[\maximize](\valuation)$.  We give an example of the payoff
matrix for a one-shot zero-sum game corresponding to
$\preop[\maximize](\valuation)(\state)$ in the next section.
 

\section{Correctness Proof for Value Iteration}
For this section, we fix a concurrent game $\exampleGame$. Moreover, we fix the following notation:
\[\best<\valuation>(\endComponent) \coloneqq \set{\state \in \endComponent \mid
  \preop[\reach]<\exit(\endComponent)>(\valuation)(\state) =
  \bestExit<\valuation>(\endComponent)}, \]
i.e. $\best<\valuation>(\endComponent)$ is the set of states in
$\endComponent$, which have an exiting strategy yielding the
value~$\bestExit<\valuation>(\endComponent)$.

In order to proof the correctnes of bounded value iteration, it
suffices to prove that the sequence $\upperBound<k>$ converges to the
actual value because the convergence of $\lowerBound<k>$ has already
been proven. That is also the reason why we can make use of the
following two claims:

\begin{itemize}
\item $\val(\Diamond \success)$ is a fixpoint of the operator
  $\preop[\reach]$, i.e. $\preop[\reach](\val(\Diamond \success)) =
  \val(\Diamond \success)$
\item The operators~$\preop$ and $\preop[\reach]$ are monotone, which
  we state in \Fref{lem:preopDecreases} for simpler usage.
\end{itemize}

We proceed in to steps. First, we prove that
$\upperBound<k>$ converges to a fixpoint. Afterwards, we will show
that this fixpoint coincides with the value.

\subsection{Convergence to a fixpoint $\upperBound<\star>$}
In order to prove the convergence to a fixpoint, we essentially need
to show that the sequence is bounded from below and monotonically
decreasing. First, we show that $\upperBound<k>$ is indeed an upper
bound of the value. The complex part of the proof is the correctness
of the \procname{DEFLATE}. In this procedure, we reduce the upper
bounds of some states to the best exit from a subset of states. We
start with following lemma, which states that decreasing the value
using \procname{DEFLATE} never decreases the current value below the
true value.

\begin{lemma}
  \label{lem:valBelowBestExit}
  For $\states<\prime> \subseteq \states \setminus (\success \cup
  \winning)$ and all states $\state \in \states<\prime>$ we have
  $\val(\Diamond \success)(\state) \leq \bestExit<\val(\Diamond
  \success)>(\states<\prime>)$.
\end{lemma}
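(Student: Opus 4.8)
\emph{Proof plan.} The plan is to first reduce the claim to the single inequality $\max_{\state \in \states<\prime>} \val(\Diamond \success)(\state) \le \bestExit<\val(\Diamond \success)>(\states<\prime>)$, which is equivalent to the lemma since the right-hand side does not depend on the individual state~$\state$. Write $w := \val(\Diamond \success)$ and recall that $w = \preop[\reach](w)$ and that $w$ equals $1$ on $\success$ and $0$ on $\winning$. Put $M := \max_{\state \in \states<\prime>} w(\state)$, fix a maximiser $\state<\star> \in \states<\prime>$, and let $B := \set{\state \in \states<\prime> \mid w(\state) = M}$. If $M \le \bestExit<w>(\states<\prime>)$ we are done, so assume $M > \bestExit<w>(\states<\prime>) \ge 0$; in particular $M > 0$. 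It then suffices to exhibit one state $\state \in \states<\prime>$ at which the $\reach$-operator restricted to exiting strategies (those using no staying move and leaving $\states<\prime>$ against every $\safe$-strategy) still attains value at least $M$ with respect to~$w$, since this forces $\bestExit<w>(\states<\prime>) \ge M$, a contradiction.

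Next I would analyse the one-shot matrix game at $\state<\star>$ whose entry for moves $\move[\reach] \in \moveAssignment[\reach](\state<\star>)$ and $\move[\safe] \in \moveAssignment[\safe](\state<\star>)$ is $\sum_{\state<\prime>} \transitions(\state<\star>,\move[\reach],\move[\safe])(\state<\prime>)\, w(\state<\prime>)$. Since $w(\state<\star>) = \preop[\reach](w)(\state<\star>)$ is the value of this finite game, player~$\reach$ has a mixed move $x$ guaranteeing at least $M$ against every $\safe$-move. The first sub-lemma I would prove is that $x$ may be chosen with no staying move in its support: if $\move[\reach]$ is staying then all of its successors lie in $\states<\prime>$, so the corresponding matrix entries are convex combinations of values $\le M$ and hence $\le M$; therefore deleting $\move[\reach]$ from the support of $x$ and renormalising still guarantees $\ge M$ --- here one uses that the guaranteed value is exactly $M$. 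Iterating yields an optimal $\reach$-move $x'$ supported on non-staying moves only, and the same construction works verbatim at any state of $B$.

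The main obstacle is to upgrade ``no staying move'' to a genuine exiting strategy, i.e.\ one that additionally forces the play out of $\states<\prime>$ against \emph{every} $\safe$-strategy --- precisely the concurrency phenomenon stressed in the introduction, where a move can be neither staying nor leaving and the relevant supremum may only be approached. Call a $\safe$-move \emph{sealing at $\state$} if against it \emph{every} $\reach$-move keeps the play inside $\states<\prime>$. If some $\state \in B$ has no sealing move, then for small $\delta > 0$ the perturbed move $(1-\delta) x' + \delta z$, with $z$ uniform over the non-staying moves at $\state$, is an exiting strategy (against a $\safe$-move either $x'$ already escapes, or that move is non-sealing and then some non-staying move --- hence $z$ --- escapes) and still guarantees $\ge M - \delta$; letting $\delta \to 0$ produces the required state. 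If instead every state of $B$ has a sealing move, I would derive a contradiction: a sealing move at $\state \in B$ guarantees player~$\safe$ one-step value $\le M$, so it is a one-step-optimal $\safe$-response, and any one-step-optimal $\reach$-reply against it sends the play into $B$ (all successors are in $\states<\prime>$ with value forced to~$M$); combining this with the existence of a memoryless optimal safety strategy for player~$\safe$~\cite{Parthasarathy1973}, which fixes one player and turns the game into a Markov decision process of value~$w$, yields a $\safe$-strategy under which, from $\state<\star>$, no $\reach$-strategy ever reaches $\success$, so $w(\state<\star>) = 0$, contradicting $\state<\star> \in \states<\prime> \subseteq \states \setminus \winning$. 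I expect this confinement step to be the delicate point, in particular excluding that player~$\reach$ escapes profitably through a suboptimal move; the natural remedy is to choose, at each state of $B$ and using convexity of the set of one-step-optimal $\safe$-replies, a $\safe$-reply that is simultaneously sealing and one-step-optimal.

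In either case we contradict the standing assumption $M > \bestExit<w>(\states<\prime>)$; therefore $\max_{\state \in \states<\prime>} w(\state) \le \bestExit<w>(\states<\prime>)$, which, recalling $w = \val(\Diamond \success)$, is exactly the assertion of the lemma.
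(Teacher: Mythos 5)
Your overall plan mirrors the paper's proof of this lemma: restrict attention to the set $B$ of maximal-value states of $\states<\prime>$, and either obtain at some such state an optimal mixed move from which staying moves can be stripped and renormalised, or argue that player~$\safe$ can seal the play inside $\states<\prime>$ and derive a contradiction with $\state<\star>\notin\winning$; the $\delta$-perturbation making the stripped move genuinely exiting is a nice extra. However, there is a genuine gap in your first sub-lemma and, consequently, in how branch~1 chooses its state. The claim that at \emph{any} state of $B$ an optimal mixed move can be chosen with no staying move in its support is false: the renormalisation is impossible when the staying moves exhaust the support, and there are states of $B$ whose only optimal moves are fully staying. Concretely (writing $w$ for $\val(\Diamond \success)$ as you do), let $\state\in B$ have moves $a,c$ against a single $\safe$-move, where $a$ goes with probability $1$ to another state $t$, and $c$ goes to a state outside $\states<\prime>$ of value $0$; let $t$ have a single move reaching $\success$ with probability $M>0$ and a $\winning$-sink otherwise, and take $\states<\prime>=\set{\state,t}$. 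Then $w(\state)=w(t)=M$, so $B=\set{\state,t}$, the move $a$ is staying, the unique optimal mixed move at $\state$ is the pure staying move $a$, yet $\state$ has no sealing $\safe$-move (because of $c$), and the exit-restricted value at $\state$ is $0<M$. Your branch~1 is triggered by ``some state of $B$ has no sealing move'' and then builds $(1-\delta)x'+\delta z$ at that state; at $\state$ the required $x'$ simply does not exist, and nothing in your case split redirects you to the state ($t$ here) where the construction does work. This is not a counterexample to the lemma, only to your intermediate claim, but as written the two branches do not exhaust the possibilities consistent with the assumption $M>\bestExit<\val(\Diamond\success)>(\states<\prime>)$: the unhandled configuration is a state of $B$ admitting an always-leaving but suboptimal strategy while all its optimal moves are fully staying. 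The paper sidesteps exactly this by conditioning its leaving case on the existence of an \emph{optimal} strategy whose successors leave $\states<\prime>$ against every $\safe$-strategy, so that its support necessarily contains a non-staying move to renormalise onto; you would have to add a comparable condition, or an argument that some state of $B$ satisfying it exists (e.g.\ by following the fully-staying optimal moves inside $B$).

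A smaller point: in branch~2 the conclusion ``no $\reach$-strategy ever reaches $\success$, so $w(\state<\star>)=0$'' is too strong, since sealing only confines the play to $\states<\prime>$ for one step from $B$-states and player~$\reach$ may still escape through $\states<\prime>\setminus B$. The repairable version is quantitative: with a $\safe$-strategy that is one-step optimal with respect to $w$ everywhere and sealing on $B$, player~$\reach$'s best response value is at most $w$, and any optimal response is itself one-step optimal, hence confined to $B$ from $\state<\star>$ and attains $0$ (alternatively, bound the value by $\max_{\state<\prime>\in\states<\prime>\setminus B} w(\state<\prime>)<M$), contradicting $w(\state<\star>)=M>0$. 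This is no worse than the paper's own terse treatment of the sealing case, but as stated the inference is not valid.
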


\begin{proof}
  Let $\anySet := \set{\state \in \states<\prime> \mid \val(\Diamond
    \success)(\state) = \max_{\state<\prime> \in \states<\prime>}
    \val(\Diamond \success)(\state<\prime>)}$. Consider the following
  cases:
  \begin{description}
  \item[First Case:] For some $\state \in \anySet$ there is an optimal
    strategy $\strategy[\reach] \in
    \strategies[\reach]<\exit(\states<\prime>)>(\state)$. Then,
    $\val(\Diamond \success)(\state<\prime>) \leq \val(\Diamond
    \success)(\state) = \preop[\reach:\strategy[\reach]](\val(\Diamond
    \success))(\state) \leq \bestExit<\val(\Diamond
    \success)>(\states<\prime>)$ for all $\state<\prime> \in
    \states<\prime>$.

  \item[Second Case:] There exists no $\state \in \anySet$ with an
    optimal strategy $\strategy[\reach] \in q
    \strategies[\reach]<\exit(\states<\prime>)>(\state)$. We
    distinguish two cases:
    \begin{description}
    \item[Case A:] Let for all $\state \in \anySet$ and
      $\strategy[\reach]$ there is a $\strategy[\safe]$ such that
      $\destination(\state, \strategy[\reach], \strategy[\safe])
      \subseteq \states<\prime>$, then due to the maximality of the
      values for states in $\anySet$ we must also have
      $\destination(\state, \strategy[\reach], \strategy[\safe])
      \subseteq \anySet$. Since none of the states chooses a leaving
      strategy then and $\states<\prime> \cap \success = \emptyset$,
      it then holds $\val(\Diamond \success)(\state) = 0$ for all
      $\state \in \anySet$, which is a contradiction to the assumption
      (especially to $\states<\prime> \cap \winning = \emptyset$.

    \item[Case B:] Let there be a $\state \in \anySet$ with an optimal
      strategy $\strategy[\reach]$, such that for all
      $\strategy[\safe]$ we have $\destination(\state,
      \strategy[\reach], \strategy[\safe]) \not\subseteq
      \states<\prime>$. Then, there must be moves $\move[\reach] \in
      \support(\strategy[\reach](\state))$ s.t. for all $\move[\safe]$
      we have $\destination(\state, \move[\reach], \move[\safe])
      \subseteq \states<\prime>$ since we would have
      $\strategy[\reach] \in
      \strategies[\reach]<\exit(\states<\prime>)>(\state)$ otherwise,
      which is not the case by assumption. Let $\mathcal{M} :=
      \set{\move[\reach] \in \moves[\reach] \mid \forall \move[\safe]
        \in \moves[\safe].\ \destination(\state, \move[\reach],
        \move[\safe]) \subseteq \states<\prime>}$ be a set of moves
      for player~$\reach$ such that they lead to states
      in~$\states<\prime>$ for all moves of player~$\safe$. We
      modify~$\strategy$ in such a way that moves in~$\mathcal{M}$ are
      taken with probability~$0$:

      \[
      \strategy[\reach]<\prime>(\state)(\move[\reach]) := \begin{cases}
        0 & \text{ if } \move[\reach] \in \mathcal{M} \\
        \strategy[\reach](\state)(\move[\reach]) / (1-x) & \text{ otherwise}
      \end{cases}
      \]
      
      where $x \coloneqq\sum_{\move[\reach] \in \mathcal{M}}
      \strategy[\reach](\state)(\move[\reach])$. The
      strategy~$\strategy<\prime>$ is well defined as
      
      \begin{equation*}
        \begin{split}
          \sum_{\move[\reach] \in \moves[\reach]} \strategy[\reach]<\prime>(\state)(\move[\reach]) & = \sum_{\move[\reach] \in \moves[\reach] \setminus \mathcal{M}} \strategy[\reach](\state)(\move[\reach]) / (1-x) \\
          & = 1/(1-x) \cdot \sum_{\move[\reach] \in \moves[\reach] \setminus \mathcal{M}} \strategy[\reach](\state)(\move[\reach]) \\
          & = 1/(1-x) \cdot (1-x) \\
          & = 1
        \end{split}
      \end{equation*}

      Then we have $\destination(\state, \strategy[\reach]<\prime>,
      \strategy[\safe]) \not\subseteq \states<\prime>$ and especially,
      for all $\move[\reach] \in
      \support(\strategy[\reach]<\prime>(\state))$, it holds for every
      $\move[\safe] \in \support(\strategy[\safe](\state))$ that
      $\destination(\state, \move[\reach], \move[\safe]) \not\subseteq
      \states<\prime>$. By definition, it thus holds
      $\strategy[\reach]<\prime> \in
      \strategies[\reach]<\exit(\states<\prime>)>(\state)$. Since the
      value of states in $\anySet$ is maximal among the states in
      $\states<\prime>$, the moves in $\mathcal{M}$ can yield at most
      the value of the states in $\anySet$. Since the remaining moves
      in~$\support(\strategy<\prime>(\state))$ have been in
      $\support(\strategy(\state))$ as well and thus, must at least
      guarantee the same value (we could define a better strategy by
      choosing only moves in~$\mathcal{M}$ otherwise), we have
      $\preop[\reach:\strategy[\reach]<\prime>](\val(\Diamond
      \success))(\state) =
      \preop[\reach:\strategy[\reach]](\val(\Diamond
      \success))(\state)$. Therefore, for all states $\state<\prime>
      \in \states<\prime>$ we have $\val(\Diamond
      \success)(\state<\prime>) \leq \val(\Diamond \success)(\state) =
      \preop[\reach:\strategy[\reach]<\prime>](\val(\Diamond
      \success))(\state) \leq \bestExit<\val(\Diamond
      \success)>(\states<\prime>)$.
    \end{description}
  \end{description}
\end{proof}

We first prove that the sequence $\upperBound<k>$ is bounded from
below by the true value for states in end components. Later, we also
show that is bounded for states that are in no end component at all.

For a valuation $\valuation$, denote $\onestep{D}<i>(\valuation)$, and
$\onestep{D}<i>(\endComponent)$ for the sequences of $\valuation$,
resp. $\endComponent$ during \procname{DEFLATE}. Then, we denote
$\onestep{D}<|\endComponent|>(\valuation)$ for $\valuation$ at the end
of \procname{DEFLATE}.

The next lemma shows that during no iteration of \procname{DEFLATE} he
current value is deflated below the true value.

\begin{lemma}
    \label{lem:updateMECCorrectUpperBound}
    Assume $\upperBound \geq \val(\Diamond \success)$. For a maximal
    end component $\endComponent \subseteq \states$, all states
    $\state \in \endComponent$ and $i \geq 0$, we have
    $\onestep{D}<i>(\upperBound)(\state) \geq \val(\Diamond
    \success)(\state)$.
\end{lemma}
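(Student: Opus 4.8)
The plan is to prove the statement by induction on $i$, but to strengthen the invariant so that $\onestep{D}<i>(\upperBound) \geq \val(\Diamond \success)$ is maintained on \emph{all} of $\states$, not merely on $\endComponent$. This strengthening is free: one iteration of \procname{DEFLATE} (\Fref{alg:deflate}) only overwrites the values of states in the current set $\anySet \subseteq \endComponent$, so every state outside $\endComponent$ always retains the value $\upperBound \geq \val(\Diamond\success)$; and the invariant on all of $\states$ will be needed to apply monotonicity to $\bestExit$ below. The base case $i = 0$ is then exactly the hypothesis $\onestep{D}<0>(\upperBound) = \upperBound \geq \val(\Diamond \success)$.

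For the inductive step I would write $\valuation \coloneqq \onestep{D}<i>(\upperBound)$ and $\anySet \coloneqq \onestep{D}<i>(\endComponent) \subseteq \endComponent$, and assume $\valuation \geq \val(\Diamond\success)$ on $\states$. Reading off \Fref{alg:deflate}, the $(i{+}1)$-st iteration leaves every value outside $\anySet$ unchanged (so these stay $\geq \val(\Diamond\success)$), and for $\state \in \anySet$ it replaces the value by $\min\big(\valuation(\state),\, \bestExit<\valuation>(\anySet)\big)$. Hence it suffices to show that each of the two arguments of this minimum is at least $\val(\Diamond\success)(\state)$; the first is immediate from the induction hypothesis, so the only real point is the lower bound on $\bestExit<\valuation>(\anySet)$.

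For that bound I would argue as follows. Recalling the standing convention of this section that the end components being deflated lie in $\states\setminus(\success\cup\winning)$ — which for non-trivial maximal end components is automatic, since $\success$ and $\winning$ are absorbing singletons — we have $\anySet \subseteq \endComponent \subseteq \states \setminus (\success \cup \winning)$, so \Fref{lem:valBelowBestExit} applied with $\states<\prime> = \anySet$ gives $\val(\Diamond\success)(\state) \leq \bestExit<\val(\Diamond\success)>(\anySet)$ for every $\state \in \anySet$ (this also certifies that $\bestExit<\cdot>(\anySet)$ is well defined, i.e.\ that some state of $\anySet$ admits a leaving strategy). Next, $\bestExit<\cdot>(\anySet) = \max_{\state<\prime> \in \anySet} \preop<\exit[\reach]<\anySet>>(\cdot)(\state<\prime>)$ is monotone in its valuation argument: $\preop[\strategy,\strategy*]$ is monotone in the valuation (it is a nonnegative linear combination of the values, cf.\ \Fref{lem:preopDecreases}), and both the $\sup$--$\inf$ over strategies defining $\preop<\exit[\reach]<\anySet>>$ and the outer $\max$ over states preserve monotonicity. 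Combined with the induction hypothesis $\valuation \geq \val(\Diamond\success)$, this yields $\bestExit<\val(\Diamond\success)>(\anySet) \leq \bestExit<\valuation>(\anySet)$, and hence $\val(\Diamond\success)(\state) \leq \bestExit<\valuation>(\anySet)$, as required. This closes the induction.

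The substantive work has already been done in \Fref{lem:valBelowBestExit}; everything here is bookkeeping, and the induction is short. The only place where carelessness could break the argument is the invariant: it must be carried on all of $\states$ rather than only on $\endComponent$, since otherwise monotonicity of $\bestExit$ in the valuation could not be invoked to pass from $\bestExit<\val(\Diamond\success)>(\anySet)$ to $\bestExit<\valuation>(\anySet)$.
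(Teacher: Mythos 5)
Your proof is correct and follows essentially the same route as the paper's: induction over the \procname{DEFLATE} iterations, splitting on whether the state lies in the current set, and bounding the $\bestExit$ term via \Fref{lem:valBelowBestExit}. The only difference is that you make explicit two points the paper leaves implicit --- carrying the invariant on all of $\states$ (since exits refer to values outside $\endComponent$, which \procname{DEFLATE} never modifies) and the monotonicity of $\bestExit$ in the valuation --- which is a welcome tightening rather than a different argument.
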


\begin{proof}
  We apply induction over $i$.
  \begin{description}
  \item[Induction Basis:] Let $i = 0$. Then,
    $\onestep{D}<0>(\upperBound)(\state) = \upperBound(\state) \geq
    \val(\Diamond \success)(\state)$ holds by assumption.

  \item[Induction Hypothesis:] For $i \geq 0$ and $\state
    \in \endComponent$ we have $\onestep{D}<i>(\upperBound)(\state)
    \geq \val(\Diamond \success)(\state)$.

  \item[Induction Step:] Consider $i + 1$.
    \begin{description}
    \item[Case A:] Let $\state \in \endComponent$. If $\state \not\in
      \onestep{D}<i>(\endComponent)$, then
      $\onestep{D}<i+1>(\upperBound)(\state) =
      \onestep{D}<i>(\upperBound)(\state) \overset{\text{I.H.}}{\geq}
      \val(\Diamond \success)(\state)$. The first equality holds
      because \procname{DEFLATE} does not affect states outside of
      $\onestep{D}<i>(\endComponent)$.
    \item[Case B:] Assume that $\state \in
      \onestep{D}<i>(\endComponent)$. Then,
      $\onestep{D}<i+1>(\upperBound)(\state) =
      \min(\onestep{D}<i>(\upperBound)(s),
      \bestExit<\onestep{D}<i>(\upperBound)>(\onestep{D}<i>(\endComponent)))
      \overset{\text{I.H.}}{\geq} \min(\val(\Diamond
      \success)(\state), \bestExit<\val(\Diamond
      \success)>(\onestep{D}<i>(\endComponent)))$. By
      \Fref{lem:valBelowBestExit} we have $\bestExit<\val(\Diamond
      \success)>(\onestep{D}<i>(\endComponent)) \geq \val(\Diamond
      \success)(\state)$. Therefore, we can conclude that
      $\onestep{D}<i+1>(\upperBound)(\state) \geq \val(\Diamond
      \success)(\state)$.
    \end{description}
  \end{description}
\end{proof}

Having proven \Fref{lem:updateMECCorrectUpperBound}, it is a matter of
a simple induction to show the overall boundedness.

\begin{lemma}
  \label{lem:upperBoundCorrectness}
  For all $k \geq 0$ we have $\upperBound<k>(\state) \geq
  \val(\Diamond \success)(\state)$ for all $\state \in \states
  \setminus (\success \cup \winning)$.
\end{lemma}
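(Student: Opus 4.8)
The plan is to prove the statement by induction on $k \in \NN$, and it is convenient to establish the slightly stronger claim that $\upperBound<k>(\state) \geq \val(\Diamond \success)(\state)$ holds for \emph{every} $\state \in \states$: on $\success$ and $\winning$, which are absorbing and are not touched by \procname{DEFLATE} (only maximal end components outside $\success \cup \winning$ are deflated), both sides stay constantly $1$, resp.\ $0$, so the inequality is trivially an equality there; restricting the stronger claim to $\states \setminus (\success \cup \winning)$ then yields exactly \Fref{lem:upperBoundCorrectness}. For the base case $k = 0$, by definition $\upperBound<0>(\state) = 0$ if $\state \in \winning$, in which case $\val(\Diamond \success)(\state) = 0$, and $\upperBound<0>(\state) = 1$ otherwise, in which case $\val(\Diamond \success)(\state) \leq 1$ because it is a probability.

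For the inductive step I would trace the two updates that \Fref{alg:boundedValueIteration} performs to turn $\upperBound<k>$ into $\upperBound<k+1>$. First it applies $\preop[\reach]$: since $\val(\Diamond \success)$ is a fixpoint of $\preop[\reach]$ and $\preop[\reach]$ is monotone (\Fref{lem:preopDecreases}), the induction hypothesis $\upperBound<k> \geq \val(\Diamond \success)$ gives $\preop[\reach](\upperBound<k>) \geq \preop[\reach](\val(\Diamond \success)) = \val(\Diamond \success)$. Second, it invokes \procname{DEFLATE} once for each maximal end component $\endComponent \in \mathcal{M}$. Each such call changes the current valuation only on states of $\endComponent$, and by \Fref{lem:updateMECCorrectUpperBound} — whose hypothesis $\upperBound \geq \val(\Diamond \success)$ is precisely what we just established and is preserved throughout the call — the deflated valuation still satisfies $\onestep{D}<|\endComponent|>(\upperBound)(\state) \geq \val(\Diamond \success)(\state)$ for all $\state \in \endComponent$; on every state outside $\endComponent$ it is unchanged and hence still $\geq \val(\Diamond \success)$. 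Because distinct maximal end components are pairwise disjoint, processing them one after another cannot undo the bound re-established on a previously processed component, so after the whole loop over $\mathcal{M}$ the resulting valuation $\upperBound<k+1>$ satisfies $\upperBound<k+1>(\state) \geq \val(\Diamond \success)(\state)$ for all $\state \in \states$, closing the induction.

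The only points I would spell out with care are (i) that the precondition of \Fref{lem:updateMECCorrectUpperBound} is actually met at the moment each \procname{DEFLATE} call is issued — this follows from the monotonicity-plus-fixpoint argument together with the observation that earlier \procname{DEFLATE} calls only lowered values \emph{down to}, but never below, $\val(\Diamond \success)$ — and (ii) the disjointness of the maximal end components, which makes the \texttt{for}-loop a parallel application of \procname{DEFLATE} to pairwise non-interacting regions. Neither is a genuine obstacle: essentially all the substantive work has already been done in \Fref{lem:valBelowBestExit} and \Fref{lem:updateMECCorrectUpperBound}, and this lemma is the bookkeeping that lifts their conclusions from a single $\preop[\reach]$ step and a single \procname{DEFLATE} call to the whole iteration.
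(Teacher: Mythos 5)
Your proposal is correct and follows essentially the same route as the paper's proof: induction on $k$, using monotonicity of $\preop[\reach]$ together with the fixpoint property of $\val(\Diamond \success)$ for the Bellman step, and \Fref{lem:updateMECCorrectUpperBound} for the \procname{DEFLATE} calls. The extra bookkeeping you add (handling $\success \cup \winning$ explicitly and noting disjointness of the maximal end components across the loop) only makes explicit what the paper leaves implicit.
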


\begin{proof}
  We apply induction over $k$.
  \begin{description}
  \item[Induction Basis:] Let $k = 0$. We have $\upperBound<0>(\state)
    = 1 \geq \val(\Diamond \success)(\state)$ for all $\state \in
    \states \setminus (\success \cup \winning)$.
            
  \item[Induction Hypothesis:] Assume that $\upperBound<k>(\state)
    \geq \val(\Diamond \success)(\state)$ holds for all $\state \in
    \states \setminus (\success \cup \winning)$.
            
  \item[Induction Step:] If there is no end component $\endComponent$
    with $\state \in \endComponent$, then $\upperBound<k+1>(\state) =
    \preop[\reach](\upperBound<k>)(\state) \overset{\text{I.H.}}{\geq}
    \preop[\reach](\val(\Diamond \success))(\state) = \val(\Diamond
    \success)(\state)$. If $\state \in \endComponent$ for a maximal
    end component $\endComponent$, then $\upperBound<k+1>(\state) =
    \onestep{D}<|\endComponent|>(\preop[\reach](\upperBound<k>))(\state)$. With
    the induction hypothesis and monotonicty of $\preop[\reach]$ the
    claim follows from \Fref{lem:updateMECCorrectUpperBound}.
  \end{description}
\end{proof}

Now that we have shown that the upper bound is indeed a correct upper
bound of the value, we are ready to move our attention to
monotonicity. This is one of the most important statements for the
correctness and the proof is more involved than those of most of the
others, which is the reason why we have split it across several
lemmas. The next lemma intuitively shows that deflating is
order-preserving in the sense that when applied to two
valuations~$\upperBound[1], \upperBound[2]$ and $\upperBound[2] \leq
\upperBound[1]$, then also
$\onestep{D}<|\endComponent|>(\upperBound[2]) \leq
\onestep{D}<|\endComponent|>(\upperBound[1])$. However, we prove a
slightly more general version of monotonicity in
\Fref{lem:deflateMonotone}.

\begin{lemma}
  \label{lem:deflateMonotone}
  Let $\upperBound[1], \upperBound[2]$ be two valuations and
  $\endComponent$ an end component such that the following holds:
  
  \begin{itemize}
  \item $\preop[\reach](\upperBound[1]) \leq \upperBound[1]$ and
  \item $\upperBound[2] \leq
    \preop[\reach](\onestep{D}<|\endComponent|>(\upperBound[1]))$.
  \end{itemize}
  
  Then, $\onestep{D}<|\endComponent|>(\upperBound[2]) \leq
  \onestep{D}<|\endComponent|>(\upperBound[1])$.
\end{lemma}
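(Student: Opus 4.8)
\emph{Proof plan.}
The plan is to run \procname{DEFLATE}$(\upperBound[1],\endComponent)$ and \procname{DEFLATE}$(\upperBound[2],\endComponent)$ in parallel and to compare them iteration by iteration. First observe that the two hypotheses already imply $\upperBound[2]\le\upperBound[1]$: since \procname{DEFLATE} only ever decreases a valuation we have $\onestep{D}<|\endComponent|>(\upperBound[1])\le\upperBound[1]$, and hence, by monotonicity of $\preop[\reach]$ (\Fref{lem:preopDecreases}) together with the first hypothesis, $\upperBound[2]\le\preop[\reach](\onestep{D}<|\endComponent|>(\upperBound[1]))\le\preop[\reach](\upperBound[1])\le\upperBound[1]$. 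The substance of the lemma is therefore that a single call to \procname{DEFLATE} is order preserving, the second hypothesis supplying the extra structure that lets the comparison go through.

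Write $\onestep{D}<i>(\upperBound[j])$ for the valuation after $i$ iterations of \procname{DEFLATE}$(\upperBound[j],\endComponent)$ and $\endComponent[i]<j>$ for the corresponding set of states not yet removed, so $\endComponent[0]<j>=\endComponent$; since every removed block $\attractor(\best<\valuation>(\anySet))$ is a nonempty attractor, both runs terminate within $|\endComponent|$ iterations. The plan is to prove $\onestep{D}<i>(\upperBound[2])\le\onestep{D}<i>(\upperBound[1])$ for all $i$ by induction on $i$. The inductive step is routine \emph{as long as the two runs carry the same remaining set} $\endComponent[i]<1>=\endComponent[i]<2>$ (call it $\anySet$): the operator $\valuation\mapsto\preop[\reach]<\exit(\anySet)>(\valuation)(\state)$ is monotone (it is a $\sup$/$\inf$ of non-negative linear combinations of $\valuation$), hence so is $\valuation\mapsto\bestExit<\valuation>(\anySet)$; since one \procname{DEFLATE} iteration replaces the valuation on $\anySet$ by its pointwise minimum with this number and leaves the rest of $\endComponent$ untouched, the induction hypothesis is preserved.

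The genuine obstacle is that the two runs may \emph{desynchronise}: the block removed in an iteration is $\attractor(\best<\valuation>(\anySet))$ with $\best<\valuation>(\anySet)=\set{\state\in\anySet\mid\preop[\reach]<\exit(\anySet)>(\valuation)(\state)=\bestExit<\valuation>(\anySet)}$, and lowering the valuation may change which states attain the maximal exit value, hence the attractor removed. This is exactly where the hypothesis $\preop[\reach](\upperBound[1])\le\upperBound[1]$ is used: I would strengthen the induction by also carrying the invariant that every valuation $\valuation$ occurring in the $\upperBound[1]$-run still satisfies $\preop[\reach](\valuation)\le\valuation$ (base case: the first hypothesis). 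The crucial, and hardest, point is that this pre-fixpoint property survives one \procname{DEFLATE} iteration: capping the states of $\endComponent[i]<1>$ at the value $c\coloneqq\bestExit<\onestep{D}<i>(\upperBound[1])>(\endComponent[i]<1>)$ must not produce a state whose one-step $\preop[\reach]$-value with respect to the new valuation exceeds its new value; in a concurrent game a move may be neither staying nor leaving, so one has to split an optimal strategy from such a state into its surely-staying part — whose successors lie in $\endComponent[i]<1>$ and are now at most $c$ — and the remainder, which is bounded using $\preop[\reach]<\exit(\endComponent[i]<1>)>(\onestep{D}<i>(\upperBound[1]))\le c$ together with the pre-fixpoint property. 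Granting the invariant, a state survives in the $\upperBound[1]$-run only while its value is still at least the running best exit, so at every iteration in which it survives its value is actually brought down to that iteration's $c$; pairing this with $\onestep{D}<i>(\upperBound[2])\le\onestep{D}<i>(\upperBound[1])$ and the monotonicity of the best exits in the valuation, one concludes that the value finally assigned to any $\state$ by \procname{DEFLATE}$(\upperBound[2],\endComponent)$ is at most one of the caps that \procname{DEFLATE}$(\upperBound[1],\endComponent)$ also imposes on $\state$. Since $\onestep{D}<|\endComponent|>(\upperBound[1])(\state)$ is precisely the minimum of $\upperBound[1](\state)$ with all such caps, this yields $\onestep{D}<|\endComponent|>(\upperBound[2])(\state)\le\onestep{D}<|\endComponent|>(\upperBound[1])(\state)$ for $\state\in\endComponent$; for $\state\notin\endComponent$ both sides equal the respective input valuation, and $\upperBound[2]\le\upperBound[1]$ closes the argument. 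I expect the preservation of the pre-fixpoint invariant under capping — the move-classification argument — to be the main technical hurdle.
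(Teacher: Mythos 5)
You have missed the observation that makes this lemma short, and the route you take instead creates exactly the difficulties that the paper's hypotheses were designed to avoid. Since \procname{DEFLATE} only ever decreases a valuation, you immediately have $\onestep{D}<|\endComponent|>(\upperBound[2]) \leq \upperBound[2]$, so the whole claim reduces to $\upperBound[2] \leq \onestep{D}<|\endComponent|>(\upperBound[1])$ --- a statement about the run on $\upperBound[1]$ \emph{alone}. This is what the second hypothesis is tailored to: one proves $\upperBound[2] \leq \onestep{D}<i>(\upperBound[1])$ by induction on $i$, where the only interesting case is a state $\state$ that gets capped at $c = \bestExit<\onestep{D}<i>(\upperBound[1])>(\onestep{D}<i>(\endComponent))$, and there the chain $c \geq \preop[\reach](\onestep{D}<i+1>(\upperBound[1]))(\state) \geq \preop[\reach](\onestep{D}<|\endComponent|>(\upperBound[1]))(\state) \geq \upperBound[2](\state)$ closes the step, the last two inequalities being monotonicity of $\preop[\reach]$ and the second hypothesis. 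You use the decreasing property of \procname{DEFLATE} only to derive $\upperBound[2] \leq \upperBound[1]$ and then compare the two \procname{DEFLATE} runs in parallel; the desynchronisation problem you then have to fight never arises in the paper's argument, because the run on $\upperBound[2]$ is never examined at all.

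Moreover, as submitted your plan does not close on its own terms. The preservation of the pre-fixpoint invariant under capping is only sketched and explicitly deferred as ``the main technical hurdle'', and the final assembly is not sound as stated: once the runs desynchronise, the caps imposed by the two runs are best exits of \emph{different} subsets of $\endComponent$, and these are not comparable in the direction you need --- shrinking the subset enlarges the set of exiting strategies and can only increase exit values (\Fref{lem:exitSubsetYieldsLargerValue}), while the maximum ranges over different states --- so the assertion that the value finally assigned to $\state$ by the $\upperBound[2]$-run is bounded by one of the caps the $\upperBound[1]$-run also imposes on $\state$ does not follow from what you have established. (Also, a state surviving an iteration is merely capped at that iteration's best exit, not ``brought down to'' it.) Your strategy-splitting idea is not wasted: an inequality of exactly that nature --- that capping the current subset at its best exit is not undone by $\preop[\reach]$ at the capped state --- is the one delicate step, marked $\ast$, in the paper's induction; but there it is the single delicate point along the $\upperBound[1]$-run, rather than one of several unproven steps inside a two-run comparison.
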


\begin{proof}
  Evidently, \procname{DEFLATE} can only decrease the valuation so
  $\onestep{D}<|\endComponent|>(\upperBound[2]) \leq
  \upperBound[2]$. Hence, it suffices to show that $\upperBound[2]
  \leq \onestep{D}<|\endComponent|>(\upperBound[1])$. We show by
  induction over $i$ that $\upperBound[2] \leq
  \onestep{D}<i>(\upperBound[1])$ for all $i$.

  \begin{description}
  \item[Base Case:] Let $i=0$. It holds
    $\onestep{D}<0>(\upperBound[1]) = \upperBound[1] \geq
    \preop[\reach](\upperBound[1]) \geq
    \preop[\reach](\onestep{D}<|\endComponent|>(\upperBound[1])) \geq
    \upperBound[2]$.

  \item[Induction Hypothesis:] Assume that $\upperBound[2] \leq
    \onestep{D}<i>(\upperBound[1])$.

  \item[Induction Step:] Consider $i+1$.
    \begin{description}
    \item[Case A:] First, assume that $\state \not\in
      \onestep{D}<i>(\endComponent)$. In that case
      $\onestep{D}<i+1>(\upperBound[1])(\state) =
      \onestep{D}<i>(\upperBound[1])(\state) \overset{I.H.}{\geq}
      \upperBound[2](\state)$.
    \item[Case B:] Now, assume that $\state \in
      \onestep{D}<i>(\endComponent)$. Then,
      \[
      \onestep{D}<i+1>(\upperBound[1])(\state) =
      \min(\onestep{D}<i>(\upperBound[1])(\state),
      \bestExit<\onestep{D}<i>(\upperBound[1])>(\onestep{D}<i>(\endComponent))).
      \]
      \begin{description}
      \item[Case 1:] If $\onestep{D}<i+1>(\upperBound[1])(\state) =
        \onestep{D}<i>(\upperBound[1])(\state)$, then the claim again
        follows immediately from the induction hypothesis.
      \item[Case 2:] Otherwise, observe that for all $\state<\prime>
        \in \onestep{D}<i>(\endComponent)$ we have
        $\onestep{D}<i+1>(\upperBound[1])(\state<\prime>) \leq
        \bestExit<\onestep{D}<i>(\upperBound[1])>(\onestep{D}<i>(\endComponent))$
        due to the above (minimizing) update, but
        $\onestep{D}<i+1>(\upperBound[1])(\state) =
        \bestExit<\onestep{D}<i>(\upperBound[1])>(\onestep{D}<i>(\endComponent))$
        Therefore,
        \begin{equation*}
          \begin{split}
            \onestep{D}<i+1>(\upperBound[1])(\state) & = \bestExit<\onestep{D}<i>(\upperBound[1])>(\onestep{D}<i>(\endComponent)) \\
            & \overset{\ast}{\geq} \preop[\reach](\onestep{D}<i+1>(\upperBound[1]))(\state) \\
            & \geq \preop[\reach](\onestep{D}<|\endComponent|>(\upperBound[1]))(\state) \\
            & \geq \upperBound[2](\state).
          \end{split}
        \end{equation*}
        The step labeled with~$\ast$ holds due to our assumptions as
        well as the monotonicity of $\preop$.
      \end{description}
    \end{description}

  \end{description}
\end{proof}

\begin{lemma}
    \label{lem:preopDecreases}
    For $k \in \NN$, we have $\preop[\reach](\upperBound<k>) \leq \upperBound<k>$.
\end{lemma}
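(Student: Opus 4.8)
The plan is an induction on $k$, whose inductive step reduces to the statement that a single call of \procname{DEFLATE} turns a valuation $\valuation$ with $\preop[\reach](\valuation)\le\valuation$ into another valuation with the same property. Throughout I use the (elementary) monotonicity of $\preop[\reach]$ and of the restricted operator $\preop<\exit[\reach]<\anySet>>$ in the valuation argument, and the fact that \procname{DEFLATE} never raises any value. For the base case $k=0$: on $\winning$ (a singleton and absorbing) both $\upperBound<0>$ and $\preop[\reach](\upperBound<0>)$ equal $0$; on $\success$ (absorbing) both equal $1$; and at every other state $\preop[\reach](\upperBound<0>)\le 1=\upperBound<0>$ because $\preop[\reach]$ only forms convex combinations of values that are at most $1$.

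For the inductive step, $\upperBound<k+1>$ arises from $\upperBound<k>$ by applying $\preop[\reach]$ and then calling \procname{DEFLATE} once for each maximal end component. By the induction hypothesis $\preop[\reach](\upperBound<k>)\le\upperBound<k>$, so monotonicity gives $\preop[\reach](\preop[\reach](\upperBound<k>))\le\preop[\reach](\upperBound<k>)$, and hence the valuation handed to the first \procname{DEFLATE} call already satisfies $\preop[\reach](\valuation)\le\valuation$. It therefore suffices to prove the claim: if $\preop[\reach](\valuation)\le\valuation$ and $\endComponent$ is a maximal end component, then $\preop[\reach](\onestep{D}<|\endComponent|>(\valuation))\le\onestep{D}<|\endComponent|>(\valuation)$. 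I would establish this by a nested induction on the \procname{DEFLATE}-loop counter $i$, showing that every intermediate valuation $\onestep{D}<i>(\valuation)$ again satisfies $\preop[\reach](\onestep{D}<i>(\valuation))\le\onestep{D}<i>(\valuation)$.

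In this nested induction a state outside the current working set $\onestep{D}<i>(\endComponent)$, and a state inside it whose value was not lowered during iteration $i$, are handled at once by the nested induction hypothesis and monotonicity of $\preop[\reach]$. The substantive case is a state $\state$ whose value in iteration $i$ is reset to the best exit $b:=\bestExit<\onestep{D}<i-1>(\valuation)>(\anySet)$ of the working set $\anySet:=\onestep{D}<i-1>(\endComponent)$ at the start of that iteration; one must show $\preop[\reach](\onestep{D}<i>(\valuation))(\state)\le b$. After iteration $i$ every state of $\anySet$ has value at most $b$, so I would fix an arbitrary player-$\reach$ strategy at $\state$: if it keeps all probability mass inside $\anySet$, the value it guarantees in one step against $\onestep{D}<i>(\valuation)$ is at most $b$; otherwise it plays a move that is not staying, and I delete from its support all moves that are staying for $\anySet$ and renormalise, exactly as in Case~B of the proof of \Fref{lem:valBelowBestExit}. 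A short computation — using that each deleted, staying move can itself guarantee at most $b$, since its successors lie in $\anySet$ — shows that the original strategy's one-step guarantee is bounded by the maximum of $b$ and the modified strategy's one-step guarantee against $\onestep{D}<i>(\valuation)$. The modified strategy is either a genuine exiting strategy for $\anySet$, so its guarantee is at most $\preop<\exit[\reach]<\anySet>>(\onestep{D}<i>(\valuation))(\state)\le\preop<\exit[\reach]<\anySet>>(\onestep{D}<i-1>(\valuation))(\state)\le b$ by monotonicity and the definition of $b$, or player~$\safe$ can still trap all mass inside $\anySet$, whence its guarantee is again at most $b$. In either case the bound is $b$, so $\preop[\reach](\onestep{D}<i>(\valuation))(\state)\le b=\onestep{D}<i>(\valuation)(\state)$.

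The main obstacle is precisely this last case, which is where concurrency bites: a single move can be neither staying nor leaving, the supremum defining $\preop<\exit[\reach]<\anySet>>$ need not be attained, and so one cannot simply restrict player~$\reach$ to exiting strategies when evaluating $\preop[\reach]$. The remedy is the renormalisation surgery already used for the true value in \Fref{lem:valBelowBestExit}, now applied to the intermediate deflated valuations; the two points needing care are that this surgery does not decrease player~$\reach$'s one-step guarantee against $\onestep{D}<i>(\valuation)$ and that the surgically modified strategy really lies in the exiting-strategy set for $\anySet$ whenever player~$\safe$ cannot trap the play. Once this claim is in place, both nested inductions close and the lemma follows; everything else is routine bookkeeping with the monotonicity of $\preop[\reach]$ and of $\preop<\exit[\reach]<\endComponent>>$.
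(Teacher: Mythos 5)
Your proof is correct, but it takes a genuinely different route from the paper, because the paper does not actually prove \Fref{lem:preopDecreases}: its proof is the bare pointer ``In \cite{Chatterjee2013}.'' A result of that kind applies to the plain iteration $\upperBound<k+1> = \preop[\reach](\upperBound<k>)$, whereas in \Fref{alg:boundedValueIteration} the iterates $\upperBound<k>$ are interleaved with \procname{DEFLATE}, so the lemma in the form in which it is used in \Fref{lem:upperBoundMonotone} needs exactly the additional fact you prove, namely that one \procname{DEFLATE} pass preserves the invariant $\preop[\reach](\valuation) \le \valuation$. Your nested induction over the \procname{DEFLATE} loop is sound: states whose value is not lowered are handled by the inner hypothesis plus monotonicity, and for a state $\state$ of the current working set $\anySet$ that is capped at the best exit $b$, the pruning-and-renormalisation surgery (Case~B of the proof of \Fref{lem:valBelowBestExit}, applied to the intermediate valuation instead of the true value) shows, for each fixed response of player~$\safe$, that the one-step value of an arbitrary strategy is a convex combination of the contributions of its staying moves, each at most $b$ because their successors lie in $\anySet$ and have just been capped at $b$, and of the one-step value of the pruned strategy; the latter is at most $b$ either because the pruned strategy lies in $\strategies[\reach]<\exit(\anySet)>(\state)$, so that monotonicity of the exit-restricted operator in the valuation and the definition of the best exit over $\anySet$ bound it by $b$, or because player~$\safe$ can keep its mass inside $\anySet$. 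Hence $\preop[\reach](\onestep{D}<i>(\valuation))(\state) \le b$, both inductions close, and the outer step treats the maximal end components one after the other. What your argument buys is a self-contained proof of the lemma for the deflated iterates, using only monotonicity of $\preop[\reach]$ and the structure of \procname{DEFLATE} and with no appeal to \Fref{lem:deflateMonotone} or \Fref{lem:upperBoundMonotone}; what the paper's citation buys is brevity, at the price of leaving the interaction between the Bellman operator and \procname{DEFLATE} unjustified.
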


\begin{proof}
  In \cite{Chatterjee2013}.
\end{proof}

\begin{lemma}
  \label{lem:upperBoundMonotone}
  For $k \in \NN$, we have $\upperBound<k> \geq \upperBound<k+1>$. 
\end{lemma}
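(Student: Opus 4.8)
\emph{Proof plan for \Fref{lem:upperBoundMonotone}.}
The plan is to prove the inequality by induction on $k$, combining monotonicity of $\preop[\reach]$, the fact that \procname{DEFLATE} never increases a valuation, and \Fref{lem:deflateMonotone}. Fix the enumeration $\mathcal{M} = \set{\endComponent[1], \dotsc, \endComponent[m]}$ of maximal end components in the order in which the loop of \Fref{alg:boundedValueIteration} processes them, and for the iteration producing $\upperBound<k+1>$ from $\upperBound<k>$ write $W^{(k+1)}_0 \coloneqq \preop[\reach](\upperBound<k>)$ and $W^{(k+1)}_j \coloneqq \onestep{D}<|\endComponent[j]|>(W^{(k+1)}_{j-1})$ for $1 \le j \le m$, so that $W^{(k+1)}_m = \upperBound<k+1>$. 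Since \procname{DEFLATE} only decreases valuations, we always have the chain $W^{(k)}_0 \ge W^{(k)}_1 \ge \dotsb \ge W^{(k)}_m = \upperBound<k>$. For the base case $k = 0$ we have $\preop[\reach](\upperBound<0>) \le \upperBound<0>$ (the value $1$ is the global maximum and the absorbing states of $\winning$ and $\success$ keep their values), so $\upperBound<1> = W^{(1)}_m \le W^{(1)}_0 = \preop[\reach](\upperBound<0>) \le \upperBound<0>$.

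For the inductive step, assume $\upperBound<k> \le \upperBound<k-1>$ and prove $W^{(k+1)}_j \le W^{(k)}_j$ for all $0 \le j \le m$ by an inner induction on $j$; the case $j = m$ is the claim. For $j = 0$, monotonicity of $\preop[\reach]$ applied to the outer hypothesis gives $W^{(k+1)}_0 = \preop[\reach](\upperBound<k>) \le \preop[\reach](\upperBound<k-1>) = W^{(k)}_0$. For $j \to j+1$ I would apply \Fref{lem:deflateMonotone} with $\endComponent \coloneqq \endComponent[j+1]$, $\upperBound[1] \coloneqq W^{(k)}_j$ and $\upperBound[2] \coloneqq W^{(k+1)}_j$; its conclusion $\onestep{D}<|\endComponent[j+1]|>(\upperBound[2]) \le \onestep{D}<|\endComponent[j+1]|>(\upperBound[1])$ is exactly $W^{(k+1)}_{j+1} \le W^{(k)}_{j+1}$. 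Of the two hypotheses of \Fref{lem:deflateMonotone}, the second, $W^{(k+1)}_j \le \preop[\reach]\big(\onestep{D}<|\endComponent[j+1]|>(W^{(k)}_j)\big) = \preop[\reach](W^{(k)}_{j+1})$, is easy: from $W^{(k)}_{j+1} \ge W^{(k)}_m = \upperBound<k>$ and monotonicity of $\preop[\reach]$ we get $\preop[\reach](W^{(k)}_{j+1}) \ge \preop[\reach](\upperBound<k>) = W^{(k+1)}_0 \ge W^{(k+1)}_j$, the last step again using the \procname{DEFLATE} chain.

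The main obstacle is the first hypothesis of \Fref{lem:deflateMonotone}, namely that each intermediate valuation $W^{(k)}_j$ is a pre-fixpoint of $\preop[\reach]$, i.e. $\preop[\reach](W^{(k)}_j) \le W^{(k)}_j$. This is \Fref{lem:preopDecreases} when $j = m$, but here it is needed for arbitrary $j$. I would derive it from the standalone sub-claim that \procname{DEFLATE} maps pre-fixpoints of $\preop[\reach]$ to pre-fixpoints of $\preop[\reach]$: since $\preop[\reach](W^{(k)}_0) \le W^{(k)}_0$ follows from \Fref{lem:preopDecreases} (applied to $\upperBound<k-1>$) together with monotonicity of $\preop[\reach]$, a short induction on $j$ using the sub-claim then yields $\preop[\reach](W^{(k)}_j) \le W^{(k)}_j$ for every $j$. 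The sub-claim itself is established by running along the sub-iterations of a single \procname{DEFLATE} call, exactly as in the step marked $\ast$ in the proof of \Fref{lem:deflateMonotone}: a state whose value the call lowers to a best-exit value cannot afterwards have a strictly larger $\preop[\reach]$-value, since every player~$\reach$ move there is either leaving --- and then bounded by that best exit by the definition of $\bestExit[\reach]$ via the leaving matrix game and the attractor construction --- or lands only in states that have themselves been capped at the same best-exit value. Once this sub-claim is in place, both hypotheses of \Fref{lem:deflateMonotone} hold at every step of the inner induction, and the double induction closes.
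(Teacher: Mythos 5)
Your overall architecture is the paper's: an outer induction on $k$ whose step is discharged by \Fref{lem:deflateMonotone}, with the second hypothesis of that lemma verified exactly as in the paper (monotonicity of $\preop[\reach]$ plus the fact that \procname{DEFLATE} only lowers values) and the first hypothesis traced back to \Fref{lem:preopDecreases}. The genuine difference is your inner induction over the maximal end components processed within one sweep, which the paper silently glosses over: it applies \Fref{lem:deflateMonotone} once, with $\upperBound[1] = \preop[\reach](\upperBound<k>)$, i.e.\ the valuation \emph{before} any deflation, so that the pre-fixpoint hypothesis follows directly from \Fref{lem:preopDecreases} and monotonicity. Your refinement is more careful, but it creates an obligation the paper never has to meet, namely that every partially deflated intermediate valuation $W^{(k)}_j$ satisfies $\preop[\reach](W^{(k)}_j) \le W^{(k)}_j$. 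You reduce this to the sub-claim that \procname{DEFLATE} maps pre-fixpoints of $\preop[\reach]$ to pre-fixpoints, and this is where the gap lies: your justification argues move-wise that every player-$\reach$ move is ``either leaving, and then bounded by the best exit, or lands only in states capped at the best-exit value''. In a concurrent game this dichotomy is false: a move can be neither staying nor leaving (move $a$ at $\state[1]$ in \Fref{fig:leaving} is the paper's own example), and $\bestExit[\reach]$ bounds the value of exiting \emph{strategies}, not of individual moves, so the bound has to be established for arbitrary randomized strategies. A correct argument needs a case split on whether player~$\safe$ can confine $\destination(\state,\strategy[\reach],\strategy[\safe])$ to the current subset (then all successors are already capped by the minimizing update), and otherwise the remove-staying-moves-and-renormalize construction of \Fref{lem:valBelowBestExit}, Case B, to turn the given strategy into an exiting one and invoke the definition of $\bestExit[\reach]$ with respect to the valuation at the time of the update. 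With that, the sub-claim is true and your plan closes, but as written its key step is unsupported.

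A further point of comparison: both you and the paper invoke \Fref{lem:preopDecreases} for every $k$, i.e.\ $\preop[\reach](\upperBound<k>) \le \upperBound<k>$ for the deflated iterates themselves. Taken at face value, that lemma already yields the statement in one line, since \procname{DEFLATE} never increases any value: $\upperBound<k+1> \le \preop[\reach](\upperBound<k>) \le \upperBound<k>$. So the heavier machinery, yours and the paper's alike, only earns its keep if one does not want to lean on that cited fact; in that case your sub-claim, once properly proven, is exactly the missing ingredient, because together with monotonicity it gives $\preop[\reach](\upperBound<k>) \le \upperBound<k>$ by induction from $\upperBound<0>$.
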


\begin{proof}

  \begin{description}
  \item[Induction Basis:] $k = 0$. By definition, we have
    $\upperBound<0>(\state) = \mathsf{if} \ \state \in \winning \
    \mathsf{then} \ 0 \ \mathsf{else} \ 1$, i.e. $\upperBound<0>$ is
    equal to~$1$ everywhere except $\winning$. Since we assume winning
    to be absorbing, $\preop[\reach](\upperBound)(\state) = 0$ for any
    state $\state \in \winning$. Since
    $\preop[\reach](\upperBound)(\state) \leq 1$, this also holds for
    any best exit. Hence, $\upperBound<0> \geq \upperBound<1>$.

  \item[Induction Hypothesis:] $\upperBound<k> \geq \upperBound<k+1>$.

  \item[Induction Step:] We need to prove that $\upperBound<k+1> \geq
    \upperBound<k+2>$.  Let $\state \in \states \setminus (\winning
    \cup \success)$. On $\winning$, $\upperBound<i>$ is equal to~$0$
    and on $\success$ to~$1$ for any $i \in \NN$.
    
    \begin{description}
    \item[Case A:] $\state$ is not contained in any (maximal) end
      component. Then, $\upperBound<k+2>(\state) =
      \preop[\reach](\upperBound<k+1>)(\state) \leq
      \upperBound<k+1>(\state)$ by \Fref{lem:preopDecreases}.

    \item[Case B:] $\state \in \endComponent$, for a maximal end
      component $\endComponent$. Denote $\upperBound[1]$ for
      $\upperBound<k+1>$ before calling \procname{DEFLATE}, and
      likewise let $\upperBound[2]$ be the valuation
      $\upperBound<k+2>$ before calling \procname{DEFLATE}. Formally,
      $\upperBound[1] = \preop[\reach](\upperBound<k>)$ and
      $\upperBound[2] = \preop[\reach](\upperBound<k+1>)$. By
      \Fref{lem:preopDecreases} we have
      $\preop[\reach](\upperBound<k>) \leq \upperBound<k>$. Therefore,
      \[\preop[\reach](\upperBound[1]) =
      \preop[\reach](\preop[\reach](\upperBound<k>)) \leq
      \preop[\reach](\upperBound<k>) = \upperBound[1]\]. Moreover,
      \[\upperBound[2] =
      \onestep{D}<|\endComponent|>(\preop[\reach](\upperBound<k+1>))
      \leq \preop[\reach](\upperBound<k+1>) =
      \preop[\reach](\onestep{D}<|\endComponent|>(\upperBound[1]))\]. Thus,
      the conditions of \Fref{lem:deflateMonotone} hold and we can
      conclude \[\upperBound<k+2>(\state) =
      \onestep{D}<|\endComponent|>(\preop[\reach](\upperBound<k+1>))(\state)
      = \onestep{D}<|\endComponent|>(\upperBound[2]))(\state) \leq
      \onestep{D}<|\endComponent|>(\upperBound[1])(\state) =
      \onestep{D}<|\endComponent|>(\preop[\reach](\upperBound<k>))(\state)
      = \upperBound<k+1>(\state)\].
    \end{description}
  \end{description}
\end{proof}

Finally, we are in the position to show the convergence to a
fixpoint. The proof is essentially the same as for Kleene's Fixpoint
Theorem. However, our valuations with the partial order $\leq$ are not
a lattice as not every set of valuations has a least element. Hence,
we can not simply apply the theorem. Instead, we show that we can
argue in a similar way for our setting.

\begin{theorem}
    \label{theo:upperBoundConverges}
    $\upperBound<\star> \coloneqq \lim_{k \to \infty} \upperBound<k>$
    exists and $F(\upperBound<\star>) = \upperBound<\star>$, where $F$
    denotes the update of $\upperBound$ in
    \Fref{alg:boundedValueIteration}.
\end{theorem}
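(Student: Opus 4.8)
\emph{Proof plan.}

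\textbf{Existence of $\upperBound<\star>$.} By \Fref{lem:upperBoundMonotone} the sequence $(\upperBound<k>)_{k \in \NN}$ is non-increasing, and by \Fref{lem:upperBoundCorrectness} it is bounded from below (for instance by $\val(\Diamond \success)$, or trivially by~$0$). Since every valuation takes values in the compact interval $[0,1]$ and $\states$ is finite, the sequence converges pointwise; the pointwise limit $\upperBound<\star>$ is again a valuation, it coincides with $\inf_{k} \upperBound<k>$, and because $\states$ is finite the convergence is uniform. This already gives the first half of the claim, so it remains to prove $F(\upperBound<\star>) = \upperBound<\star>$. The idea is the one behind Kleene's theorem: show that $F$ commutes with the limit of this decreasing sequence. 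Recall that $F$ is the composition specified by \Fref{alg:boundedValueIteration}, namely $\preop[\reach]$ followed by deflating the (pairwise disjoint) maximal end components; since \procname{DEFLATE} on an end component only modifies that end component and only decreases values, in particular $F(\valuation) \leq \preop[\reach](\valuation)$ for every valuation $\valuation$.

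\textbf{$F(\upperBound<\star>) \leq \upperBound<\star>$.} The operator $\preop[\reach]$ is continuous: $\preop[\reach](\valuation)(\state)$ is the value of a zero-sum one-shot matrix game whose payoff entries are affine functions of $\valuation$, and the value of a matrix game is continuous (in fact $1$-Lipschitz w.r.t.\ the sup-norm of the entries) in those entries. Hence $\preop[\reach](\upperBound<\star>) = \lim_{k} \preop[\reach](\upperBound<k>)$, and together with \Fref{lem:preopDecreases} and $\upperBound<k> \to \upperBound<\star>$ this gives $\preop[\reach](\upperBound<\star>) \leq \upperBound<\star>$. With $F(\upperBound<\star>) \leq \preop[\reach](\upperBound<\star>)$ the inequality follows; note this half uses nothing about \procname{DEFLATE} beyond monotone decrease.

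\textbf{$F(\upperBound<\star>) \geq \upperBound<\star>$.} It suffices to show $F(\upperBound<\star>)(\state) \geq \limsup_{k} \upperBound<k+1>(\state) = \upperBound<\star>(\state)$ for every $\state$; combined with the previous step this yields equality. If $\state$ lies in no maximal end component, \procname{DEFLATE} leaves it untouched, so by continuity of $\preop[\reach]$ we even get $F(\upperBound<\star>)(\state) = \preop[\reach](\upperBound<\star>)(\state) = \lim_{k} \preop[\reach](\upperBound<k>)(\state) = \lim_{k} \upperBound<k+1>(\state)$. If $\state \in \endComponent$ for a maximal end component $\endComponent$, set $w_{k} \coloneqq \preop[\reach](\upperBound<k>)$; by monotonicity of $\preop[\reach]$ the $w_{k}$ decrease to $w_{\star} \coloneqq \preop[\reach](\upperBound<\star>)$, and $\upperBound<k+1> = \onestep{D}<|\endComponent|>(w_{k})$ on $\endComponent$. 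Within one run of \procname{DEFLATE}, the value of $\state$ is produced by repeatedly taking minima with best-exit values $\bestExit<\valuation>(\endComponent<\prime>)$ of the peeled sub-components $\endComponent<\prime>$, and each such value is a continuous function of the current valuation $\valuation$ (it is a finite maximum of values of matrix games played over a fixed, valuation-independent set of leaving strategies). The only discontinuity is combinatorial: the set of states attaining the best exit of $\endComponent<\prime>$, and hence the attractor peeled off next, can jump with $\valuation$. Along a decreasing sequence this jump is one-sided: a state attaining the best exit of $\endComponent<\prime>$ under $w_{k}$ for infinitely many $k$ also attains it under $w_{\star}$, since $\bestExit$ is continuous. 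Hence, along a subsequence on which all peeled sets stabilise, $\onestep{D}<|\endComponent|>(w_{\star})$ removes weakly larger sub-components than $\onestep{D}<|\endComponent|>(w_{k})$ at every step, so it deflates each surviving state by weakly less, giving $\onestep{D}<|\endComponent|>(w_{\star})(\state) \geq \onestep{D}<|\endComponent|>(w_{k})(\state)$ along that subsequence; taking the limit superior over $k$ yields $F(\upperBound<\star>)(\state) \geq \limsup_{k} \upperBound<k+1>(\state) = \upperBound<\star>(\state)$.

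\textbf{Main obstacle.} The delicate point is exactly this (semi)continuity of \procname{DEFLATE}: as the valuation moves, the argmax sets of best exits and the attractors peeled from them can change discontinuously, so \procname{DEFLATE} is not an obviously continuous operator and the ``Kleene'' step is not automatic. The plan above avoids a full continuity proof by using only the single inequality that survives these jumps — upper semicontinuity of the peeling along the particular decreasing sequence produced by \Fref{alg:boundedValueIteration} — and pairing it with the cheap post-fixpoint inequality. A more uniform but more tedious alternative would be to prove outright that the restriction of $\onestep{D}<|\endComponent|>$ to valuations of the form $\preop[\reach](\valuation)$ is continuous, via a case analysis over the peeling order; I would only fall back to that route if the semicontinuity argument proved awkward to reuse for the subsequent identification of $\upperBound<\star>$ with $\val(\Diamond \success)$.
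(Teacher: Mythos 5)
Your treatment of existence (non-increasing by \Fref{lem:upperBoundMonotone}, bounded below by \Fref{lem:upperBoundCorrectness}, finite state space) is exactly the paper's, and your inequality $F(\upperBound<\star>) \leq \upperBound<\star>$ is sound: $\preop[\reach](\upperBound<k>) \leq \upperBound<k>$ passes to the limit by continuity (or monotonicity) of $\preop[\reach]$, and \procname{DEFLATE} only decreases values. You are also right about where the real content lies: the paper's own proof settles the reverse inequality very tersely, asserting $F(\inf\mathbb{U}) = \inf F(\mathbb{U})$ ``due to monotonicity'', i.e.\ it implicitly uses co-continuity of $F$ on the descending chain --- precisely the point you isolate as the main obstacle. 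So your route is genuinely different in spirit (analytic semicontinuity of \procname{DEFLATE} rather than a one-line order-theoretic identification), and more honest about the difficulty.

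However, the key step of your $F(\upperBound<\star>) \geq \upperBound<\star>$ argument is asserted rather than proven, and it does not go through as sketched. Your continuity argument only controls the \emph{first} peel, where the limit run and the runs on $w_k$ work on the same set $\endComponent$ and on valuations converging to each other; there the inclusion of argmax sets (hence attractors) is fine. But after the first peel the run of $\onestep{D}<|\endComponent|>$ on $w_\star$ and the runs on $w_k$ operate on \emph{different} residual sets and on \emph{different} intermediate valuations (each has already been lowered by its own best-exit value), so ``the best exit of $\endComponent<\prime>$'' is no longer the same object in the two runs; choosing a subsequence on which the $w_k$-peelings stabilise does not control the combinatorics of the $w_\star$-peeling beyond step one, so the phrase ``at every step'' hides an induction whose hypothesis is not stated. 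Moreover the inference ``removes weakly larger sub-components $\Rightarrow$ deflates each surviving state by weakly less'' is itself not immediate: shrinking the residual set shrinks the domain of the $\max$ in $\bestExit$ but simultaneously enlarges every remaining state's set of exiting strategies (cf.\ \Fref{lem:exitSubsetYieldsLargerValue}), so later best-exit values are not monotone in the removed set without further argument. Closing the gap requires an interleaved induction over peeling steps that tracks both residual sets and updated valuations on the specific chain produced by \Fref{alg:boundedValueIteration} (using invariants such as $\preop[\reach](\upperBound<k>) \leq \upperBound<k>$), i.e.\ essentially a proof of co-continuity of \procname{DEFLATE} on that chain; as it stands, this crucial piece is missing from your plan (and is the very step the paper leaves implicit).
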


Please note that $F(\upperBound<k>) = \upperBound<k+1>$.

\begin{proof}
  Let $\mathbb{U} := \set{\upperBound<k> \mid k \geq 0}$. By
  \Fref{lem:upperBoundCorrectness} it follows that $\mathbb{U}$ is
  bounded from below by $\val(\Diamond \success)$. Moreover, since $F$
  is montone by \Fref{lem:upperBoundMonotone}, there exists $\inf
  \mathbb{U} = \lim_{k \to \infty} \upperBound<k> =
  \upperBound<\star>$. Since for any $\upperBound \in \mathbb{U}$ we
  have $\inf \mathbb{U} \leq \upperBound$, due to the monotonicity of
  $F$, we also have $F(\inf \mathbb{U}) \leq F(\upperBound)$. Hence,
  $F(\inf \mathbb{U}) = \inf F(\mathbb{U})$. It follows that
  $F(\upperBound<\star>) = F(\inf \mathbb{U}) = \inf F(\mathbb{U}) =
  \inf \mathbb{U} = \upperBound<\star>$.
\end{proof}

In the subsequent proofs we will often implicitly make use of the fact that $\upperBound<\star>$ is a fixpoint.

\subsection{Uniqueness of the Fixpoint}
The aim of this section is to show that the fixpoint
$\upperBound<\star>$ coincides with the true value function. Doing so
in absence of end components is fairly straightforward. The
\procname{DEFLATE} procedure deals with the end components by reducing
the upper bound in a sound way, as we have proven in the previous
section. In order to prove that this fixpoint is equal to the value,
we need to establish some further properties about the
fixpoint. Intuitively, we expect the claim to hold because all changes
of the upper bound are propagated to all states---even those that are
in an end component. However, since our \procname{DEFLATE} procedure
might operate on subsets of states that are not end components, we
need to show certain properties for general subsets of the state
space. The lemmas in this section essentially state that the fixpoint
behaves as we expect it to and culminate in
\Fref{lem:exitingStateUpperBound}. The following lemma is quite
natural to expect.

\begin{lemma}
    \label{lem:fixpointIsPreop}
    For all $\state \in \states$ we have $\upperBound<\star>(\state) =
    \preop[\reach](\upperBound<\star>)(\state)$.
\end{lemma}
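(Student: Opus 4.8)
The plan is to deduce the claim from \Fref{theo:upperBoundConverges} and \Fref{lem:preopDecreases} by establishing the two inequalities $\preop[\reach](\upperBound<\star>) \leq \upperBound<\star>$ and $\upperBound<\star> \leq \preop[\reach](\upperBound<\star>)$ and then combining them pointwise.

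For $\preop[\reach](\upperBound<\star>) \leq \upperBound<\star>$, I would use \Fref{lem:preopDecreases}, which gives $\preop[\reach](\upperBound<k>) \leq \upperBound<k>$ for every $k \in \NN$. Since the sequence $(\upperBound<k>)_{k \geq 0}$ is non-increasing by \Fref{lem:upperBoundMonotone} and converges to $\upperBound<\star>$ by \Fref{theo:upperBoundConverges}, we have $\upperBound<\star> = \inf_{k} \upperBound<k> \leq \upperBound<k>$ for all $k$. Monotonicity of $\preop[\reach]$ then yields $\preop[\reach](\upperBound<\star>) \leq \preop[\reach](\upperBound<k>) \leq \upperBound<k>$ for every $k$, and taking the infimum over $k$ gives $\preop[\reach](\upperBound<\star>) \leq \upperBound<\star>$. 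Only the inequality $\preop[\reach](\inf_k \upperBound<k>) \leq \inf_k \preop[\reach](\upperBound<k>)$ is needed here, which holds for any monotone operator, so no form of continuity of $\preop[\reach]$ is required.

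For the reverse inequality $\upperBound<\star> \leq \preop[\reach](\upperBound<\star>)$, recall that the update operator $F$ of \Fref{alg:boundedValueIteration} first applies $\preop[\reach]$ and then runs \procname{DEFLATE} on every maximal end component. As already noted in the proof of \Fref{lem:deflateMonotone}, a single call of \procname{DEFLATE} can only decrease a valuation; since $F$ applies $\preop[\reach]$ followed by finitely many such non-increasing calls, we obtain $F(\valuation) \leq \preop[\reach](\valuation)$ for every valuation $\valuation$. Applying this to $\upperBound<\star>$ and using $F(\upperBound<\star>) = \upperBound<\star>$ from \Fref{theo:upperBoundConverges} yields $\upperBound<\star> = F(\upperBound<\star>) \leq \preop[\reach](\upperBound<\star>)$.

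Putting the two inequalities together gives $\preop[\reach](\upperBound<\star>)(\state) = \upperBound<\star>(\state)$ for all $\state \in \states$, as required. I do not expect a genuine obstacle: the lemma is essentially a corollary of \Fref{theo:upperBoundConverges} and \Fref{lem:preopDecreases}. The only points requiring a little care are the two infimum comparisons (both handled by monotonicity alone, without appealing to continuity of $\preop[\reach]$) and the remark that \procname{DEFLATE}, even when applied successively to all maximal end components, remains globally non-increasing, which is immediate from each individual call being non-increasing.
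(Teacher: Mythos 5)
Your proof is correct and takes essentially the same route as the paper: one inequality comes from the fixpoint property $F(\upperBound<\star>)=\upperBound<\star>$ combined with the fact that \procname{DEFLATE} only decreases valuations, the other from monotonicity of $\preop[\reach]$ together with the decreasing sequence $\upperBound<k>$ and \Fref{lem:preopDecreases}. The only difference is cosmetic: the paper splits into cases according to whether the state lies in a maximal end component, whereas you argue uniformly via $F(\valuation)\leq\preop[\reach](\valuation)$ and an explicit infimum argument, which if anything makes the paper's limit-passing step more precise.
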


\begin{proof}
  For all states $\state \in \states \setminus (\winning \cup \success)$ we compute
  $\upperBound<k+1>(\state) \coloneqq
  \preop[\reach](\upperBound<k>)(\state)$. Depending on whether or
  not $\state$ is in a maximal end component, we do or do not process
  it further. Consider the cases:
    \begin{description}
        \item[Case A:] There is no maximal end component~$\endComponent$ s.t. $\state \in \endComponent$. Then, $\upperBound<k+1>(\state) = \preop[\reach](\upperBound<k>)(\state)$. Since $\upperBound<\star>$ is a fixpoint~(\Fref{theo:upperBoundConverges}), it must hold that $\upperBound<\star>(\state) = \preop[\reach](\upperBound<\star>)(\state)$.

        \item[Case B:] Let $\endComponent \subseteq \states$ be a maximal end component with $\state \in \endComponent$. Then, we have $\upperBound<k+1>(\state) = \onestep{D}<|\endComponent|>(\preop[\reach](\upperBound<k>))(\state)$. By \Fref{lem:upperBoundMonotone} we have $\upperBound<k>(\state) \geq \onestep{D}<|\endComponent|>(\preop[\reach](\upperBound<k>))(\state) \geq \preop[\reach](\upperBound<k>)(\state)$. Since $\upperBound<\star>$ is the limit of this sequence, we have $\upperBound<\star>(\state) \geq \preop[\reach](\upperBound<\star>)(\state)$. On the other hand, since $\upperBound<\star>$ is a fixpoint, we have $\upperBound<\star>(\state) = \onestep{D}<|\endComponent|>(\preop[\reach](\upperBound<\star>))(\state) \leq \preop[\reach](\upperBound<\star>)(\state)$. Combining the two, we obtain $\upperBound<\star>(\state) = \preop[\reach](\upperBound<\star>)(\state)$, which was to prove.
    \end{description}
\end{proof}

During \procname{DEFLATE} we reduce the upper bound of states in an end component in layers to the best exit from the current subset of the end component. In the previous section, we have already justified why this is sound to do, in the sense that we would never decrease the upper bound below the value. With the following lemma we show that in the limit we can really attain the value of the best exit.

\begin{lemma}
    \label{lem:fixpointInECIsBestExit}
    Let $\endComponent \subseteq \states$ be an end component, and $\onestep{D}<i>(\endComponent )\subseteq \endComponent$. Then, for all $\state \in \attractor(\best<\upperBound<\star>>(\onestep{D}<i>(\endComponent)))$ we have $\upperBound<\star>(\state) = \bestExit<\upperBound<\star>>(\onestep{D}<i>(\endComponent))$.
\end{lemma}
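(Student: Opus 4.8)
The plan is to establish the two bounds $\upperBound<\star>(\state) \leq v$ and $\upperBound<\star>(\state) \geq v$ for every $\state \in \attractor(\best<\upperBound<\star>>(\onestep{D}<i>(\endComponent)))$, where I abbreviate $v \coloneqq \bestExit<\upperBound<\star>>(\onestep{D}<i>(\endComponent))$ and $B \coloneqq \best<\upperBound<\star>>(\onestep{D}<i>(\endComponent))$. Two facts are used throughout: $\upperBound<\star>$ is a fixpoint of the one-step update $F$ of \Fref{alg:boundedValueIteration} (\Fref{theo:upperBoundConverges}), and $\preop[\reach](\upperBound<\star>) = \upperBound<\star>$ (\Fref{lem:fixpointIsPreop}). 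I read $\onestep{D}<i>(\endComponent)$ as the $i$-th set in the run of \procname{DEFLATE} on the maximal end component $\endComponent$ started from $\upperBound<\star>$; since distinct maximal end components are pairwise disjoint, the \procname{DEFLATE} calls for the other components never modify states of $\endComponent$, so $F(\upperBound<\star>) = \upperBound<\star>$ forces this run to leave the valuation constantly equal to $\upperBound<\star>$, and consequently $\attractor(B) \subseteq \onestep{D}<i>(\endComponent)$ because the attractor at iteration $i$ is built inside $\onestep{D}<i>(\endComponent)$.

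For ``$\upperBound<\star> \leq v$'', I would note that every minimising assignment inside that \procname{DEFLATE} run must be the identity: a short induction on the loop index shows the valuation entering iteration $i$ is still $\upperBound<\star>$, and were the update $\upperBound<\star>(\state) \coloneqq \min\big(\upperBound<\star>(\state), \bestExit<\upperBound<\star>>(\onestep{D}<i>(\endComponent))\big)$, ranging over $\state \in \onestep{D}<i>(\endComponent)$, to change any value, then (as \procname{DEFLATE} only decreases) the final valuation would be strictly below $\upperBound<\star>$ at that state of $\endComponent$, contradicting $F(\upperBound<\star>) = \upperBound<\star>$. Hence $\upperBound<\star>(\state) \leq v$ for \emph{all} $\state \in \onestep{D}<i>(\endComponent)$, in particular for every state of $\attractor(B)$.

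For ``$\upperBound<\star> \geq v$'', I would induct over the attractor layers, proving $\upperBound<\star>(\state) \geq v$ for every $\state \in \attractor<j>(B)$. Base case $j=0$: for $\state \in B$ we have $\preop[\reach]<\exit(\onestep{D}<i>(\endComponent))>(\upperBound<\star>)(\state) = v$ by definition of $B$, and since $\strategies[\reach]<\exit(\onestep{D}<i>(\endComponent))>(\state) \subseteq \strategies[\reach]$, the restricted supremum is dominated by $\preop[\reach](\upperBound<\star>)(\state) = \upperBound<\star>(\state)$, so $\upperBound<\star>(\state) \geq v$. Step: a state $\state \in \attractor<j+1>(B) \setminus \attractor<j>(B)$ has, by the attractor construction, a move $\move[\reach] \in \moveAssignment[\reach](\state)$ with $\destination(\state, \move[\reach], \move[\safe]) \subseteq \attractor<j>(B)$ for all $\move[\safe] \in \moveAssignment[\safe](\state)$. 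Picking any pure player-$\reach$ strategy $\strategy[\reach]$ with $\strategy[\reach](\state) = \dirac{\move[\reach]}$, each successor state lies in $\attractor<j>(B) \subseteq \onestep{D}<i>(\endComponent)$, where the induction hypothesis gives value $\geq v$ and the ``$\leq v$'' bound gives value $\leq v$, hence exactly $v$; since $\transitions(\state, \move[\reach], \move[\safe])$ is a probability distribution, $\preop[\strategy[\reach], \strategy[\safe]](\upperBound<\star>)(\state) = v$ for every $\strategy[\safe]$, so $\preop[\reach](\upperBound<\star>)(\state) \geq v$ and thus $\upperBound<\star>(\state) = \preop[\reach](\upperBound<\star>)(\state) \geq v$. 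Combining the two bounds yields $\upperBound<\star>(\state) = v$ on $\attractor(B)$.

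The part I expect to be most delicate is the interface between the two halves: one must argue carefully that the \procname{DEFLATE} run from the fixpoint really is stationary (this is exactly where disjointness/maximality of end components enters) and that the attractor of iteration $i$ stays inside $\onestep{D}<i>(\endComponent)$, so that the ``$\leq v$'' bound is in force precisely at the successor states met during the layered induction. A minor point to dispatch is non-emptiness of $B$, which holds because $\bestExit<\upperBound<\star>>(\onestep{D}<i>(\endComponent))$ is a maximum over the finite non-empty set $\onestep{D}<i>(\endComponent)$ and the end components considered all possess exiting strategies.
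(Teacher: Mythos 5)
Your proof is correct and follows essentially the same route as the paper's: the ``$\leq$'' bound comes from the stationarity of \procname{DEFLATE} at the fixpoint (which the paper states tersely as $\upperBound<\star>(\state)=\min(\upperBound<\star>(\state),\bestExit<\upperBound<\star>>(\onestep{D}<i>(\endComponent)))$), and the ``$\geq$'' bound by induction over attractor layers using \Fref{lem:fixpointIsPreop} and the fact that the exit-restricted pre-operator is dominated by the unrestricted one. The only cosmetic difference is that in your induction step you insist on exact equality $=v$ at the successor states (and hence lean on the containment $\attractor(\best<\upperBound<\star>>(\onestep{D}<i>(\endComponent)))\subseteq\onestep{D}<i>(\endComponent)$, which the paper also needs implicitly for its final equality), whereas the induction hypothesis $\geq v$ at the successors already suffices there, as in the paper's argument.
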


\begin{proof}
    Since $\upperBound<\star>$ is a fixpoint and by definition of \procname{DEFLATE} we have 
    \[
        \upperBound<\star>(\state) = \min(\upperBound<\star>(\state), \bestExit<\upperBound<\star>>(\onestep{D}<i>(\endComponent))) \leq \bestExit<\upperBound<\star>>(\onestep{D}<i>(\endComponent))
    \]
    for all $\state \in \onestep{D}<i>(\endComponent)$. We show that for all $\state \in \attractor<k>(\best<\upperBound<\star>>(\onestep{D}<i>(\endComponent)))$ we have $\upperBound<\star>(\state) \geq \bestExit<\upperBound<\star>>(\onestep{D}<i>(\endComponent))$ by induction over $k$.
    \begin{description}
        \item[Induction Basis:] Let $k = 0$ and $\state \in \attractor<0>(\best<\upperBound<\star>>(\onestep{D}<i>(\endComponent))) = \best<\upperBound<\star>>(\onestep{D}<i>(\endComponent))$. Then, we have $\preop[\reach]<\exit(\onestep{D}<i>(\endComponent))>(\upperBound<\star>)(\state) = \bestExit<\upperBound<\star>>(\onestep{D}<i>(\endComponent))$. Moreover $\upperBound<\star>(\state) \overset{\ast}{=} \preop[\reach](\upperBound<\star>)(\state) \geq \preop[\reach]<\exit(\onestep{D}<i>(\endComponent))>(\upperBound<\star>)(\state)$, where $\ast$ follows from \Fref{lem:fixpointIsPreop}. It follows that $\upperBound<\star>(\state) \geq \bestExit<\upperBound<\star>>(\onestep{D}<i>(\endComponent))$.

        \item[Induction Hypothesis:] For all $\state \in \attractor<k>(\best<\upperBound<\star>>(\onestep{D}<i>(\endComponent)))$ we have $\upperBound<\star>(\state) = \bestExit<\upperBound<\star>>(\onestep{D}<i>(\endComponent))$.

        \item[Induction Step:] Let $\state \in \attractor<k+1>(\best<\upperBound<\star>>(\onestep{D}<i>(\endComponent)))$. By definition of $\attractor$ there is a $\strategy[\reach] \in \strategies[\reach]$, s.t. for all $\strategy[\safe] \in \strategies[\safe]$ we have $\destination(\state, \strategy[\reach], \strategy[\safe]) \subseteq \attractor<k>(\best<\upperBound<\star>>(\onestep{D}<i>(\endComponent)))$. From the induction hypothesis follows that for all $\state<\prime> \in \attractor<k>(\best<\upperBound<\star>>(\onestep{D}<i>(\endComponent)))$ we have $\upperBound<\star>(\state<\prime>) \geq \bestExit<\upperBound<\star>>(\onestep{D}<i>(\endComponent))$. Hence, $\upperBound<\star>(\state) \overset{\ast}{=} \preop[\reach](\upperBound<\star>)(\state) \geq \preop[\reach:\strategy[\reach]](\upperBound<\star>)(\state) \geq \bestExit<\upperBound<\star>>(\onestep{D}<i>(\endComponent))$, where $\ast$ again holds by \Fref{lem:fixpointIsPreop}.
    \end{description}
\end{proof}

The following two lemmas are essentially some technical overhead required to prove \Fref{lem:exitingStateUpperBound}, which can be considered the main step in proving the correctness of the fixpoint. All complications stemming from end components arise from overestimating the value within the end component and therefore not taking into account what happens outside of it. With \Fref{lem:exitingStateUpperBound} we show that the fixpoint of our algorithm always takes into account the values of states outside any subset of the states.

\begin{lemma}
    \label{lem:attractorBestExitInSubset}
    Let $\states<\prime> \subseteq \onestep{D}<i>(\endComponent) \subseteq \endComponent$, where $\endComponent$ is a maximal end component, and $\states<\prime> \cap \attractor(\best<\upperBound<\star>>(\onestep{D}<i>(\endComponent))) \not= \emptyset$. Then, there exists a state \mbox{$\state \in \states<\prime> \cap \attractor(\best<\upperBound<\star>>(\onestep{D}<i>(\endComponent)))$}, s.t. there is a $\strategy[\reach] \in \strategies[\reach]<\exit( \states<\prime>)>(\state)$ with $\preop[\reach:\strategy[\reach]](\upperBound<\star>)(\state) = \bestExit<\upperBound<\star>>(\onestep{D}<i>(\endComponent))$.
\end{lemma}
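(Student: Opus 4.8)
\medskip\noindent\emph{Proof plan.}
Abbreviate $v \coloneqq \bestExit<\upperBound<\star>>(\onestep{D}<i>(\endComponent))$, $B \coloneqq \best<\upperBound<\star>>(\onestep{D}<i>(\endComponent))$, $A \coloneqq \attractor(B)$, and $T \coloneqq \states<\prime> \cap A$; by hypothesis $T \neq \emptyset$, and recall $\states<\prime> \subseteq \onestep{D}<i>(\endComponent) \subseteq \endComponent$ with $\endComponent \cap (\success \cup \winning) = \emptyset$. The plan is to rerun the argument of \Fref{lem:valBelowBestExit} for the fixpoint $\upperBound<\star>$ in place of $\val(\Diamond \success)$, with $T$ playing the role of the set of maximal-value states, and with the extra bookkeeping that the witness ends up inside $T$. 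First I would record three facts: (a) by \Fref{lem:fixpointInECIsBestExit}, $\upperBound<\star> \equiv v$ on $A$, in particular on $T$; (b) since $\upperBound<\star>$ is a fixpoint (\Fref{theo:upperBoundConverges}, \Fref{lem:fixpointIsPreop}) and \procname{DEFLATE} only caps values at best exits, $\upperBound<\star> \leq v$ on $\onestep{D}<i>(\endComponent) \supseteq \states<\prime>$, hence $v = \max_{\state \in \states<\prime>} \upperBound<\star>(\state)$; (c) by \Fref{lem:fixpointIsPreop} and (a), $\preop[\reach](\upperBound<\star>)(\state) = \upperBound<\star>(\state) = v$ for every $\state \in T$, so each such $\state$ has a strategy realising $v$ in the one-shot game $\preop[\reach](\upperBound<\star>)(\state)$.

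\medskip
Then I would case split as in \Fref{lem:valBelowBestExit}. If some $\state \in T$ already has an optimal strategy in $\strategies[\reach]<\exit(\states<\prime>)>(\state)$, it is the witness. Otherwise I would first use the attractor layering to show $T \cap B \neq \emptyset$: were the least layer $\attractor<k>(B)$ meeting $\states<\prime>$ of index $k \geq 1$, a state of $T$ in that layer would have an attractor move forcing, against every player-$\safe$ move, a transition into $\attractor<k-1>(B) \subseteq A$, a nonempty set of value-$v$ states disjoint from $\states<\prime>$ by minimality of $k$; the corresponding Dirac strategy would then be optimal, leave $\states<\prime>$, and contain no move staying in $\states<\prime>$, i.e.\ it would lie in $\strategies[\reach]<\exit(\states<\prime>)>(\state)$ --- contradicting the case hypothesis. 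So I may pick $\state \in T \cap B$, for which $\preop[\reach]<\exit(\onestep{D}<i>(\endComponent))>(\upperBound<\star>)(\state) = v$ and $\strategies[\reach]<\exit(\onestep{D}<i>(\endComponent))>(\state) \neq \emptyset$ (a state where this value is maximal must admit a strategy leaving $\onestep{D}<i>(\endComponent)$, since otherwise \emph{no} state of $\onestep{D}<i>(\endComponent)$ could, putting $\onestep{D}<i>(\endComponent)$ into $\winning$). Since $\states<\prime> \subseteq \onestep{D}<i>(\endComponent)$ gives $\strategies[\reach]<\exit(\onestep{D}<i>(\endComponent))>(\state) \subseteq \strategies[\reach]<\exit(\states<\prime>)>(\state)$ and $\preop[\reach:\strategy[\reach]](\upperBound<\star>)(\state) \leq \preop[\reach](\upperBound<\star>)(\state) = v$, the supremum of $\preop[\reach:\strategy[\reach]](\upperBound<\star>)(\state)$ over $\strategy[\reach] \in \strategies[\reach]<\exit(\states<\prime>)>(\state)$ is exactly $v$; when it is attained we are back in the first case, and when it is not I would argue as in \Fref{lem:valBelowBestExit} --- take a $\preop[\reach]$-optimal strategy leaving $\states<\prime>$ (the support of a limit of near-optimal leaving strategies still contains no move staying in $\states<\prime>$, since supports only shrink under limits), delete from its support the moves staying in $\states<\prime>$, and renormalise; by fact~(b) the deleted moves contribute at most $v$ against every $\strategy[\safe]$ while the retained moves already secured $v$, so the renormalised strategy lies in $\strategies[\reach]<\exit(\states<\prime>)>(\state)$ and still realises $v$. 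In each case the witness is such a state of $T$ together with such a strategy.

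\medskip
The step I expect to be genuinely delicate is the second case with the supremum over leaving strategies \emph{not} attained --- the concurrency-specific pathology already present at state~$\state[1]$ of \Fref{fig:leaving}: $\strategies[\reach]<\exit(\states<\prime>)>(\state)$ is not closed, so a limit of bona fide leaving strategies may fail to leave and ``the optimal leaving strategy'' need not exist; the way out, exactly as in \Fref{lem:valBelowBestExit}, leans on $\upperBound<\star>$ being a fixpoint (so that $v$ is still realised by an \emph{unrestricted} optimal strategy, whose support can be pruned of staying moves without loss) and on fact~(b) (so that staying moves can never over-contribute), and the interplay of these two with the arg-max property defining $B$ is the technical heart. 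The genuinely new ingredient over \Fref{lem:valBelowBestExit} is the attractor bookkeeping that pins the witness inside $\states<\prime> \cap A$ rather than merely among the value-$v$ states of $\states<\prime>$. The corner $v = 0$ is a trivial aside: then every strategy realises $v$ at every state, and $T$ contains a state with a move leaving $\states<\prime>$ by the attractor argument.
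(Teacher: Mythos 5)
Your overall route is the paper's route. The paper proves the lemma by induction over the attractor layers $\attractor<k>(\best<\upperBound<\star>>(\onestep{D}<i>(\endComponent)))$: at a layer meeting $\states<\prime>$, the attractor move forces all successors into the previous layer, where $\upperBound<\star>$ equals $\bestExit<\upperBound<\star>>(\onestep{D}<i>(\endComponent))$ by \Fref{lem:fixpointInECIsBestExit}, so it realises the best exit; if that strategy is not in $\strategies[\reach]<\exit(\states<\prime>)>(\state)$ the problem descends one layer, and at layer $0$ one uses the inclusion $\strategies[\reach]<\exit(\onestep{D}<i>(\endComponent))>(\state) \subseteq \strategies[\reach]<\exit(\states<\prime>)>(\state)$ (your argument and \Fref{lem:exitSubsetYieldsLargerValue}). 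Your ``least layer meeting $\states<\prime>$'' packaging is the same induction unfolded, and your observation that at a least layer of index $k\geq 1$ the Dirac attractor move is automatically exiting $\states<\prime>$ is correct and, if anything, slightly cleaner than the paper's case split. Facts (a)--(c) are all available (\Fref{lem:fixpointIsPreop}, \Fref{lem:fixpointInECIsBestExit}, \Fref{theo:upperBoundConverges}).

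The divergence is your fallback branch for the case that the supremum over $\strategies[\reach]<\exit(\states<\prime>)>(\state)$ is not attained, and that step as written does not work. Membership in $\strategies[\reach]<\exit(\states<\prime>)>(\state)$ requires (i) leaving $\states<\prime>$ against \emph{every} $\strategy[\safe]$ and (ii) no surely-staying move in the support. A limit of near-optimal exit strategies automatically preserves (ii) -- any move in the limit's support is eventually in the approximants' supports -- so your proposed pruning has nothing to delete; what can break in the limit is (i), exactly the non-closedness you cite at $\state[5]$ of \Fref{fig:leaving}, where the Dirac limit on move~$a$ has no staying move in its support yet is not exiting. Deleting surely-staying moves cannot restore (i), so the conclusion ``the renormalised strategy lies in $\strategies[\reach]<\exit(\states<\prime>)>(\state)$'' does not follow. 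The pruning argument of \Fref{lem:valBelowBestExit} applies only when one starts from an optimal strategy that already satisfies (i) but violates (ii), which is precisely what is not guaranteed here. The paper avoids this branch altogether: when $\best$ and $\bestExit$ are introduced it is stipulated that a state of $\best<\upperBound<\star>>(\onestep{D}<i>(\endComponent))$ admits an \emph{optimal} strategy in $\strategies[\reach]<\exit(\onestep{D}<i>(\endComponent))>(\state)$, and the base case of its induction invokes exactly this strategy, which via the inclusion is already the required witness, with $\preop[\reach:\strategy[\reach]](\upperBound<\star>)(\state) = \bestExit<\upperBound<\star>>(\onestep{D}<i>(\endComponent))$. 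So either lean on that standing attainment assumption as the paper does, or give a genuinely different argument for attainment; the limit-plus-pruning repair does not close the gap (you correctly sensed this is the delicate point, but the sketched fix fails for the reason above).
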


\begin{proof}
    By definition $\attractor = \attractor<|\onestep{D}<i>(\endComponent)|>$. We apply induction over $k$ to show that whenever $\states<\prime> \cap \attractor<k>(\best<\upperBound<\star>>(\onestep{D}<i>(\endComponent))) \not= \emptyset$, then there exists a state $\state \in \states<\prime> \cap \attractor<k>(\onestep{D}<i>(\endComponent))$ with the desired property.
    \begin{description}
        \item[Induction Basis:] Let $k = 0$. Then, there is a $\state \in \states<\prime> \cap \best<\upperBound<\star>>(\onestep{D}<i>(\endComponent))$. Hence, there must be a $\strategy[\reach] \in \strategies[\reach]<\exit(\onestep{D}<i>(\endComponent))>(\state)$ with $\preop[\reach:\strategy[\reach]](\upperBound<\star>)(\state) = \bestExit<\upperBound<\star>>(\onestep{D}<i>(\endComponent))$. Since $\states<\prime> \subseteq \onestep{D}<i>(\endComponent)$, it is also the case that $\strategy[\reach] \in \strategies[\reach]<\exit(\states<\prime>)>(\state)$, which was to construct.

        \item[Induction Hypothesis:] If $\states<\prime> \cap \attractor<k>(\best<\upperBound<\star>>(\onestep{D}<i>(\endComponent))) \not= \emptyset$, there must be a state $\state \in \states<\prime> \cap \attractor<k>(\best<\upperBound<\star>>(\onestep{D}<i>(\endComponent)))$ s.t. there is a strategy $\strategy[\reach] \in \strategies[\reach]<\exit(\states<\prime>)>(\state)$ with $\preop[\reach:\strategy[\reach]](\upperBound<\star>)(\state) = \bestExit<\upperBound<\star>>(\onestep{D}<i>(\endComponent))$.

        \item[Induction Step:] Consider $k + 1$. Let $\state \in \states<\prime> \cap \attractor<k+1>(\best<\upperBound<\star>>(\onestep{D}<i>(\endComponent)))$. Then, by definition of $\attractor$ there is a strategy $\strategy[\reach] \in \strategies[\reach]$, s.t. for all $\strategy[\safe] \in \strategies[\safe]$ we have $\destination(\state, \strategy[\reach], \strategy[\safe]) \subseteq \attractor<k>(\best<\upperBound<\star>>(\onestep{D}<i>(\endComponent)))$. By \Fref{lem:fixpointInECIsBestExit} we have $\upperBound<\star>(\state<\prime>) = \bestExit<\upperBound<\star>>(\onestep{D}<i>(\endComponent))$ for all $\state<\prime> \in \attractor(\best<\upperBound<\star>>(\onestep{D}<i>(\endComponent))) \supseteq \attractor<k>(\best<\upperBound<\star>>(\onestep{D}<i>(\endComponent)))$. Hence, $\preop[\reach:\strategy[\reach]](\upperBound<\star>)(\state) = \bestExit<\upperBound<\star>>(\onestep{D}<i>(\endComponent))$. If $\strategy[\reach] \in \strategies[\reach]<\exit(\states<\prime>)>(\state)$, then we are done. Otherwise, $\states<\prime> \cap \attractor<k>(\best<\upperBound<\star>>(\onestep{D}<i>(\endComponent))) \not= \emptyset$ and the claim follows from the induction hypothesis.
    \end{description}
\end{proof}

\begin{lemma}
  \label{lem:exitSubsetYieldsLargerValue}
  Let $\states<\prime> \subseteq \endComponent$. For every state
  $\state \in \states<\prime>$ and every valuation $\valuation$, we
  have
  \begin{center}
    $\preop[\reach]<\exit(\states<\prime>)>(\valuation)(\state)
    \geq
    \preop[\reach]<\exit(\endComponent)>(\valuation)(\state).$
  \end{center}
\end{lemma}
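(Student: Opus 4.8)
The plan is to derive the inequality from a single set inclusion between the two families of ``exiting'' strategies. Concretely, I would first show that for every $\state \in \states<\prime>$ we have
\[
  \strategies[\reach]<\exit(\endComponent)>(\state) \subseteq \strategies[\reach]<\exit(\states<\prime>)>(\state).
\]
Granting this, the lemma is immediate: by definition both $\preop[\reach]<\exit(\states<\prime>)>(\valuation)(\state)$ and $\preop[\reach]<\exit(\endComponent)>(\valuation)(\state)$ are the suprema of one and the same map $\strategy[\reach] \mapsto \inf_{\strategy[\safe] \in \strategies[\safe]} \preop[\strategy[\reach],\strategy[\safe]](\valuation)(\state)$, taken over the two strategy sets above, and a supremum over a larger index set can only be larger.

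To prove the inclusion I would fix $\strategy[\reach] \in \strategies[\reach]<\exit(\endComponent)>(\state)$ and check the two clauses defining membership in $\strategies[\reach]<\exit(\states<\prime>)>(\state)$. For the first clause, let $\strategy[\safe] \in \strategies[\safe]$ be arbitrary; since $\destination(\state,\strategy[\reach],\strategy[\safe]) \not\subseteq \endComponent$ there is a successor lying outside $\endComponent$, hence also outside $\states<\prime>$ because $\states<\prime> \subseteq \endComponent$, so $\destination(\state,\strategy[\reach],\strategy[\safe]) \not\subseteq \states<\prime>$. For the second clause I would argue by contradiction: if some $\move[\reach] \in \support(\strategy[\reach](\state))$ had $\destination(\state,\move[\reach],\move[\safe]) \subseteq \states<\prime>$ for every $\move[\safe] \in \moveAssignment[\safe](\state)$, then by $\states<\prime> \subseteq \endComponent$ this same move would satisfy $\destination(\state,\move[\reach],\move[\safe]) \subseteq \endComponent$ for every $\move[\safe]$, i.e. $\strategy[\reach]$ would use a move that stays inside $\endComponent$, contradicting $\strategy[\reach] \in \strategies[\reach]<\exit(\endComponent)>(\state)$. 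This establishes the inclusion and hence the claim.

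I do not expect a real obstacle here; the argument is pure bookkeeping with the definitions. The two points to be careful about are the \emph{direction} of the inclusion --- the smaller target set $\states<\prime>$ is easier to leave and therefore admits \emph{more} exiting strategies, so the inclusion runs from $\endComponent$ to $\states<\prime>$, not the other way --- and the fact that it is precisely the genuinely restrictive ``no staying move'' clause of the definition that has to be preserved, which works because a move staying inside $\states<\prime>$ also stays inside $\endComponent$ when $\states<\prime> \subseteq \endComponent$. Finally, note that the inclusion forces the right-hand supremum to be over a subset of the (nonempty) index set of the left-hand one, so the comparison of suprema is unambiguous even under the convention $\sup\emptyset = 0$, since the maximised quantity is nonnegative.
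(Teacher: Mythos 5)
Your proof is correct and follows essentially the same route as the paper: both establish the inclusion of the set of strategies exiting the end component into the set of strategies exiting the subset, and then conclude by monotonicity of the supremum over a larger index set. Your write-up is in fact slightly more careful than the paper's, since you explicitly verify both clauses of the definition of an exiting strategy (the ``destination escapes'' clause and the ``no staying move'' clause), which the paper only states informally.
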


\begin{proof}
  \begin{description}
  \item We have $\strategies[\reach]<\exit(\states<\prime>)>(\state)
    \supseteq \strategies[\reach]<\exit(\endComponent)>(\state)$,
    since every strategy leaving $\endComponent$ must also leave
    $\states<\prime>$ ($\states<\prime> \subseteq \endComponent$), but
    a strategy leaving~$\states<\prime>$ might still lead to states,
    which are all in $\endComponent$ and thus, may not leave
    $\endComponent$. With the fact that for sets $A \supseteq B$ holds
    $\sup A \geq \sup B$, the claim follows immediately from the
    definition of $\preop[\reach]$.
  \end{description}
\end{proof}

\begin{lemma}
    \label{lem:exitingStateUpperBound}
    For all $\states<\prime> \subseteq \states$ there is a $\state \in \states<\prime>$ with $\upperBound<\star>(\state) = \preop[\reach]<\exit(
    \states<\prime>)>(\upperBound<\star>)(\state)$.
\end{lemma}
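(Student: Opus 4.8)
First observe that for every $\state\in\states$ the inclusion $\strategies[\reach]<\exit(\states<\prime>)>(\state)\subseteq\strategies[\reach]$ holds by definition, so that $\preop[\reach]<\exit(\states<\prime>)>(\upperBound<\star>)(\state)\le\preop[\reach](\upperBound<\star>)(\state)=\upperBound<\star>(\state)$, the last equality being \Fref{lem:fixpointIsPreop}. Hence the non-trivial half of the statement is to exhibit a single $\state\in\states<\prime>$ for which also $\preop[\reach]<\exit(\states<\prime>)>(\upperBound<\star>)(\state)\ge\upperBound<\star>(\state)$; the two sides then coincide there.

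To find such a state I would take one of maximal value: set $v\coloneqq\max_{\state\in\states<\prime>}\upperBound<\star>(\state)$ and $\anySet\coloneqq\set{\state\in\states<\prime>\mid\upperBound<\star>(\state)=v}$. For $\state\in\anySet$ the one-shot zero-sum matrix game defining $\preop[\reach](\upperBound<\star>)(\state)$ has value $v$ by \Fref{lem:fixpointIsPreop}. Now distinguish two cases. If some $\state\in\anySet$ has an optimal strategy that, after deleting from its support the moves that surely keep all successors inside $\states<\prime>$ and renormalising --- exactly the construction used in Case~B of the proof of \Fref{lem:valBelowBestExit} --- lies in $\strategies[\reach]<\exit(\states<\prime>)>(\state)$, then $\preop[\reach]<\exit(\states<\prime>)>(\upperBound<\star>)(\state)\ge v=\upperBound<\star>(\state)$ and this state witnesses the claim. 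Otherwise, at every $\state\in\anySet$ every optimal strategy $\strategy[\reach]$ admits a $\strategy[\safe]$ with $\destination(\state,\strategy[\reach],\strategy[\safe])\subseteq\states<\prime>$; since the strategy is optimal and all $\upperBound<\star>$-values in $\states<\prime>$ are at most $v$, such a response actually satisfies $\destination(\state,\strategy[\reach],\strategy[\safe])\subseteq\anySet$, so $\anySet$ is invariant under a suitable strategy pair and thus contains an end component $\endComponent<\prime>$ on which $\upperBound<\star>\equiv v$.

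In this remaining case let $\endComponent$ be the maximal end component with $\endComponent<\prime>\subseteq\endComponent$, and use that $\upperBound<\star>$ is a fixpoint of the whole update $F$ of \Fref{alg:boundedValueIteration} and hence unchanged by \procname{DEFLATE} on $\endComponent$. Along the layers $\onestep{D}<0>(\endComponent)\supsetneq\onestep{D}<1>(\endComponent)\supsetneq\cdots$ produced by \procname{DEFLATE} this yields $\upperBound<\star>(\state)\le\bestExit<\upperBound<\star>>(\onestep{D}<i>(\endComponent))$ for $\state\in\onestep{D}<i>(\endComponent)$, and \Fref{lem:fixpointInECIsBestExit} yields $\upperBound<\star>\equiv\bestExit<\upperBound<\star>>(\onestep{D}<i>(\endComponent))$ on $\attractor(\best<\upperBound<\star>>(\onestep{D}<i>(\endComponent)))$. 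Taking $i$ to be the last layer still containing the relevant part of $\states<\prime>$ pins $\bestExit<\upperBound<\star>>(\onestep{D}<i>(\endComponent))=v$ and makes that layer's best-exit attractor meet $\states<\prime>$; \Fref{lem:attractorBestExitInSubset} then supplies a state of $\states<\prime>$ with a strategy leaving the chosen subset and attaining value $v$, while \Fref{lem:exitSubsetYieldsLargerValue} upgrades ``leaves the subset'' to ``leaves $\states<\prime>$'', producing the required state.

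I expect the last step to be the main obstacle: $\states<\prime>$ is completely arbitrary and need not lie inside one maximal end component, whereas \Fref{lem:attractorBestExitInSubset} and \Fref{lem:fixpointInECIsBestExit} only talk about subsets of one, and --- crucially --- a strategy leaving a subset of $\states<\prime>$ need not leave $\states<\prime>$ itself, since shrinking a set only makes leaving easier, so exit strategies cannot be transferred downwards for free. Reconciling this seems to demand an outer induction, most naturally on $|\states\setminus\states<\prime>|$: the base case $\states<\prime>=\states$ is handled by the absorbing state of $\winning$ (value $0$, which matches the empty supremum), and in the inductive step, when the maximal end component $\endComponent$ found inside $\anySet$ is not contained in $\states<\prime>$, one applies the induction hypothesis to the strictly larger set $\states<\prime>\cup\endComponent$; since $\states<\prime>\subseteq\states<\prime>\cup\endComponent$ gives $\strategies[\reach]<\exit(\states<\prime>\cup\endComponent)>(\state)\subseteq\strategies[\reach]<\exit(\states<\prime>)>(\state)$, a witness landing in $\states<\prime>$ transfers back unchanged. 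Ensuring the inductive witness really lands in $\states<\prime>$ rather than in $\endComponent\setminus\states<\prime>$ --- presumably by tracking that $\endComponent<\prime>\subseteq\states<\prime>$ stays inside the best-exit attractor --- is the fiddly point I would expect to cost the most work.
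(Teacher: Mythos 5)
Your opening reduction (the ``$\le$'' direction via \Fref{lem:fixpointIsPreop}), the passage to the set $\anySet$ of states of maximal value $v$ in $\states<\prime>$, and the dichotomy ``either some optimal strategy can be purged of its staying moves and becomes exiting, or $\anySet$ is closed under a strategy pair and contains an end component on which $\upperBound<\star>\equiv v$'' are sound and mirror the style of reasoning the paper uses for \Fref{lem:valBelowBestExit}. The genuine gap is exactly where you suspect it, and your proposed repairs do not close it. First, \Fref{lem:exitSubsetYieldsLargerValue} is invoked in the wrong direction: for $\states<\prime>\subseteq\endComponent$ it gives $\preop[\reach]<\exit(\states<\prime>)>(\valuation)(\state)\ge\preop[\reach]<\exit(\endComponent)>(\valuation)(\state)$, i.e.\ it converts a strategy exiting a \emph{superset} into one exiting the subset; it can never upgrade ``leaves a deflate layer or a subset'' to ``leaves $\states<\prime>$'' when $\states<\prime>$ is not contained in that set --- and that is precisely your situation, since the maximal end component $\endComponent$ around $\endComponent<\prime>$ need not contain $\states<\prime>$ at all. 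Second, the claim that the ``last layer still containing the relevant part of $\states<\prime>$'' satisfies $\bestExit<\upperBound<\star>>(\onestep{D}<i>(\endComponent))=v$ is unsubstantiated. Third, the fallback induction on $|\states\setminus\states<\prime>|$ does not go through as stated: the induction hypothesis applied to $\states<\prime>\cup\endComponent$ only yields a witness \emph{somewhere} in $\states<\prime>\cup\endComponent$, and nothing forces it into $\states<\prime>$ (the natural witness may well be an exit state of $\endComponent$ outside $\states<\prime>$); you acknowledge this but do not resolve it.

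For comparison, the paper avoids any transfer between different exit sets by case-splitting on whether $\states<\prime>$ is itself an end component. If $\states<\prime>$ is \emph{not} an EC, it picks a state $\state\in\states<\prime>$ from which every strategy pair leaves $\states<\prime>$; at such a state $\preop[\reach]<\exit(\states<\prime>)>(\upperBound<\star>)(\state)=\preop[\reach](\upperBound<\star>)(\state)$, and \Fref{lem:fixpointIsPreop} (together with \Fref{lem:fixpointInECIsBestExit} when $\state$ lies in a maximal end component) finishes the argument. If $\states<\prime>$ \emph{is} an EC, it is contained in a single maximal end component $\endComponent$, and the paper takes the \emph{first} deflate layer $\onestep{D}<i>(\endComponent)$ whose best-exit attractor meets $\states<\prime>$; minimality of $i$ gives $\states<\prime>\subseteq\onestep{D}<i>(\endComponent)$, so \Fref{lem:attractorBestExitInSubset} can be applied with $\states<\prime>$ itself as the subset and already returns a strategy in $\strategies[\reach]<\exit(\states<\prime>)>(\state<\prime>)$ attaining $\bestExit<\upperBound<\star>>(\onestep{D}<i>(\endComponent))=\upperBound<\star>(\state<\prime>)$, while \Fref{lem:exitSubsetYieldsLargerValue} is only used in its legitimate direction. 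To salvage your max-value approach you would need an analogue of \Fref{lem:attractorBestExitInSubset} for sets not contained in one maximal end component, or an argument that in your ``otherwise'' branch the witness can be forced into $\states<\prime>$ --- which essentially amounts to re-deriving the paper's case split.
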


\begin{proof}
    By \Fref{lem:fixpointIsPreop} we have $\upperBound<\star>(\state) = \preop[\reach](\upperBound<\star>)(\state) \geq \preop[\reach]<\exit(\states<\prime>)>(\upperBound<\star>)(\state)$. Therefore, it suffices to prove that $\upperBound<\star>(\state) \leq \preop[\reach]<\exit(\states<\prime>)>(\upperBound<\star>)(\state)$. We distinguish the following cases:
    \begin{description}
        \item[First Case:] $\states<\prime>$ is not an end component.
            \begin{description}
              \item Then, for all $\strategy[\reach] \in \strategies[\reach]$ and $\strategy[\safe] \in \strategies[\safe]$, there is $\state \in \states<\prime>$, such that we have \mbox{$\destination(\state, \strategy[\reach], \strategy[\safe]) \not\subseteq \states<\prime>$}. Note we can equivalently say that there exists a state $\state \in \states<\prime>$, such that for all strategies $\strategy[\reach] \in \strategies[\reach]$ and $\strategy[\safe] \in \strategies[\safe]$, we have $\destination(\state, \strategy[\reach], \strategy[\safe]) \subseteq \states<\prime>$. \footnote{Of course, this is not a general logical equivalence, but in this case it is not difficult to see that it holds.}
              \item[Case A:] There exists no (maximal) end component $\endComponent$ with $\state \in \endComponent$.  Then $\upperBound<\star>(\state) = \preop[\reach](\upperBound<\star>)(\state) = \preop[\reach]<\exit( \states<\prime>)>(\upperBound<\star>)$. The first equality holds by the definition of \procname{BVI} and from the fact that $\upperBound<\star>$ is a fixpoint, and the second equality holds because every strategy is exiting.
              \item[Case B:] Assume that $\state \in \endComponent$, where $\endComponent$ is a maximal end component (i.e. there exists a maximal end component, which shares some states with $\states<\prime>$).
                  \begin{description}
                      \item Let $\onestep{D}<i>(\endComponent )\subseteq \endComponent$, s.t. $\state \in \attractor(\best<\upperBound<\star>>(\onestep{D}<i>(\endComponent)))$.
                      \item[Case B1:] If $\state \in \best<\upperBound<\star>>(\onestep{D}<i>(\endComponent))$, then $\valuation \coloneqq \preop[\reach]<\exit(\onestep{D}<i>(\endComponent))>(\upperBound<\star>)(\state) = \bestExit<\upperBound<\star>>(\onestep{D}<i>(\endComponent))$.
                      \item[Case B2:] Otherwise, there is a strategy $\strategy[\reach]$, s.t. for all $\strategy[\safe]$ we have $\destination(\state, \strategy[\reach], \strategy[\safe]) \subseteq \attractor(\best<\upperBound<\star>>(\onestep{D}<i>(\endComponent)))$ by the definition of the attractor.  By \Fref{lem:fixpointInECIsBestExit} we have $\upperBound<\star>(\state<\prime>) = \bestExit<\upperBound<\star>>(\onestep{D}<i>(\endComponent))$ for all $\state<\prime> \in \attractor(\best<\upperBound<\star>>(\onestep{D}<i>(\endComponent)))$.  Hence, $\valuation \coloneqq \preop[\reach:\strategy[\reach]](\upperBound<\star>)(\state) = \bestExit<\upperBound<\star>>(\onestep{D}<i>(\endComponent))$.  
                      \item In both cases, we have \mbox{$\preop[\reach]<\exit( \states<\prime>)>(\upperBound<\star>)(\state) = \preop[\reach](\upperBound<\star>)(\state) \geq \valuation = \bestExit<\upperBound<\star>>(\onestep{D}<i>(\endComponent)) = \upperBound<\star>(\state)$}
                  \end{description}
            \end{description}
        \item[Second Case:] $\states<\prime>$ is an end component. Then, let $\states<\prime> \subseteq \endComponent$, where $\endComponent$ is a maximal end component. Moreover, let $\onestep{D}<i>(\endComponent )\subseteq \endComponent$ be the first subset during \procname{DEFLATE} such that $\state \in \attractor(\best<\upperBound<\star>>(\onestep{D}<i>(\endComponent)))$ holds for some $\state \in \states<\prime>$.  Then, we have that $\states<\prime> \subseteq \onestep{D}<i>(\endComponent)$ and $\upperBound<\star>(\state) = \bestExit<\upperBound<\star>>(\onestep{D}<i>(\endComponent))$ holds by \Fref{lem:fixpointInECIsBestExit}.
            \begin{description}
                \item[Case A:] If $\state \in \best<\upperBound<\star>>(\onestep{D}<i>(\endComponent))$, then $\preop[\reach]<\exit(\onestep{D}<i>(\endComponent))>(\upperBound<\star>)(\state) = \bestExit<\upperBound<\star>>(\onestep{D}<i>(\endComponent))$. 
                \item By \Fref{lem:exitSubsetYieldsLargerValue} we then have $\preop[\reach]<\exit( \states<\prime>)>(\upperBound<\star>)(\state) \geq \upperBound<\star>(\state)$. 
                \item[Case B] Otherwise, by assumption we have $\states<\prime> \cap \attractor(\best<\upperBound<\star>>(\onestep{D}<i>(\endComponent))) \not= \emptyset$. Therefore, \Fref{lem:attractorBestExitInSubset} yields a state $\state<\prime> \in \states<\prime> \cap \attractor(\best<\upperBound<\star>>(\onestep{D}<i>(\endComponent)))$ and a strategy $\strategy[\reach] \in \strategies[\reach]<\exit( \states<\prime>)>(\state<\prime>)$, with $\preop[\reach:\strategy[\reach]](\upperBound<\star>)(\state<\prime>) = \bestExit<\upperBound<\star>>(\onestep{D}<i>(\endComponent))$. We conclude $\preop[\reach]<\exit( \states<\prime>)>(\upperBound<\star>)(\state<\prime>) \geq \preop[\reach:\strategy[\reach]](\upperBound<\star>)(\state<\prime>) = \bestExit<\upperBound<\star>>(\onestep{D}<i>(\endComponent)) = \upperBound<\star>(\state<\prime>)$. The last equality follows from the fact that $\state<\prime> \in \attractor(\best(\onestep{D}<i>(\endComponent)))$ and \Fref{lem:fixpointInECIsBestExit}.
          \end{description}
    \end{description}
\end{proof}

Having proven \Fref{lem:exitingStateUpperBound} it is not difficult to establish the main result of this section:

\begin{theorem}
    \label{theo:upperBoundConvergesToValue}
    For $\state \in \states$ have $\upperBound<\star>(\state) = \val(\Diamond \success) (\state)$.
\end{theorem}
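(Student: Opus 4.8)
\emph{Proof plan.} The statement amounts to the two inequalities $\val(\Diamond \success) \leq \upperBound<\star>$ and $\upperBound<\star> \leq \val(\Diamond \success)$. The first one is essentially already available: on $\success$ and on $\winning$ both functions are constantly $1$, respectively $0$, and are left unchanged by $\preop[\reach]$ and \procname{DEFLATE}, while on $\states \setminus (\success \cup \winning)$ \Fref{lem:upperBoundCorrectness} gives $\upperBound<k>(\state) \geq \val(\Diamond \success)(\state)$ for every $k$; passing to the limit $\upperBound<\star> = \lim_k \upperBound<k>$ (which exists by \Fref{theo:upperBoundConverges}) yields $\upperBound<\star> \geq \val(\Diamond \success)$. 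So all the work lies in proving $\upperBound<\star> \leq \val(\Diamond \success)$, and the plan is to do this by contradiction via the standard ``maximal overshoot'' technique.

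Assume $\upperBound<\star> \neq \val(\Diamond \success)$ and set $d \coloneqq \max_{\state \in \states}\big(\upperBound<\star>(\state) - \val(\Diamond \success)(\state)\big) > 0$ together with $\states<\prime> \coloneqq \{\state \in \states \mid \upperBound<\star>(\state) - \val(\Diamond \success)(\state) = d\}$. Since the two functions agree on $\success \cup \winning$, we have $\states<\prime> \cap (\success \cup \winning) = \emptyset$, and since $\states$ is finite the gap $\upperBound<\star> - \val(\Diamond \success)$ is bounded by $d - \delta$ outside $\states<\prime>$ for some fixed $\delta > 0$. Apply \Fref{lem:exitingStateUpperBound} to $\states<\prime>$: it gives a state $\state \in \states<\prime>$ with $\upperBound<\star>(\state) = \preop[\reach]<\exit(\states<\prime>)>(\upperBound<\star>)(\state) = \val(\Diamond \success)(\state) + d$. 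The next step is to extract from (the proof of) this lemma a \emph{single, attained} exiting strategy $\strategy[\reach] \in \strategies[\reach]<\exit(\states<\prime>)>(\state)$ witnessing this value, i.e. with $\inf_{\strategy[\safe] \in \strategies[\safe]} \preop[\strategy[\reach],\strategy[\safe]](\upperBound<\star>)(\state) = \upperBound<\star>(\state)$.

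Granting such a $\strategy[\reach]$, the quantitative core is that it escapes $\states<\prime>$ in one step with probability at least some $q > 0$ \emph{uniformly in $\strategy[\safe]$}: against every pure $\move[\safe] \in \moveAssignment[\safe](\state)$ some move in $\support(\strategy[\reach](\state))$ sends positive probability outside $\states<\prime>$, there are only finitely many pure $\safe$-moves, and the one-step escape probability is linear in $\strategy[\safe]$, hence minimised at a pure move. Using that $\preop[\strategy[\reach],\strategy[\safe]]$ is linear in its valuation argument, that at least a $q$-fraction of the one-step mass lands where the gap is $\leq d-\delta$ and the rest where the gap is $\leq d$, the computation would read
\[
\preop[\strategy[\reach],\strategy[\safe]](\upperBound<\star>)(\state) \;\le\; \preop[\strategy[\reach],\strategy[\safe]](\val(\Diamond \success))(\state) + (1-q)\,d + q\,(d-\delta) \;=\; \preop[\strategy[\reach],\strategy[\safe]](\val(\Diamond \success))(\state) + d - q\delta
\]
for every $\strategy[\safe]$. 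Taking $\inf_{\strategy[\safe]}$ on both sides (the term $d-q\delta$ is a constant) and using that $\val(\Diamond \success)$ is a fixpoint of $\preop[\reach]$, so $\inf_{\strategy[\safe]}\preop[\strategy[\reach],\strategy[\safe]](\val(\Diamond \success))(\state) \leq \preop[\reach](\val(\Diamond \success))(\state) = \val(\Diamond \success)(\state)$, I would obtain $\upperBound<\star>(\state) \leq \val(\Diamond \success)(\state) + d - q\delta < \val(\Diamond \success)(\state) + d = \upperBound<\star>(\state)$, a contradiction; hence $d=0$ and $\upperBound<\star> = \val(\Diamond \success)$.

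The delicate point — and the reason \Fref{lem:exitingStateUpperBound} is phrased via a case split rather than via $\preop[\reach]<\exit(\states<\prime>)>$ abstractly — is exactly the attainment of $\strategy[\reach]$ used above. As noted in the introduction, the supremum in $\preop[\reach]<\exit(\states<\prime>)>$ may be a limit of exiting strategies whose escape probability tends to $0$, and then no uniform $q>0$ exists. So I would not argue with $\preop[\reach]<\exit(\states<\prime>)>$ in the abstract but carry over the concrete witnesses produced in the proof of \Fref{lem:exitingStateUpperBound}: when $\states<\prime>$ is not an end component one gets a state all of whose strategy pairs already leave $\states<\prime>$ (so the bound above holds for \emph{all} $(\strategy[\reach],\strategy[\safe])$, and one may even take the outer $\sup$ over all of $\strategies[\reach]$), and when $\states<\prime>$ sits inside a maximal end component processed by \procname{DEFLATE} one gets, via \Fref{lem:attractorBestExitInSubset} and \Fref{lem:fixpointInECIsBestExit}, a fixed exiting strategy realising the relevant best exit, for which the finite-move argument again yields a uniform $q>0$. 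Checking that these witnesses plug into the inequality chain is then routine, so I expect this attainment/case-analysis to be the only real obstacle.
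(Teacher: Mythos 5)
Your proposal takes essentially the same route as the paper's own proof: it first gets $\upperBound<\star> \geq \val(\Diamond \success)$ from \Fref{lem:upperBoundCorrectness} in the limit, and then derives a contradiction from the maximal-gap set $\states<\prime>$ by invoking \Fref{lem:exitingStateUpperBound} to obtain a state with an exiting strategy attaining $\upperBound<\star>(\state)$ and showing the one-step gap must strictly decrease. Your extra care about extracting an attained exiting witness and a uniform escape probability $q>0$ only makes explicit what the paper leaves implicit when it pushes the strict inequality through the infimum over $\strategy[\safe]$, so the argument is correct and, if anything, slightly more rigorous at that step.
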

\begin{proof}
  Assume there is a state $\state \in \states$, s.t. $\differ(\state) \coloneqq \upperBound<\star>(\state) - \val(\Diamond \success) (\state) > 0$. Let $\differ[max] \coloneqq \max_{\state \in \states} \differ(\state)$ and $\states<\prime> \coloneqq \set{\state \in \states \mid \differ(\state) = \differ[max]}$. We have $\success \cap \states<\prime> = \emptyset$ and $\winning \cap \states<\prime> = \emptyset$ since the estimates of the reachability probabilities for $\success$ and $\winning$ are correct throughout all iterations (and thus, also in the limit).

    From \Fref{lem:exitingStateUpperBound} we obtain a $\state \in \states<\prime>$ and $\strategy[\reach] \in \strategies[\reach]<\exit(\states<\prime>)>(\state)$ with $\upperBound<\star>(\state) = \preop[\reach:\strategy[\reach]](\upperBound<\star>)(\state)$. Note that for all $\strategy[\safe] \in \strategies[\safe]$ we have $\preop[\strategy[\reach], \strategy[\safe]](\differ)(\state) < \differ(\state)$ because for all states $\state<\prime> \in \destination(\state, \strategy[\reach], \strategy[\safe])$ we have $\differ(\state<\prime>) \leq \differ(\state)$ and for at least one $\state<\prime> \in \destination(\state, \strategy[\reach], \strategy[\safe])$ we have $\differ(\state<\prime>) < \differ(\state)$. The former follows from the fact that $\differ(\state) = \differ[max]$, and the latter follows from the fact that $\strategy[\reach]$ is an exiting strategy. Hence we have

    \begin{equation*}
        \begin{split}
            \upperBound<\star>(\state) & = \preop[\reach:\strategy[\reach]](\upperBound<\star>)(\state) \\
            & = \inf_{\strategy[\safe] \in \strategies[\safe]} \sum_{\move[\reach], \move[\safe] \in \moves} \sum_{\state<\prime> \in \states} \upperBound<\star>(\state<\prime>) \cdot \strategy[\reach](\move[\reach]) \cdot \strategy[\safe](\move[\safe]) \cdot \transitions(\state, \move[\reach], \move[\safe])(\state<\prime>) \\
            & = \inf_{\strategy[\safe] \in \strategies[\safe]} \sum_{\move[\reach], \move[\safe] \in \moves} \sum_{\state<\prime> \in \states} (\val(\Diamond \success) (\state<\prime>) + \differ(\state<\prime>)) \cdot \strategy[\reach](\move[\reach]) \cdot \strategy[\safe](\move[\safe]) \cdot \transitions(\state, \move[\reach], \move[\safe])(\state<\prime>) \\
            & = \inf_{\strategy[\safe] \in \strategies[\safe]} \preop[\strategy[\reach], \strategy[\safe]](\val(\Diamond \success))(\state) + \preop[\strategy[\reach], \strategy[\safe]](\differ)(\state) \\
            & < \inf_{\strategy[\safe] \in \strategies[\safe]} \preop[\strategy[\reach], \strategy[\safe]](\val(\Diamond \success))(\state) + \differ(\state) \\
            & \leq \sup_{\strategy[\reach] \in \strategies[\reach]} \inf_{\strategy[\safe] \in \strategies[\safe]} \preop[\strategy[\reach], \strategy[\safe]](\val(\Diamond \success))(\state) + \differ(\state) \\
            & = \val(\Diamond \success) (\state) + \differ(\state) \\
            & = \upperBound<\star>(\state)
        \end{split}
    \end{equation*}

    We obtain the inequality $\upperBound<\star>(\state) < \upperBound<\star>(\state)$ which is a contradiction.
\end{proof}

\section{Correctness Proof for Strategy Iteration}
Let $\exampleGame$ be a concurrent game with reachability
objective~$\success$ and safety objective~$\fail$ s.t. $\fail
\disjointUnion \success = \states$. W.l.o.g., we assume that
$\winning$, i.e. the winning region of player~$\safe$, and $\success$
are absorbing. 

\begin{lemma}
  \label{lem:indedual}
  Let $\valuation$ be a valuation on
  $\states$. $\procname{DEFLATE}(1-\valuation,\endComponent) =
  1-\procname{INFLATE}(\valuation,\endComponent)$\footnote{In the following, we
    assume that \procname{DEFLATE} and \procname{INFLATE} return the
    updated valuation ($\upperBound<k+1>$ in the case of
    \procname{DEFLATE} and $\valuation[i]$ in the case of
    \procname{INFLATE}). We also ignore the updates to
    $\strategy[\safe]<\ast>$ in \procname{INFLATE} since they do not
    change the valuation.}.
\end{lemma}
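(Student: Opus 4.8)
The plan is to unfold both \procname{DEFLATE} and \procname{INFLATE} and to prove, by induction on the number of completed iterations of their identically structured \highlight{repeat}-loops, that when \procname{DEFLATE} is run on $1-\valuation$ and \procname{INFLATE} on $\valuation$, they maintain complementary valuations and process exactly the same subsets of $\endComponent$.

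First I would record the one elementary identity on which everything rests: for any fixed strategies $\strategy[\reach],\strategy[\safe]$, any state $\state$ and any valuation $\valuation$,
\[
\preop[\strategy[\reach],\strategy[\safe]](1-\valuation)(\state)=1-\preop[\strategy[\reach],\strategy[\safe]](\valuation)(\state),
\]
because $\sum_{\move[\reach],\move[\safe]\in\moves}\sum_{\state<\prime>\in\states}\transitions(\state,\move[\reach],\move[\safe])(\state<\prime>)\cdot\strategy[\reach](\state)(\move[\reach])\cdot\strategy[\safe](\state)(\move[\safe])=1$ (a product of three distributions, summed out). Pushing $1-(\cdot)$ through infima and suprema via $\inf(1-X)=1-\sup X$ and $\sup(1-Y)=1-\inf Y$, this lifts to the operators actually occurring inside the two procedures: the exit-restricted pre-operator used by \procname{DEFLATE} at $1-\valuation$ equals $1$ minus the corresponding exit-restricted operator used by \procname{INFLATE} at $\valuation$, and consequently
\[
\bestExit[\reach]<1-\valuation>(\anySet)=1-\bigl(\text{the dual (worst) exit value used by \procname{INFLATE} at }\valuation\text{ over }\anySet\bigr)
\]
for every $\anySet\subseteq\endComponent$. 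Since $a=b\iff 1-a=1-b$, the set $\set{\state\in\anySet\mid\preop<\exit(\anySet)>(1-\valuation)(\state)=\bestExit[\reach]<1-\valuation>(\anySet)}$ is \emph{literally} the set of states of $\anySet$ attaining the dual extremal value for $\valuation$; and because both procedures compute $B$ by the same attractor construction applied to that set, the set $B$ computed in a given round coincides for \procname{DEFLATE}$(1-\valuation,\cdot)$ and \procname{INFLATE}$(\valuation,\cdot)$.

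Then I would run the induction. The invariant after $i$ completed iterations is: (i) the current valuation of \procname{DEFLATE} equals $1$ minus the current valuation of \procname{INFLATE} as functions on all of $\states$, and (ii) the remaining subset of $\endComponent$ is the same in both runs. The base case is immediate, since both runs start from $\endComponent$ and $1-(1-\valuation)=\valuation$. For the step, the previous paragraph gives that the sets $B$ coincide, so in both runs exactly the states of the current subset $\anySet$ are modified and everything else is untouched, which preserves (i) off $\anySet$ for free; on $\anySet$, \procname{DEFLATE} replaces the current value $c(\state)$ by $\min\bigl(c(\state),\bestExit[\reach]<c>(\anySet)\bigr)$, whose complement is, by $1-\min(a,b)=\max(1-a,1-b)$ together with the displayed identity, exactly the $\max$ of $1-c(\state)$ with the worst-exit value used by \procname{INFLATE} — which is precisely the update \procname{INFLATE} performs; hence (i) is restored, and (ii) is restored because both remove the same $B$. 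Both loops stop exactly when the remaining subset becomes empty, which by (ii) occurs in the same round, so taking the final valuations yields $\procname{DEFLATE}(1-\valuation,\endComponent)=1-\procname{INFLATE}(\valuation,\endComponent)$. The additional bookkeeping of the player-$\safe$ strategy inside \procname{INFLATE} plays no role here, since (as noted in the footnote) it does not affect the returned valuation.

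The only step that needs genuine care — and where I expect to spend the bulk of the argument — is the lifting of the complementation identity to the \emph{exit-restricted} operators. Pushing $1-(\cdot)$ through $\sup_{\strategy[\reach]\in\strategies[\reach]<\exit(\anySet)>(\state)}\inf_{\strategy[\safe]\in\strategies[\safe]}$ yields $1-\inf_{\strategy[\reach]\in\strategies[\reach]<\exit(\anySet)>(\state)}\sup_{\strategy[\safe]\in\strategies[\safe]}$, so one needs \procname{INFLATE} to be set up with the restricted (player-$\reach$) minimisation taken on the inside; if it is instead written with the quantifiers in the opposite order, one additionally invokes the minimax theorem for the restricted one-shot matrix game, whose value exists — exactly the fact already used when $\best<\exit[\reach]<\endComponent>>$ was introduced. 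Beyond this point, the whole proof is routine bookkeeping of the two \highlight{repeat}-loops.
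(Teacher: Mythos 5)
Your proposal is correct and follows essentially the same route as the paper's proof: both unfold the two \highlight{repeat}-loops, observe that the attractor sets $B$ computed in each round coincide, and reduce the per-state update to the complementation identity $1-\min(a,b)=\max(1-a,1-b)$ together with the duality of $\preop$ and $\bestExit$ under $\valuation\mapsto 1-\valuation$. The paper is merely terser: it takes the coincidence of the $B$-computations as immediate (in effect defining \procname{INFLATE}'s update directly through $1-\bestExit[\reach]<1-\valuation>$), so the quantifier-order/minimax point you carefully flag for the exit-restricted operator never arises there, while your more explicit loop-invariant induction fills in the same bookkeeping the paper leaves implicit.
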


\begin{proof}
  For all states $\state \not\in\endComponent$ the claim follows since
  $\valuation(\state)$ remains unchanged (in both algorithms).

  In the following, we only consider $\state \in \endComponent$. For
  $\procname{DEFLATE}(1-\valuation,\endComponent)$ and
  $\procname{INFLATE}(\valuation,\endComponent)$, the computation of
  $B$ in every iteration conincides since the computation of the
  set~$B$ is identical. Hence, the claim is proven by
  \[\begin{array}{rl} 

   1-\procname{INFLATE}(\valuation<i>,\endComponent)(\state) 
   & = 1-\max(\valuation<i>(\state),1-\bestExit[\reach]<1-\valuation<i>>(\endComponent<i>))\\
   & = \min(1-\valuation<i>(\state)),1-(1-\bestExit[\reach]<1-\valuation<i>>(\endComponent<i>))) \\
   & = \min(1-\valuation<i>(\state),\bestExit[\reach]<1-\valuation<i>>(\endComponent<i>)) \\
   & = \procname{DEFLATE}(1-\valuation<i>,\endComponent)
 \end{array}\] for all states
  $\state \in \endComponent$, where $i$ denotes the iteration, in
  which $\state$ gets updated (both in \procname{DEFLATE} and
  \procname{INFLATE} since the computation of $B$ always coincides).
\end{proof}

\begin{corollary}
  Let $\valuation \leq \valuation<\prime>$. Then,
  $\procname{INFLATE}(\valuation,\endComponent) \leq
  \procname{INFLATE}(\valuation<\prime>,\endComponent)$.
\end{corollary}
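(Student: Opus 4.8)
The plan is to reduce the statement to the monotonicity of \procname{DEFLATE} by means of the duality established in \Fref{lem:indedual}. By that lemma, $\procname{INFLATE}(\valuation,\endComponent) = 1 - \procname{DEFLATE}(1-\valuation,\endComponent)$ and, in the same way, $\procname{INFLATE}(\valuation<\prime>,\endComponent) = 1 - \procname{DEFLATE}(1-\valuation<\prime>,\endComponent)$. Since $\valuation \leq \valuation<\prime>$ is equivalent to $1-\valuation<\prime> \leq 1-\valuation$, the corollary follows as soon as we show that \procname{DEFLATE} is order-preserving: if $u' \leq u$ are two valuations on $\states$, then $\procname{DEFLATE}(u',\endComponent) \leq \procname{DEFLATE}(u,\endComponent)$. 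Indeed, applying this with $u \coloneqq 1-\valuation$ and $u' \coloneqq 1-\valuation<\prime>$ and negating both sides yields the claim. (This is the plain-monotonicity companion of \Fref{lem:deflateMonotone}, which only gives the statement under an extra Bellman-subinvariance hypothesis that we cannot assume here.)

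To establish monotonicity of \procname{DEFLATE}, I would first isolate the easy ingredients. For a \emph{fixed} subset $\anySet$ the map $u \mapsto \bestExit[\reach]<u>(\anySet)$ is monotone, because it is a maximum over states of $\preop[\reach]<\exit(\anySet)>(u)$, which in turn is a supremum over an infimum over the strategy set $\strategies[\reach]<\exit(\anySet)>$ --- a set that does not depend on $u$ --- of the operators $\preop[\strategy[\reach],\strategy[\safe]]$, each of which is monotone in $u$. Hence the layerwise update $u(\state)\mapsto \min\big(u(\state),\bestExit[\reach]<u>(\anySet)\big)$ carried out inside \procname{DEFLATE} preserves the pointwise order on the current subset. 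The real obstacle is that \procname{DEFLATE} additionally removes, in each round, the attractor $B$ of the set of states currently attaining the best exit, and this set of maximisers depends on the valuation in a way that is \emph{not} monotone; consequently the sequences of sub-end-components processed when deflating $u$ and when deflating $u'$ may differ, so a plain ``round $i$ against round $i$'' induction does not go through.

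The remedy is to induct over the rounds of the run on $u'$ with a strengthened invariant of the form: after $i$ rounds, the current valuation of the $u'$-run is pointwise below the \emph{final} output $\procname{DEFLATE}(u,\endComponent)$ (not merely below the $i$-th intermediate valuation of the $u$-run). This is robust against the layer mismatch because (a) \procname{DEFLATE} is non-increasing, so every intermediate valuation of the $u$-run dominates its final output, and (b) whenever a state $\state$ is deflated in the $u'$-run to some value $v$ --- the best exit of the subset then being processed --- one can, using \Fref{lem:exitSubsetYieldsLargerValue} (an exiting strategy for a subset is still exiting for a superset) together with the monotonicity of $\bestExit$ in the valuation, locate a round of the $u$-run whose best exit is at least $v$ and whose attractor contains $\state$, so that $\procname{DEFLATE}(u,\endComponent)(\state)\leq v$ as well; states $\state$ with $u(\state) \leq v$ already from the start are dispatched directly via $u' \leq u$. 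I expect part (b) --- pairing each deflation event of the smaller run with a dominating deflation event of the larger run despite the mismatched attractor decompositions --- to be the crux of the argument; the rest is bookkeeping combining the two monotonicity facts above with \Fref{lem:exitSubsetYieldsLargerValue}.
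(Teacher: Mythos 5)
Your opening move is exactly the paper's: dualize via \Fref{lem:indedual} (with $u\coloneqq 1-\valuation$, $u'\coloneqq 1-\valuation<\prime>$) and reduce the corollary to monotonicity of \procname{DEFLATE}. The paper disposes of the remaining step by appealing, in a footnote, to the \procname{DEFLATE}-monotonicity it used for the BVI analysis (\Fref{lem:upperBoundMonotone}, whose engine is \Fref{lem:deflateMonotone}); you instead decline to use that lemma because of its side conditions and undertake to prove an \emph{unconditional} monotonicity of \procname{DEFLATE}. You correctly identify the real difficulty --- the attractor decomposition into layers depends on the valuation, so the two runs process different sub-end-components --- but the argument you sketch to overcome it does not go through, and the crux is explicitly left open, so the statement you actually need is not established.

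Concretely, two things break. First, the strengthened invariant ``after $i$ rounds the current valuation of the $u'$-run is pointwise below the final output $\procname{DEFLATE}(u,\endComponent)$'' is false as stated: \procname{DEFLATE} only ever decreases values, so intermediate valuations dominate the final output; already for $u'=u$ and $i=0$ the invariant reads $u\leq\procname{DEFLATE}(u,\endComponent)$, which fails whenever deflation is strict. (A repairable version would restrict the invariant to states already removed from the $u'$-run's subset, whose values are settled.) Second, the pairing step (b) has its inequality reversed: to conclude $\procname{DEFLATE}(u',\endComponent)(\state)\leq\procname{DEFLATE}(u,\endComponent)(\state)$ you must bound the $u$-run's final value at $\state$ from \emph{below} by the value $v$ at which the $u'$-run removes $\state$, i.e.\ show $\procname{DEFLATE}(u,\endComponent)(\state)\geq v$, not $\leq v$; and even with the sign corrected, exhibiting \emph{one} $u$-run round whose best exit is at least $v$ and whose attractor contains $\state$ is insufficient, because $\state$ is capped in every round in which it is still present, so you would need \emph{all} best exits of the $u$-run up to the removal of $\state$ to be at least $v$. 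Since you yourself flag this pairing argument as the crux and do not carry it out, the unconditional monotonicity of \procname{DEFLATE} --- which is the entire content of the corollary beyond the duality step --- remains unproven in your proposal. If you do not want to lean on the paper's footnote, the more economical repair is to prove the monotonicity in the conditional form of \Fref{lem:deflateMonotone} and verify that its hypotheses are available where the corollary is invoked, rather than to aim for the unconditional statement.
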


\begin{proof}
  For $\valuation \leq \valuation<\prime>$ implies $1-\valuation \geq
  1-\valuation<\prime>$. Hence, it holds\footnote{We
    have proven that \procname{DEFLATE} is monotone in
    \Fref{lem:upperBoundMonotone} since we did not use anything except
    $\upperBound<k> \geq \upperBound<k+1>$ within the proof.}
  \[\procname{DEFLATE}(1-\valuation,\endComponent) \geq
  \procname{DEFLATE}(1-\valuation<\prime>,\endComponent).\] By
  \Fref{lem:indedual}, it holds
  $1-\procname{INFLATE}(\valuation,\endComponent) \geq
  1-\procname{INFLATE}(\valuation<\prime>,\endComponent)$ and thus,
  $\procname{INFLATE}(\valuation,\endComponent) \leq
  \procname{INFLATE}(\valuation<\prime>,\endComponent)$.
\end{proof}

\begin{lemma}
  \label{lem:preopdual}
  Let $\valuation$ be a valuation. Then,
  $1-\preop[\maximize,\reach](\valuation) = \preop[\maximize,\safe](1-\valuation)$. 
\end{lemma}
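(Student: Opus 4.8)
The plan is to establish the identity pointwise: I would fix a state $\state$ and prove the scalar equality $1-\preop[\maximize,\reach](\valuation)(\state)=\preop[\maximize,\safe](1-\valuation)(\state)$, and then let $\state$ range over $\states$. The one substantive algebraic fact is that the \emph{strategy-fixed} pre-operator sends complementation of the valuation to complementation of its value: for all $\strategy[\reach]\in\strategies[\reach]$ and $\strategy[\safe]\in\strategies[\safe]$,
\[
  \preop[\strategy[\reach],\strategy[\safe]](1-\valuation)(\state)\;=\;1-\preop[\strategy[\reach],\strategy[\safe]](\valuation)(\state).
\]
I would read this straight off the definition of $\preop[\strategy[\reach],\strategy[\safe]]$: the weights $\transitions(\state,\move[\reach],\move[\safe])(\state<\prime>)\cdot\strategy[\reach](\state)(\move[\reach])\cdot\strategy[\safe](\state)(\move[\safe])$ attached to the values $\valuation(\state<\prime>)$ are nonnegative and sum to $1$ over all $\move[\reach],\move[\safe]\in\moves$ and $\state<\prime>\in\states$, since $\transitions(\state,\move[\reach],\move[\safe])$, $\strategy[\reach](\state)$ and $\strategy[\safe](\state)$ are probability distributions. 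I would also record the trivial remark that the bilinear form $\preop[\strategy[\reach],\strategy[\safe]]$ is symmetric in its two strategy arguments, so that the pre-operator underlying $\preop[\maximize,\safe]$ (where player $\safe$ is the maximiser) is literally the same function, differing from the one behind $\preop[\maximize,\reach]$ only in which strategy is quantified by the outer $\sup$ and which by the inner $\inf$.

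The rest will be a short telescoping that pushes the complement inward using $1-\sup_x f(x)=\inf_x\bigl(1-f(x)\bigr)$ and $1-\inf_x f(x)=\sup_x\bigl(1-f(x)\bigr)$:
\begin{align*}
  1-\preop[\maximize,\reach](\valuation)(\state)
   &= 1-\sup_{\strategy[\reach]\in\strategies[\reach]}\ \inf_{\strategy[\safe]\in\strategies[\safe]}\ \preop[\strategy[\reach],\strategy[\safe]](\valuation)(\state)\\
   &= \inf_{\strategy[\reach]\in\strategies[\reach]}\ \sup_{\strategy[\safe]\in\strategies[\safe]}\ \bigl(1-\preop[\strategy[\reach],\strategy[\safe]](\valuation)(\state)\bigr)\\
   &= \inf_{\strategy[\reach]\in\strategies[\reach]}\ \sup_{\strategy[\safe]\in\strategies[\safe]}\ \preop[\strategy[\reach],\strategy[\safe]](1-\valuation)(\state)\\
   &= \sup_{\strategy[\safe]\in\strategies[\safe]}\ \inf_{\strategy[\reach]\in\strategies[\reach]}\ \preop[\strategy[\reach],\strategy[\safe]](1-\valuation)(\state)
     \;=\;\preop[\maximize,\safe](1-\valuation)(\state).
\end{align*}
The first three equalities are the observations of the previous paragraph; the last equality is the definition of $\preop[\maximize,\safe]$ combined with the symmetry remark.

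The only step that is not pure bookkeeping is the fourth equality, where the outer $\sup_{\strategy[\safe]}$ and the inner $\inf_{\strategy[\reach]}$ are interchanged. This is exactly von Neumann's minimax theorem for the one-shot zero-sum matrix game at $\state$ whose $(\move[\reach],\move[\safe])$-entry is $\sum_{\state<\prime>\in\states}\bigl(1-\valuation(\state<\prime>)\bigr)\cdot\transitions(\state,\move[\reach],\move[\safe])(\state<\prime>)$: the payoff is bilinear over the product of two finite mixed-strategy simplices, hence over a compact set, so the game has a value and the two extrema commute (this simultaneously ensures that every $\sup$/$\inf$ above is attained, so there are no well-definedness concerns). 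Since the paper already leans on this fact --- for instance when it speaks of the value of $\preop[\maximize](\valuation)$ and the optimal randomised strategy $\best[\maximize](\valuation)$ realising it --- I would just cite it there. So the ``hard'' part is hard only in name; I do not anticipate a genuine obstacle.
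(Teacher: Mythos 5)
Your proposal is correct and follows essentially the same route as the paper's proof: push the complement through the strategy-fixed expectation (using that the transition and strategy weights sum to one), which turns $1-\sup\inf$ into $\inf\sup$ of the complemented payoff, and then swap the remaining $\inf$/$\sup$ by the minimax property of the one-shot zero-sum matrix game at each state. The only difference is presentational: you isolate the complementation identity as a separate claim and cite von Neumann explicitly, whereas the paper folds both into a single chain of displayed equalities.
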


\begin{proof}
  Let $\state \in \states$. In the following, we use that for the
  one-shot matrix game $\preop[\maximize](\valuation)(\state)$, we can
  swap $\sup$ and $\inf$.

  \begin{align*} 
    & 1-\preop[\maximize,\reach](\valuation)(\state) \\ 
    = & 1-\big(\adjustlimits\sup_{\strategy[\reach]}\inf_{\strategy[\safe]}
    \sum_{\move[\reach],\move[\safe] \in \moves}
    \sum_{\state<\prime> \in \states} \valuation(\state<\prime>) \cdot
    \transitions(\state,\move[\reach],\move[\safe])(\state<\prime>) \cdot
    \strategy[\reach](\state)(\move[\reach]) \cdot
    \strategy[\safe](\state)(\move[\safe] )\big) \\
      \overset{\ast}{=} & \big(\adjustlimits\inf_{\strategy[\reach]}\sup_{\strategy[\safe]}
    \sum_{\move[\reach],\move[\safe] \in \moves}
    \sum_{\state<\prime> \in \states} (1 - \valuation(\state<\prime>)) \cdot
    \transitions(\state,\move[\reach],\move[\safe])(\state<\prime>) \cdot
    \strategy[\reach](\state)(\move[\reach]) \cdot
    \strategy[\safe](\state)(\move[\safe] )\big) \\
    = &
    \big(\adjustlimits\sup_{\strategy[\safe]}\inf_{\strategy[\reach]}
    \sum_{\move[\reach],\move[\safe] \in \moves}
    \sum_{\state<\prime> \in \states} (1 - \valuation(\state<\prime>)) \cdot
    \transitions(\state,\move[\reach],\move[\safe])(\state<\prime>) \cdot
    \strategy[\reach](\state)(\move[\reach]) \cdot
    \strategy[\safe](\state)(\move[\safe] )\big) \\
    = & \preop[\maximize,\safe](1-\valuation)(\state)\\
  \end{align*}

    For $\ast$ note that

    \begin{multline*}
        1 - \sum_{\move[\reach],\move[\safe] \in \moves}
        \sum_{\state<\prime> \in \states} \valuation(\state<\prime>) \cdot
        \transitions(\state,\move[\reach],\move[\safe])(\state<\prime>) \cdot
        \strategy[\reach](\state)(\move[\reach]) \cdot
        \strategy[\safe](\state)(\move[\safe]) = \\
        \sum_{\move[\reach],\move[\safe] \in \moves}
        \sum_{\state<\prime> \in \states} (1 - \valuation(\state<\prime>)) \cdot
        \transitions(\state,\move[\reach],\move[\safe])(\state<\prime>) \cdot
        \strategy[\reach](\state)(\move[\reach]) \cdot
        \strategy[\safe](\state)(\move[\safe]) 
    \end{multline*}

    because

    \[
        \sum_{\move[\reach],\move[\safe] \in \moves}
        \sum_{\state<\prime> \in \states}
        \transitions(\state,\move[\reach],\move[\safe])(\state<\prime>) \cdot
        \strategy[\reach](\state)(\move[\reach]) \cdot
        \strategy[\safe](\state)(\move[\safe]).
    \]

\end{proof}

\begin{proof}[Proof of \Fref{theo:siOptimal}]
  We define $\valuation*<k>(\state) = 1-\upperBound<k>(\state)$. By
  \Fref{lem:preopdual}, we know that $\valuation*<k+1>(\state)
  \coloneqq \preop[\maximize,\safe](\valuation*<k>)(\state) =
  1-\preop[\maximize,\reach](\upperBound<k>)(\state)$. In addition, we
  know that $\procname{INFLATE}(\valuation*<k>,\endComponent) =
  1-\procname{DEFLATE}(\upperBound<k>,\endComponent)$ for any maximal
  end component~$\endComponent \in \mathcal{M}$ by
  \Fref{lem:indedual}. Thus, $\displaystyle{\lim_{k \to \infty}} \valuation*<k> =
  \val(\Box \fail)$ by determination of concurrent reachability and
  safety games.

  In \procname{INFLATE}, we modify $\strategy[\safe]<k>$ to be the
  best response to player~$\reach$ enforcing to leave the end
  component, i.e. $\strategy[\safe]<k>$ enforces $\valuation<k>$ for
  player~$\safe$.

  Since $\valuation*<0> \leq \valuation<0>$ for any
  $\strategy[\safe]<0>$ and both $\preop$ and $\procname{INFLATE}$ are
  monotone, we can inductively prove that $\valuation*<k> \leq
  \valuation<k>$. By defintion $\valuation<k> \leq \val(\Box
  \fail)$. Hence, we have $\valuation*<k> \leq \valuation<k> \leq
  \val(\Box \fail)$ and $\val(\Box \fail) \leq \valuation<\infty> \leq \val(\Box
  \fail)$ in the limit. Hence, $\lim_{k\to\infty} \valuation<k> =
  \val(\Box \fail)$ and $\valuation<k>$ approximates the safety value
  monotonically from below.
\end{proof}

\end{document}